\documentclass[twocolumn]{article}
\usepackage{physics}
\usepackage{amsfonts}
\usepackage{amsthm}
\usepackage{geometry}
\usepackage{tikz}
\usepackage{subcaption}
\usepackage{blkarray}

\newtheorem{theorem}{Theorem}

\newtheorem{corollary}[theorem]{Corollary}
\newtheorem{definition}[theorem]{Definition}
\newtheorem{proposition}[theorem]{Proposition}

\usepackage{authblk}

\usepackage{algorithm}
\usepackage[noend]{algpseudocode}

\usepackage[pdftitle=
{Optimizing Fermionic Encodings for both Hamiltonian and Hardware}]{hyperref}
\hypersetup{
    colorlinks=true
}

\usepackage{xcolor}

\newcommand\todoalt[1]{\textcolor{blue}{#1}}

\title{Optimizing Fermionic Encodings for both Hamiltonian and Hardware}
\author[1,2]{Riley W. Chien}

\author[1]{Joel Klassen}
\affil[1]{Phasecraft Ltd.}
\affil[2]{Dartmouth College}
\begin{document}

\twocolumn[
\begin{@twocolumnfalse}

\maketitle

\begin{abstract}

In this work we present a method for generating a fermionic encoding tailored to a set of target fermionic operators and to a target hardware connectivity. Our method uses brute force search, over the space of all encodings which map from Majorana monomials to Pauli operators, to find an encoding which optimizes a target cost function. In contrast to earlier works in this direction, our method searches over an extremely broad class of encodings which subsumes all known second quantized encodings that constitute algebra homomorphisms. In order to search over this class, we give a clear mathematical explanation of how precisely it is characterized, and how to translate this characterization into constructive search criteria. A benefit of searching over this class is that our method is able to supply fairly general optimality guarantees on solutions. A second benefit is that our method is, in principal, capable of finding more efficient representations of fermionic systems when the set of fermionic operators under consideration are faithfully represented by a smaller quotient algebra. Given the high algorithmic cost of performing the search, we adapt our method to handle translationally invariant systems that can be described by a small unit cell that is less costly. We demonstrate our method on various pairings of target fermionic operators and hardware connectivities. We additionally show how our method can be extended to find error detecting fermionic encodings in this class.

\end{abstract}

  \end{@twocolumnfalse}
  ]
\vspace{2cm}
\section{Introduction}
An important application of quantum computing is the modelling of systems where quantum physics dominates. Many such systems consist of fermions -- which are one of the two fundamental types of particle. Any simulation of fermions on a quantum computer requires a specification of how the fermions are represented on the memory of the device -- an encoding of fermions into qubits. These encodings must replicate the non-local anti-commutation relations of the fermionic operator algebra on the qubit system. This is a non-trivial task, with many possible solutions of which there is no a priori correct choice.

Many encodings have been invented (\cite{jordanwigner,bravyi2002fermionic,ball2005fermions,verstraete2005mapping,steudtner2018fermion,chen2018exact,setia2019superfast,jiang2019majorana,derby2021compact,derby2021compactalt,chen2022equivalence}), each with their own upsides and downsides which depend on the details of the model and of the hardware being used. Historically, encodings have not been designed for a particular model or piece of computing hardware. Instead they are often designed for a generic advantageous trait, such as low operator weight representations of certain interactions\footnote{Here by operator weight we mean the number of qubits on which operator has support}, efficient use of qubits, ease of state preparation, the ability to detect or correct errors, or merely simplicity. When one wishes to implement these encodings on specific hardware, one generally needs to think carefully about how the encoding can be made to ``fit'' into the device, given the particular algorithm -- generally with the aim of minimizing circuit depth and quantum memory use. This can be fairly non-trivial and it suggests that instead of using an out of the box encoding, it may be more fruitful to automatically generate a new encoding tailored to the specifics of the hardware and of the model.

Some work has been done in this direction. Ref. \cite{chiew2021optimal} explores finding optimal arrangements of the Jordan-Wigner transform onto a specific hardware geometry. This has the benefit of minimizing the operator weights of the terms in a given Hamiltonian, while not increasing the qubit overhead. However it may be desirable to make such a trade-off, in which case different methods are required in order to search over the space of fermionic encodings that employ ancillary qubits to reduce the operator weight. In Ref. \cite{bringewatt2022} custom codes for the purposes of parallelization are considered. Here the optimization is over the class of encodings described in \cite{chien2020custom}. However there are many potential encodings which may not fall within this class. 

In this work we present a method for generating  fermionic encodings tailored to the given hardware connectivity of a device, and to a given fermionic Hamiltonian. This method takes as input a set of fermionic operators, constituting the terms in the Hamiltonian, a graph specifying the possible two-qubit interactions of the hardware, and a maximum cost. The method finds the fermionic encoding with the least possible cost, or otherwise determines that no encoding exists with cost less than the specified maximum. Unlike earlier methods, ours searches in an exhaustive fashion over an extremely broad family of encodings, namely all encodings which would map Majorana monomials to Pauli operators. To our knowledge this family subsumes nearly all existing second quantized fermionic encodings -- with the notable exception of Ref. \cite{kirby2021second}\footnote{This exception differs from the other encodings in that it constitutes a representation of the unitary group of Majorana monomials, instead of an algebra homomorphism on the group algebra of the Majorana monomials, and leverages block encodings and linear combination of unitaries methods to simulate a unitary generated by the required element of the algebra.}.

The cost model we consider is as follows. Given an encoding, every term in the Hamiltonian is mapped to a qubit operator, which has support on a subset of qubits. In order to simulate the term, these qubits must be made to mutually interact. This requires interactions across at least all edges of a minimal Steiner tree (on the hardware graph) that contains these qubits. Thus an effective proxy for the circuit depth of simulating this term is the number of edges in this Steiner tree, and the cost of a given encoding is the maximum of this metric over all terms in the Hamiltonian.

The method is brute force and employs a branch-and-bound algorithm to search the space of possible encodings. The size of the search space grows very rapidly with the number of terms in the Hamiltonian and the size of the hardware graph. We introduce a number of strategies to reduce the size of this search space and exit branches early, which makes it possible to solve small problem sizes. For large problem sizes, it is likely that this method rapidly becomes too costly. 

However there is a specific use case where the size of the input specification remains small and where our method is likely to be useful. This is where both the system being modelled and the hardware graph each possess translational symmetry, and may be described concisely by a unit cell. Translational symmetry is ubiquitous in physics, but the domain where we expect this will find most use is in material science and condensed matter, where the majority of systems of interest possess translational symmetry. Furthermore, we have already observed that as superconducting quantum computers scale up, qubits are being arranged in a tiled pattern. We expect this trend to continue, as it lends itself well to large scale manufacturing. Since translational invariance is so important to this method, we incorporate it into all aspects of the work presented here. We employ the polynomial formalism popularized by Haah \cite{haah2013lattice,haah2016algebraic} to describe translationally invariant families of fermionic and qubit operators.

All existing encodings that employ additional qubits to reduce operator weights do so by way of projecting into a code space via the stabilizer formalism. The encodings produced by our method use the same strategy. An extra benefit of this strategy is that some encodings may also detect or even correct errors \cite{setia2019superfast,jiang2019majorana}, due to their high code distance. We show how our method can be made to restrict its search to fermionic encodings with a minimum code distance, and thus find low cost encodings with good error mitigating features. However high code distance is fundamentally at odds with the cost that we are trying to minimize in our method, and so for practical reasons we primarily concern ourselves with error detecting codes. 

We have employed our method to find optimal encodings for fermionic models on various geometries, mapped to a number of different translationally invariant hardware graphs, including many existing superconducting architectures. In some cases these optimal encodings are new, and in other cases they were already known. Importantly, even in the case where an encoding was known, our method certifies that the encoding chosen is the best choice for the given pairing (under our cost model).

\section{Preliminaries} 

\subsection{Mathematical Formalism of Fermionic Encodings}\label{sec:formalism}

We now review the mathematical formalism underpinning the kinds of fermionic encodings that our method is able to find. Nearly all known fermionic encodings fall within this formalism, with \cite{kirby2021second} being the only exception we are aware of. 
Employing the Majorana operators 
\begin{align}
    \gamma_j &:= a_j + a_j^{\dag}\\
    \overline{\gamma}_j &:= i(a_j^{\dag} - a_j),
\end{align}
where $a_j^\dag$ and $a_j$ are the standard fermionic creation and annihilation operators, we may consider the group of Majorana monomials
\begin{align}\label{majorana_group}
\nonumber \mathcal{M} := &\left\lbrace M(b):=\prod_{j=0}^{m-1} \gamma_j^{b_j} \overline\gamma_j^{b_{j+m}} \;\middle \vert \; b \in \{0,1\}^{2m}  \right\rbrace\times\\
 & \{\pm 1, \pm i \} 
\end{align}

Generally physicists are interested in an algebra of observables, i.e. the group algebra $\mathbb{C}[\mathcal{M}]$ on $\mathcal{M}$, or otherwise the group algebra $\mathbb{C}[F]$ on some subgroup $F \leq \mathcal{M}$ of $\mathcal{M}$. 
For example fermionic parity superselection restricts the algebra of observables of natural systems to be the group algebra on the even monomials 
\begin{equation} \mathcal{M}^+ :=\mathcal{M} \textrm{ with } |b|=0,
\end{equation}
where $ \vert b \vert := \sum_i b_i \textrm{ mod 2} $.
One may also be interested in restricting to smaller subgroups, for example if one is interested in conservation of a symmetry.

To construct an encoding of these observables into qubits, one must find an algebra isomorphism between the given group algebra and an algebra of observables on the qubit system. 
Since these are group algebras it suffices to find a faithful group representation that acts trivially on elements of the complex field (i.e. maps $-1$ to $-1$ and $i$ to $i$), which can then be extended to the group algebra by linearity of the representation.

We are often interested in engineering our encoding to satisfy specific features on some privileged subset $\mathcal{F} \subseteq F$ of the group. 
So we want a way to specify the form of the elements in $\mathcal{F}$, and then to build a faithful group representation of the rest of the group $F$ which is consistent with that form. The procedure for doing this is as follows.
For the sake of simplicity $F$ is assumed to be generated by $\mathcal{F}$.

Given a subset $\mathcal{F} \subseteq \mathcal{M}$ of Majorana monomials, the strategy is to find a mapping $\sigma: \mathcal{F} \rightarrow \mathcal{P}$ from $\mathcal{F}$ to the group of Pauli operators on $n$ qubits :
\begin{align}\label{pauli_group}
\nonumber \mathcal{P} := & \left\lbrace P(b):=\prod_{j=0}^{n-1} X_j^{b_j}Z_j^{b_{j+n}} \;\middle \vert \; b \in \{0,1\}^{2n}  \right\rbrace \times\\
&\{\pm 1, \pm i \}
\end{align}
such that $\sigma$ preserves the group commutation relations
\begin{equation}\label{commutation}
\forall \; f_i, f_j \in \mathcal{F}: \;[f_i,f_j]:=f_i^{-1}f_j^{-1}f_i f_j=\pm1
\end{equation}
and inverse/hermiticity relations
\begin{equation} \label{inverse}
 \forall f \in \mathcal{F} :\;  f^{-1}=f^\dagger = \pm f 
\end{equation}
 among elements of $\mathcal{F}$.
Note however, that $\sigma$ need not, and indeed typically does not, preserve any other product relations. Next we show that once a mapping of this form has been found, it uniquely specifies a group representation of the subgroup $F=\langle \mathcal{F} \rangle$ generated by $\mathcal{F}$, subject to one caveat.
 
Consider the finitely presented group
\begin{align}
\Gamma=&\langle \mathcal{F}\times \{\pm 1, \pm i\} \vert (\ref{commutation}), (\ref{inverse}) \rangle \\
=& \left\lbrace \Gamma(b):=\prod_{j=0}^{\vert \mathcal{F} \vert -1} f_j^{b_j} \;\middle \vert \; b \in \{0,1\}^{\vert \mathcal{F} \vert}  \right\rbrace \times \\
&\{\pm 1, \pm i \}
\end{align}
consisting of the free group on the set $\mathcal{F}\times \{\pm 1, \pm i\}$, subject to relations (\ref{commutation}) and (\ref{inverse}), but not any of the other product relations of $\mathcal{M}$. Most intuitively, one should think of the elements of $\Gamma$ as having forgotten that they were once products of Majorana operators, remembering only commutation/anticommutation relations and their inverse. The mapping $\sigma$ admits a natural extension from the set $\mathcal{F}$ to the group $\Gamma$ via the group structure of $\mathcal{P}$, and by construction is a group representation of $\Gamma$. 

There is a natural group homomorphism $\tau: \; \Gamma \rightarrow F$ which is defined by its action on the generating elements: $\tau(f_i \in \Gamma ):= f_i \in F$ and then extended to the full group $\Gamma$ via existing product relations in $F$, i.e. $\tau(f_i f_j):=\tau(f_i)\tau(f_j)$.
Furthermore, we have that $F \simeq \Gamma / \ker(\tau)$. 

In what follows we will make use of the notion of a ``descended'' representation. Given a group $A$, a representation $\phi$ and a subgroup $B < A$, if $B$ is in the kernel of $\phi$ then $\phi$ can be descended to a representation $\phi'$ of the quotient group $A/B$:
$$\phi'(aB):= \phi(a)$$ which is consistently defined since $\phi(B)=1$. For ease of exposition, we will drop the distinction between $\phi$ and its descended representation $\phi'$, which can be inferred implicitly from the particular group it acts upon. So for example we may say $\phi$ is a representation of $A/B$.

In order to construct a group representation of $F$, we must identify the abelian subgroup:
\begin{equation}
S:=\sigma(\ker(\tau)) \subseteq \mathcal{P}
\end{equation}
which, as long as $-1\not \in S$, is a stabilizer group with stabilizer code space $V$.
This allows us to define the projection of $\sigma$ onto the subspace $V$
\begin{equation}\label{projected_sigma}
 \sigma_V := \Pi_V \circ \sigma
\end{equation} 
which remains a representation of $\Gamma$, since $\Pi_V$ commutes with the group $\sigma(\Gamma)$. Furthermore, since $\sigma_V(\ker(\tau))=1$, it is also a descended representation of $\Gamma/\ker(\tau) \simeq F$.

Intuitively, the group $\ker{\tau}$ describes which products of elements in $\mathcal{F}$ should be yielding the identity once they remember that are were constructed from Majorana operators. By projecting into the code space $V$ we ensure that this product structure is preserved.

Finally, we need to ensure our representation is faithful -- i.e. there is a one-to-one correspondence between qubit operations on the code space and logical fermionic operations. It can be shown (see Appendix \ref{sigmav_faithful}) that the representation $\sigma_V$ is a faithful representation of $F$ if and only if $\ker(\sigma) \subseteq \ker(\tau)$. If this is not the case, then there is a group of observables
\begin{equation}
G:=\tau(\ker(\sigma)) \subseteq F 
\end{equation} whose value becomes fixed to $+1$ when projecting into $V$. We refer to such a group $G$ as a superselection group. In this case, $\sigma_V$ is instead a faithful representation of the quotient group $F/G$  (see Appendix \ref{sigmav_faithful}).

In summary, if we can map a set of fermionic observables $\mathcal{F}$ to Paulis which satisfy the group commutation relations (\ref{commutation}) and inverse relations (\ref{inverse}) among the elements of $\mathcal{F}$, then we can identify a stabilizer group $S \subseteq \mathcal{P}$ and a superselection group $G \subseteq F:=\langle \mathcal{F}\rangle$ such that we can extend this mapping $\sigma$ to an algebra isomorphism on the group algebra $\mathbb{C}[F/G]$ by projecting into the stabilizer code space $V$ of $S$, fixing the superselection group $G$ to be $+1$. The one caveat is that the stabilizer group $S$ may not contain $-1$. 

\subsubsection{Binary Representation}\label{sec:binary}

A nice feature of both the Pauli group $\mathcal{P}$ and the Majorana monomial group $\mathcal{M}$ is that up to a phase their elements may be represented by a binary vector (as made explicit in definitions (\ref{majorana_group}) and (\ref{pauli_group})). Furthermore, the product relations among elements in these groups correspond directly to addition of the corresponding binary vectors -- again up to a phase. 

Thus it is useful to specify the projective portion of the maps $\tau$ and $\sigma$ as matrices over a binary field, and postpone the specification of the phase e.g.:
\begin{equation}
\hat{\tau} = 
\begin{blockarray}{cccccc}
&f_1 & f_2 & f_3 & \cdots & f_{|\mathcal{F}|} \\
\begin{block}{c(ccccc)}
  \gamma_0 & 0 & 1 & 1 & \cdots & 0 \\
 \gamma_1  & 1 & 0 & 0 & \cdots & 0 \\
 \vdots  & \vdots & \vdots & \vdots & \ddots & \vdots  \\
 \gamma_{m-1}  & 1 & 1 & 1 & \cdots & 1\\
\bar{\gamma}_0  & 0 & 1 & 0 & \cdots & 0 \\
\bar{\gamma}_1  & 1 & 1 & 1 & \cdots & 1\\
 \vdots  & \vdots & \vdots & \vdots & \ddots & \vdots \\
\bar{\gamma}_{m-1}  & 1 & 1 & 1 & \cdots & 1 \\
\end{block}
\end{blockarray} 
\end{equation}

\begin{equation}
\hat{\sigma} = 
\begin{blockarray}{cccccc}
&f_1 & f_2 & f_3 & \cdots & f_{|\mathcal{F}|} \\
\begin{block}{c(ccccc)}
  X_0 & 1 & 0 & 0 & \cdots & 0 \\
 X_1  & 1 & 0 & 1 & \cdots & 0 \\
 \vdots  & \vdots & \vdots & \vdots & \ddots & \vdots  \\
 X_{n-1}  & 0 & 0 & 1 & \cdots & 0\\
Z_0  & 1 & 0 & 1 & \cdots & 1 \\
Z_1  & 0 & 1 & 0 & \cdots & 0\\
 \vdots  & \vdots & \vdots & \vdots & \ddots & \vdots \\
Z_{n-1}  & 0 & 0 & 0 & \cdots & 1 \\
\end{block}
\end{blockarray} 
\end{equation}
Where the columns $\hat{\tau}_i$ and $\hat{\sigma}_i$ are given by
$$\tau(f_i) = \delta_\tau(f_i)M( \hat{\tau}_i )  \;,\; \sigma(f_i)=  \delta_\sigma(f_i) P( \hat{\sigma}_i ) $$
where $\delta_\tau$ and $\delta_\sigma$ are as yet unspecified phases.

Writing the maps in this way allows us to easily represent the group commutation relations as a matrix equation, and also to easily identify the stabilizer group $S$ and superselection group $G$.

For two Pauli operators $P(a), P(b)$ specified by $a,b \in \mathbb{F}_2^{2n}$, their group commutation relation is given by,
\begin{equation} 
    [P(a),P(b)] = (-1)^{\omega_q(a,b)}
\end{equation}
where $\omega_q$ is a binary symplectic form 
\begin{equation} \label{eq:pauli_symplectic}
    \omega_q(a,b) = a^\dag \Lambda_q b,
    \quad \Lambda_q = 
    \begin{pmatrix}
    0 & 1\\
    1 & 0
    \end{pmatrix} \otimes I_{n\times n}.
\end{equation}

In the case of Majoranas, if we restrict $\mathcal{F}$ to be a subset of the even Majorana monomials $\mathcal{M}^+$, then for two Majorana operators $M(a), M(b)$ specified by $a,b \in \mathbb{F}_2^{2m}$, their group commutation relation is given by
\begin{equation}
   [M(a),M(b)] = (-1)^{\omega_f(a,b)}
\end{equation}   
Using the quadratic form $\omega_f(a,b) = a^\dag b$ works for even parity Majorana operators, however it fails for odd parity operators. To correct this, the products of the Hamming weights mod 2 should be added.

In order to have quadratic form that behaves properly also for odd operators, we will use a matrix in the quadratic form
\begin{equation}\label{eq:fermi_symplectic}
   \omega_f(a,b) = a^\dag \Lambda_f b, \quad \Lambda_f = I + C_1
\end{equation}
where $I$ is the $2n\times 2n$ identity matrix and $C_1$ is the constant matrix with $1$ in every entry. It can be easily verified that $a^{\dag}C_1 b = |a||b|$, so we then have 
\begin{equation}
    \omega_f(a,b) = a^{\dag}b + |a||b|.
\end{equation}

Thus, we may say that $\sigma$ preserves the group commutation relations (\ref{commutation}) if and only if the matrix equation
\begin{equation}\label{matrix_comm_cond}
\hat{\tau}^\dag \Lambda_f \hat{\tau} = \hat{\sigma}^\dag \Lambda_q \hat{\sigma}
\end{equation}
is satisfied. In almost all practical use cases $\mathcal{F} \subseteq M^+$.

Next, in order to ensure that the map $\sigma$ satisfies the inverse/hermiticity relations (\ref{inverse}), we must additionally specify the phases, $\delta_\tau$ and $\delta_\sigma$. The phase $\delta_\tau$ is assumed to be given explicitly by the choice of $f_i$, for example one may want the elements in $\mathcal{F}$ to be some subset of the edge and vertex operators:
\begin{align}
    V_j &= -i\gamma_j\overline{\gamma}_j\quad \text{(vertex)}\label{eq:vertex_op}\\
    E_{jk} &= -i \gamma_j\gamma_k\quad \text{(edge)}\label{eq:edge_op}.
\end{align}
which appear frequently in local fermionic interactions and may be used to generate $\mathcal{M}^+$. Here the phase is explicit. Since Pauli operators square to the identity, and $f_i^2 \in \pm 1$, the inverse condition $\sigma(f_i)^2 = f_i^2$ is satisfied when the choice of phase $\delta_\sigma$ satisfies
\begin{align}\label{delta_sigma_cond}
\delta_\sigma(f_i)^2 = f_i^2
\end{align}
This fixes whether $\delta_\sigma(f_i)$ is complex or real, but it leaves free a choice of sign, which may be useful when we consider the superselection group, $G$, and the stabilizer group, $S$.

Given $\hat{\tau}$ and $\delta_\tau$, and a choice of $\hat{\sigma}$  and $\delta_\sigma$ satisfying conditions (\ref{matrix_comm_cond}) and (\ref{delta_sigma_cond}), we may now compute the stabilizer group $S = \sigma(\ker(\tau))$ and the superselection group $G=\tau(\ker(\sigma))$. It is not difficult to show (see Appendix \ref{proof_ker_sigma_tau}) that
\begin{equation}\label{ker_sigma}
\ker(\sigma) = \{\sigma(\Gamma(b))^* \Gamma(b) \vert b \in \ker(\hat{\sigma}) \}
\end{equation}
and similarly
\begin{equation}\label{ker_tau}
\ker(\tau) = \{\tau(\Gamma(b))^* \Gamma(b) \vert b \in \ker(\hat{\tau}) \}
\end{equation}
where we note that $\sigma(\Gamma(b))\in \{\pm1, \pm i \}$ when $b \in \ker(\hat{\sigma})$ and $\tau(\Gamma(b)) \in \{\pm1, \pm i \}$ when $b \in \ker(\hat{\tau})$. In other words the kernels of the maps $\sigma$ and $\tau$ are in one-to-one correspondence with the kernels of the matrices $\hat{\sigma}$ and $\hat{\tau}$ respectively, up to a uniquely specified and easily computable phase. The kernel of a binary matrix can be efficiently computed by Gaussian elimination on an augmented matrix. In the case where the matrix elements are in a polynomial ring over a binary field as will be required when we consider translationally invariant systems, the kernel may also be computed using known Grobner basis methods (see Appendix \ref{sec:kernel}). Thus it is possible to compute the stabilizer group $S$ and the superselection group $G$ in either case. 

Having computed $S$, it may turn out that a particular choice of sign convention for $\delta_\sigma$ implies that $-1 \in S$. In this case one can always modify the choice of sign for $\delta_\sigma$ such that $S$ no longer contains $-1$. The procedure for doing this is described in Appendix \ref{sec:stab_group_minus}. %

In addition to resolving any problems with $-1\in S$, the choice of sign convention in $\delta_\sigma$ also has an impact on the sign of the elements in $G$. Thus one has some freedom to choose the particular signs of the elements in $G$. For example if the fermionic parity observable is in $G$, then one may be able to choose if the code space represents even or odd  fermion parity by a judicious choice of sign convention in $\delta_\sigma$.

\subsubsection{Translationally Invariant Systems}

In many cases it may be useful to find an encoding that tessellates space. So it is helpful to extend the binary formalism discussed in Section \ref{sec:binary} to handle a concise representation of translationally invariant fermionic operators, Pauli operators, and mappings between them. The formalism that follows was first introduced by Haah \cite{haah2013lattice} and it was also used in the context of mappings between Majorana and Pauli operators in \cite{tantivasadakarn2020jordan}.

In this case we employ the Laurent polynomial formalism wherein we represent Pauli operators, Majorana monomials, and elements of the free group $\Gamma$, and the mappings $\sigma$ and $\tau$ by vectors and matrices over multi-variable Laurent polynomials, the spaces of which we denote by 
\begin{multline}
\mathbb{F}_2[x_1,..,x_D]^{n \times m}:= \\ \left\lbrace \sum_{k \in \mathbb{Z}^D} a_{k} \prod_{i=1}^D x_i^{k_i} \;,\; a_k \in \mathbb{F}_2^{n \times m} \right \rbrace.
\end{multline} 
Here the number of variables is equal to the dimensionality $D$ of the system, and the degree $k_i$ need not be positive. In order to make the algebra nicer it is also useful to employ the convention that $a^\dagger =  \sum_{k \in \mathbb{Z}^D} a_{k}^{\mathsf{T}} \prod_{i=1}^D x_i^{-k_i}$.  Finally vector and matrix multiplication is defined by inheriting the product relations of the polynomials:
$$ ab =\sum_{p,q \in \mathbb{Z}^D} a_{p}b_q \prod_{i=1}^D x_i^{p_i+q_i}.$$ We introduce the notation $ T_k := \prod_{i=1}^D x_i^{k_i}$, to be understood as a translation in the lattice by vector $k$.

Intuitively the elements $a_k$ denote the presence or absence of particular terms in a monomial, in the same way as in the binary formalism, at lattice position $k$. Thus we may extend the definitions of the functions $P$, $\Gamma$ and $M$ so that they are maps from the space $\mathbb{F}_2(x_1,..,x_D)^n$ into the tessellated versions of the Pauli operators $\mathcal{P}$, the free group $\Gamma$, and the Majorana monomials $\mathcal{M}$ (respectively) in the following fashion:
\begin{align}
P(a)&:= \prod_{k \in \mathbb{Z}^D} \prod_{j=0}^{n-1} X_{j,k}^{a_k[j]} Z_{j,k}^{a_k[j+n]}, \\
\Gamma(a)&:= \prod_{k \in \mathbb{Z}^D} \prod_{j=0}^{J-1} f_{j,k}^{a_k[j]},  \\
M(a)&:= \prod_{k \in \mathbb{Z}^D} \prod_{j=0}^{m-1} \gamma_{j,k}^{a_k[j]}\bar{\gamma}_{j,k}^{a_k[j+m]}.
\end{align}
Here $a= \sum_{k \in \mathbb{Z}^D} a_{k} T_k$, the multi index $(j,k)$ indicates the $j$th qubit, mode or privileged element $f_j$ acting on a unit cell of the tessellation translated from the origin by $k$, and $a_k[j]$ is the $j$th entry of $a_k$. Here each unit cell contains $n$ qubits, $J$ privileged operators  or $m$ modes (respectively). The products are always ordered in accordance with any preferred ordering on the vectors $k$. 

The symplectic forms given by equations (\ref{eq:pauli_symplectic}) and (\ref{eq:fermi_symplectic}) may be extended to $\mathbb{F}_2(x_1,..,x_D)^n$  in the following fashion
\begin{equation}
\Lambda_Q :=  \Lambda_q 
\end{equation}
\begin{equation}
\Lambda_F :=I+ \sum_{k \in \mathbb{Z}^D} C_1 T_k
\end{equation}
recalling that $C_1$ is the constant matrix with $1$ in every entry. The expression $a^\dagger \Lambda_{Q/F} b$ is an element of $\mathbb{F}_2(x_1,..,x_D)$, and the quadratic form between $a$ and $b$ defining the appropriate group commutation relations is given by the zeroth term in the expression:
 \begin{equation}
 w_{Q/F}(a,b):= (a^\dagger \Lambda_{Q/F} b)_{\vec{0}}.
 \end{equation}
The other terms contain information about the group commutation relations among various relative translations of $a$ and $b$:
$$ (a^\dagger \Lambda_{Q/F} b)_{k}= (a^\dagger \Lambda_{Q/F} b T_k)_{0}= w_{Q/F}(a,b T_k).$$
Although $\Lambda_F$ constitutes an unbounded expression, when we consider even Majorana operators it suffices to truncate the expression, since two terms acting on disjoint sets of modes will always commute with one another. 

Intuitively it should be clear that the formalism described in Section \ref{sec:binary} should extend to the translationally invariant case described here. More concretely, and analogously to Section \ref{sec:binary}, we would like that given a matrix $\hat{\tau} \in \mathbb{F}_2(x_1,..,x_D)^{2m \times J}$ which describes the projective mapping from the group $\Gamma$ to $\mathcal{M}$, a fermionic encoding is completely specified, up to signs, by a matrix $\hat{\sigma}$ over $\mathbb{F}_2(x_1,..,x_D)^{2n \times J}$ describing the projective mapping from $\Gamma$ to $\mathcal{P}$ satisfying the relation:
\begin{equation}\label{matrix_comm_cond_TI}
\hat{\tau}^\dag \Lambda_F \hat{\tau} = \hat{\sigma}^\dag \Lambda_Q \hat{\sigma}
\end{equation}
 Instead of proving this in detail here, we simply note that a solution $\hat{\sigma}$ of this kind uniquely specifies an encoding of the kind  discussed in Section \ref{sec:binary} for any region $R\subset \mathbb{Z}^D $ of a finite size, by tesselating $\hat{\sigma}$ and $\hat{\tau}$ over $R$:  
 $$\hat{\sigma}_R := \left( \hat{\sigma}T_{r_1}, \hat{\sigma}T_{r_2}, ..., \hat{\sigma} T_{r_{|R|}}\right) \;,\; r_i \in R$$  
-- similarly for $\hat{\tau}_R$ -- and 
 mapping every column $a$ of $\hat{\sigma}_R$ or $\hat{\tau}_R$ into a binary vector via:
\begin{equation}
\sum_{k \in \mathbb{Z}^D} a_{k} \prod_{i=1}^D x_i^{k_i} \rightarrow \bigoplus_{k \in R} a_{k} 
\end{equation} 
where here a canonical ordering on $k$ is chosen.

\subsubsection{Example: fermions on a square lattice}

We will illustrate our application of the formalism above to describing the encoding of a model of spinless fermions on a square lattice. There is a single complex fermionic mode in each unit cell. We will encode this system onto a qubit system with two qubits per unit cell.

The algebra of fermionic operators can be generated by a single vertex operator (\ref{eq:vertex_op}) acting on a reference cell as well as edge operators (\ref{eq:edge_op}) acting between the mode in the reference cell and the neighboring cells in the $x$- and $y$-directions as well as their translations. We collect these three operators acting on the two species of Majoranas per unit cell into the $2\times 3$ matrix $\hat{\tau}$,
\begin{equation}\small
    \hat{\tau} = \begin{pmatrix}
    V_1 & E_{1,x} & E_{1,y}
    \end{pmatrix}_f
    = \begin{pmatrix}
    1 & 1+x & 1+y \\
    \hline
    1 & 0 & 0
    \end{pmatrix}.
\end{equation}

The commutation relations amongst the generating operators is encoded in the entries in the matrix

\begin{multline}
    \hat{\tau}^{\dag}\Lambda_F\hat{\tau} = \\
      \setlength{\arraycolsep}{1.7pt}
  \renewcommand{\arraystretch}{0.62}
    \begin{pmatrix}
    0 & 1+x & 1+y\\
    1+x^{\text{-}1} & x+ x^{\text{-}1} & 1+y+x^{\text{-}1}+yx^{\text{-}1}\\
    1+y^{\text{-}1} & 1+y^{\text{-}1} + x + y^{\text{-}1}x & y+y^{\text{-}1}
    \end{pmatrix}
\end{multline}
We can reproduce a known encoding here for fermions on a square lattice with two qubits per unit cell. Here $V_1 \rightarrow Z_0 Z_1$  The encoded generating operators can be chosen as

\begin{equation}
    \hat{\sigma} = \begin{pmatrix}
    \tilde{V}_1 & \tilde{E}_{1,x} & \tilde{E}_{1,y}
    \end{pmatrix}_q
    = \begin{pmatrix}
    0 & x & y \\
    0 & 1 & 1 \\
    \hline
    1 & 1 & 1+y\\
    1 & 1 & 0
    \end{pmatrix}.
\end{equation}

It is straightforward to verify that this matrix $\hat{\sigma}$ satisfies the encoding equation
\begin{equation}\label{eq:TIcondition}
\hat{\sigma}^{\dag}\Lambda_Q \hat{\sigma} = \hat{\tau}^\dag \Lambda_F \hat{\tau}.
\end{equation}
Therefore the Pauli operators specified by the columns of $\hat{\sigma}$ provide a valid encoding of the Majorana operators given as the columns of $\hat{\tau}$ for any sized square lattice.

\subsection{Symmetries and Fermionic Subalgebras}

Many physical systems of interest have symmetries that restrict the form their Hamiltonians may take. In some cases it is possible to capture these symmetries by restricting to a subalgebra of the full fermionic algebra. This is most obvious in the case of parity supersymmetry, which restricts the algebra to even products of Majoranas. Such symmetries loosen the constraints on the encodings, since it may be the case that the encoding need not represent the full algebra, for example the even Majorana algebra may be completely generated by edge and vertex operators, defined as:
\begin{equation}
E_{ij} := -i \gamma_{i} \gamma_{j} \;,\; V_i := -i \gamma_{i} \bar{\gamma}_{i}
\end{equation}
Thus single Majorana operators need not admit a representation (see \cite{derby2021compact} for an example of this). This further motivates designing encodings tailored to representing the specific terms in the Hamiltonian. If the terms make manifest the symmetries implicit in the Hamiltonian then by searching for encodings which represent only those terms, it may be possible to leverage these symmetries. We would like to briefly comment on two more notable symmetries that can be made manifest in the Hamiltonian terms.

In many systems in condensed matter, such as the Fermi-Hubbard model, the total fermion number is a conserved quantity. Expressed in terms of creation/annihilation operators, each operator must consist of an equal number of creation and annihilation operators. Edge operators on the other hand, when expressed in terms of creation/annihilation operators, are decomposed into four quadratic terms with $0$, $1$, and $2$ creation operators. The kinetic term in fermion Hamiltonians is the hopping term which we can express in terms of edge and vertex operators,
\begin{equation}
    a_j^{\dag}a_k + a_k^{\dag}a_j \propto V_j E_{jk} + E_{jk}V_k.
\end{equation}

In this case, we know that whenever edge operators appear in our Hamiltonian, they will always appear as multiplied by a vertex operator as above. We can then explicitly look for low cost representations of both of the terms in the above sum. We may then expect to find more efficient encodings which would not admit a representation of individual edge or vertex operators. However the careful reader will note that this would preclude the inclusion of any number operators, $n_i = (I-V_i)/2$, in the algebra, so this would rule out including external fields to the Hamiltonian or the possibility of measuring single mode fermion density. We would note however that pairs of vertex operators would necessarily be included in the representation, and so the algebra may include two point density-density interactions. This is important since the Fermi-Hubbard model includes a spin-spin interaction. In fact the Fermi-Hubbard model has the additional spin conservation symmetry. In this case there are no hopping terms between the two different spin sectors. This further reduces the size of the algebra.

We include here the privileged fermionic operators $\hat{\tau}$ associated with the spinful Fermi-Hubbard model on the square lattice, with two modes per unit cell, one for spin up and one for spin down:
\begin{equation}
    \hat{\tau} = \begin{pmatrix}
    1 & x & 1 & y & 1 & 0 & 0 & 0 & 0 \\ 
    1 & 0 & 0 & 0 & 0 & x & 1 & y & 1 \\ 
    \hline
    1 & 1 & x & 1 & y & 0 & 0 & 0 & 0 \\ 
    1 & 0 & 0 & 0 & 0 & 1 & x & 1 & y 
    \end{pmatrix}.
\end{equation}
We note that a single edge operator within a spin sector would correspond to a column vector non-zero in only one entry -- which is not in the column space of this matrix. Unfortunately, because of the number of privileged terms, we were unable to find example codes in a reasonable time frame for this set of operators.

\subsection{Existence of an encoding for a given fermionic unit cell}

Given a translationally-invariant fermionic system that one would like to encode and a choice of unit cell, it is useful to ask when we can be guaranteed that an encoding exists if we allow for arbitrarily high operator weight. When not considering strictly translationally-invariant systems, one can always be sure that some mapping exists as long as the number of qubits is equal to the number of number of fermionic modes being encoded. This commonly achieved by simply associating the fermionic modes to sites on a chain and using the Jordan-Wigner transformation to map the fermionic system to a nonlocally-interacting spin system.

In the two dimensional translationally-invariant setting when we are hoping to find a mapping from local even-parity fermionic operators to local qubit operators, we are not free to use Jordan-Wigner as this mapping breaks locality. One can however, always find an encoding that preserves locality and translation-invariance by choosing a hardware unit cell containing one more qubit than there are fermionic modes per cell, assuming the hardware graph is connected.

This can straightforwardly be achieved by associating the $n$ modes within each unit cell with vertices of a path graph, (an $n$-site open chain). Within each unit cell, select one of the vertices, say the $0$th vertex, and add edges between the $0$th vertices in each cell and the one above it in the $y$ direction. Now, add an edge between $0$th vertices in each cell and the $n-1$-th vertex in its neighboring cell in the $x$ direction. The resulting graph is a square lattice with horizontal edges decorated by $n-1$ additional vertices. Using the generalized superfast encoding due to Setia, et al.\cite{setia2019superfast} applied to the constructed graph will provide an encoding of the $n$ modes in each fermionic system unit cell onto $n+1$ qubits in the hardware cells. The resulting operators are local with respect to the lattice of unit cells and are of weight depending on the number of modes within each unit cell but not the total system size. Thus, as long as there are $n+1$ qubits in each unit cell and the translationally-invariant hardware graph is connected, there will exist an encoding of the desired fermionic system.

\section{Description of the Method}

Having established the requisite mathematical language to be able to state the problem, our method for solving it follows fairly straightforwardly. Given the $2m \times J$ matrix $\hat{\tau}$ describing the fermionic operators one wishes to represent, and a function giving the cost of an individual Pauli operator, we need only populate a $2n \times J$ matrix $\hat{\sigma}$ in such a way that equation (\ref{eq:TIcondition}) is satisfied and that the greatest cost of any column (Pauli operator) is minimized. Our strategy for doing this is a brute force branch-and-bound search algorithm, supplemented by some optimization tricks. Of course it is likely that this is not the most efficient way to do this -- ours is merely a proof of principle. We proceed with a high level concrete description of the algorithm, and afterwards will discuss some of the details of its implementation.

Let $A_j$ be the $j$th column of a matrix $A$. Let $[A]_j$ be the principal minor of $A$ consisting of all rows and columns less than or equal to $j$. Let $\mathcal{P}_n$ be an iterator over the $n$ qubit Paulis, expressed as column vectors in $\mathbb{F}_2[x_1,... x_D]^{2n}$. Let {\it cost}() be a cost function evaluating the cost of an individual Pauli. The branch-and-bound method is described in Algorithm \ref{branchandbound}.

\begin{algorithm}[h]
\caption{Branch and Bound Search Algorithm for Finding Fermionic Encodings}\label{branchandbound}
\begin{algorithmic}[1]
\State $\hat{\sigma}=0$, best cost = $\infty$, best candidate=none
\Procedure{Branch And Bound}{$j$}
	\If{$j>J$} 
		\State best candidate = $\hat{\sigma}$	
		\State best cost = max cost over columns in $\hat{\sigma}$
		\State return
	\EndIf
	
	\For{pauli in $\mathcal{P}_n$}
 		\State $\hat{\sigma}_j$ = pauli
		\State valid = $[\hat{\tau}]_j^\dag \Lambda_F [\hat{\tau}]_j == [\hat{\sigma}]_j^\dag \Lambda_Q [\hat{\sigma}]_j$
		\State bounded= {\it cost}(pauli) $\leq$ best cost
 		\If{valid and bounded}
 			\State Branch And Bound($j+1$)
 		\EndIf
 	\EndFor	
\EndProcedure

\State Branch And Bound(0)
\State return best candidate

\end{algorithmic}
\end{algorithm}
In words, the algorithm recursively populates the matrix $\hat{\sigma}$ with Pauli operators that satisfy the (anti-)commutation conditions of all previously chosen Pauli operators, and stores the best completed $\hat{\sigma}$ as it goes. It ignores any Pauli operators that are guaranteed to yield a worse cost. Critically the total cost of an encoding is taken to be the max cost of any of the chosen Pauli operators. If this were not the case, then the bounding condition would fail.

\subsection{Optimizations}
Clearly this method is not efficient either in the number of privileged elements $J$, or in the number of qubits $n$. As $n$ grows, the number of Pauli operators grows exponentially, and as $J$ grows the depth of the search tree grows, exponentially compounding the size of the search space. The hope is that by keeping the size of $J$ and $n$ small by considering translationally invariant unit cells, and by making judicious optimizations of the algorithm, we may at least be able to search for encodings in practically useful scenarios.

The primary point of optimization is in how the iterator over Paulis $\mathcal{P}_n$ is ordered and pruned. Ideally, we would like for $\mathcal{P}_n$ to be ordered by cost. In this case, the algorithm can terminate iterating over $\mathcal{P}_n$ when the cost of a Pauli exceeds the best cost, and the best solution is the first solution found. In the small instances we have considered, we have found that it is possible to cache a pre-ordered list of Pauli operators, which dramatically improves the run-time of the algorithm. However, as the number of qubits grows this begins to become infeasible. It is possible to dynamically generate Paulis ordered by weight (number of qubits on which the Pauli has support), and for the cost function we consider the Pauli weight lower-bounds the cost. Thus, in the case where we can not pre-order the Paulis by cost we can at least terminate iterating over $\mathcal{P}_n$ when the Pauli weight exceeds the best cost.

If the number of Paulis is small enough to pre-order by cost, then it is also possible to pre-compute look-up tables for all operators which anti-commute and all operators which commute with a given Pauli operator. With these look-up tables one may prune $\mathcal{P}_n$ to only contain Paulis which satisfy the requisite (anti-)commutation relations -- specified by the prior choices of Paulis -- by taking intersections of these sets. In conjunction with an ordering of the cost this means the search algorithm need only check if the intersections are empty, or otherwise take the first element in the list. Given that the (anti-)commutation relations are highly structured, we conjecture that look-up tables may not need to be precomputed, and instead it may be possible to dynamically generate a pruned $\mathcal{P}_n$ containing only valid Paulis for the given circumstance.

In the case of translationally invariant systems, we may also prune $\mathcal{P}_n$ so that it only includes Paulis with support on the central unit cell (i.e. the column contains at least one entry with a $1$ in it). Furthermore we can bound the range of the support of the Paulis to for example only extend to nearest or next nearest neighbouring cells. 

Further minor pruning of $\mathcal{P}_n$ includes the obvious step of setting a max Pauli weight and the more complicated step of avoiding iterating over solutions that are equivalent up to single qubit Clifford operations.

\subsection{Cost Function}

We would like to tailor the encoding we find to a particular use case. This is generally enforced by the cost function. The algorithm as described above makes use of one essential feature, which is that the total cost is given by the max cost over the chosen Paulis. Further optimizations make use of the fact that the cost function is lower bounded by the weight of the Paulis, but this is not essential. Otherwise we are free to set the cost function as we wish. 

For the purposes of demonstration, we consider a concrete cost function motivated by the following consideration. Although Pauli weight of operators is natural, and a convenient choice of cost for the purposes of search optimization, it is generally divorced from the details of any hardware. Generally we wish to lower the Pauli weight of our logical fermionic operators in order to more efficiently execute unitary operations generated by those operators. This is central to dynamical simulations using product formulas for time-evolution as well as in variational quantum algorithms for ground state problems with, for example, the Hamiltonian variational ansatz \cite{cerezo2021variational,wecker2015progress}.

In order to perform a unitary generated by a multi-qubit Pauli operator, the standard circuit is to apply single qubit Clifford gates that, upon conjugation, map the Pauli operator to a product of $Z$s with the same support. Then a series of CNOT gates is applied to collect parities onto a single qubit to which a single qubit rotation is then applied. The parity collection and single-qubit Clifford gates are then undone. The depth of the described circuit block is determined by the number of two-qubit gates which must be performed. 

The qubits in some quantum computing platforms, for example superconducting circuits, are subject to connectivity constraints. There is a fixed graph with qubits at vertices and edges connecting qubits that can be jointly acted upon by two-qubit gates. 

Suppose an operator acted on two qubits that were not adjacent on the graph. The circuit that performs a rotation generated by such an operator must act on all the qubits in a path terminating on the associated vertices. As such, the cost we associate to the operator is not the weight, but rather the number of edges in the shortest path connecting the two vertices. We would like to generalize this notion to higher weight Pauli operators. Specifically, for a Pauli operator acting on a collection of qubits associated to vertices of a graph, we define the cost of the operator to be the number of edges in the Steiner tree for the marked vertices on the hardware graph. Recall that the Steiner tree is the minimal connected tree subgraph containing all the marked vertices.

Note that this cost is a more restrictive notion than weight. In particular, if an operator in the Pauli group on $n$ qubits, $p \in \mathcal{P}_n$, has a weight of $\text{weight}(p)$, then the cost of the operator, $\text{cost}(p)\geq \text{weight}(p)-1$. Therefore, 
\begin{equation}
    \{p \in \mathcal{P}_n | \text{cost}(p) \leq w-1\} \subseteq \{p\in \mathcal{P}_n  | \text{weight}(p) \leq w\}.
\end{equation}

In principle, further refinements on the cost function could be included to reflect important details in the hardware. For example, the edges of the hardware connectivity graph could be weighted with the quality of the 2-qubit interaction on the device. However this may not play nicely with the translationally invariant formalism we are employing here.

\subsection{Optimality of the Solution}

The method described here has the added advantage that any encoding found is certified to be optimal with respect to the chosen cost by the nature of the brute force search -- provided the algorithm is allowed to terminate, which may not always be practical.  However, this optimality is subject to the following caveats when we consider translationally invariant systems.

Firstly, the method takes as input a unit cell for both the fermionic lattice system as well as the hardware layout. Importantly, the choice of unit cell to provide as input is generically not unique. For example, on a square lattice one could choose a single-mode unit cell with $x$- and $y$-direction translations along the rows and columns of the lattice. However, one could just as well have chosen a two-mode unit cell or indeed any number of modes in each unit cell. 
Furthermore, once a unit cell is chosen, one also has to specify how the unit cell tiles the plane. In particular, the choice of lattice translations is not in general unique. For example, suppose we fix a two-qubit unit cell for the hardware lattice with the qubits in adjacent columns and the same row. We can choose the unit $x$-translation to map each column to the column two to the right and the unit $y$-translation to map each row to the row above. We could also choose the unit $x$-translation to be one spacing to the right and one down and the unit $y$-translation to be one spacing to the right and one up.

Both of these types of choices can have an effect on the outcome of an encoding search. Thus there may be encodings associated with a different choice of unit cell that have a better cost than the one found. It may even be the case that for one choice of unit cell a solution does not exist for the specified search domain whereas for another choice of unit cell on the same graph a solution does exist. See Section \ref{sec:new_codes} for an example of this.

The second caveat is that in principle one may be able to find better encodings if one loosens the restriction on the neighbourhood within which Pauli operators are allowed to act, or if one loosens the constraint that Pauli operators need act on the central unit cell. However this seems unlikely in general.

\section{Results from Applying the Method}
\label{sec:Results}

We used the encoding generator to find encodings of each fermionic system listed in Table \ref{tab:fermionic_systems}. The operators we specified to be found were edge and vertex operators in all cases except one. On the square lattice we also search for operators of the form $E_{ij}V_j$ and $V_j E_{ij}$. From these operators, one can immediately construct particle number preserving hopping operators. 
The device layouts we consider include the square, hexagonal, and triangular lattices, as well as layouts currently used in superconducting devices, namely the heavy hexagon layout employed by IBM and the truncated square tiling employed by Rigetti. We also considered a number of other lattices including the Kagome lattice with 2 different choices of unit cell. The greatest vertex degree amongst all the lattices we considered was 6 for the rhombile lattice. Finally, we searched for encodings on a small number of instances of layouts with non-planar connectivity. The largest unit cell in terms of number of qubits was $5$. All of the lattices and their associated unit cells are shown in Table \ref{tab:hardware graphs}.

\begin{table*}
    \centering
    \begin{tabular}{c|c|c|c}
    Lattice (modes/cell) & unit cell & operators & $\hat{\tau}$ \\
    \hline
    Square (1) & \scalebox{0.3}{\begin{tikzpicture}
    \foreach \x in {0,...,1}{
      \foreach \y in {0,...,1}{
        
        \draw (0.5,-0.8) -- (0.5,3.8);
        \draw (2.5,-0.8) -- (2.5,3.8);
        \draw (-0.8,0.5) -- (3.8,0.5);
        \draw (-0.8,2.5) -- (3.8,2.5);
        
        \node[draw,circle,inner sep=2pt,fill] at (2*\x+0.5,2*\y+0.5) {};
        
        \draw [gray,dashed] (-0.5,-0.8) -- (-0.5,3.8);
        \draw [gray,dashed] (1.5,-0.8) -- (1.5,3.8);
        \draw [gray,dashed] (-0.8,1.5) -- (3.8,1.5);
        \draw [gray,dashed] (-0.8,-0.5) -- (3.8,-0.5);
        \draw [gray,dashed] (-0.8,3.5) -- (3.8,3.5);
        \draw [gray,dashed] (3.5,-0.8) -- (3.5,3.8);
      }
    }
\end{tikzpicture}} & ($E$ and $V$) & $\begin{pmatrix}
    1+y & 1+x  & 1 \\
    \hline
    0    & 0   & 1 
\end{pmatrix}$ 
    \\
    \hline
    Square (Num. preserving) (1) & \scalebox{0.3}{\begin{tikzpicture}
    \foreach \x in {0,...,1}{
      \foreach \y in {0,...,1}{
        
        \draw (0.5,-0.8) -- (0.5,3.8);
        \draw (2.5,-0.8) -- (2.5,3.8);
        \draw (-0.8,0.5) -- (3.8,0.5);
        \draw (-0.8,2.5) -- (3.8,2.5);
        
        \node[draw,circle,inner sep=2pt,fill] at (2*\x+0.5,2*\y+0.5) {};
        
        \draw [gray,dashed] (-0.5,-0.8) -- (-0.5,3.8);
        \draw [gray,dashed] (1.5,-0.8) -- (1.5,3.8);
        \draw [gray,dashed] (-0.8,1.5) -- (3.8,1.5);
        \draw [gray,dashed] (-0.8,-0.5) -- (3.8,-0.5);
        \draw [gray,dashed] (-0.8,3.5) -- (3.8,3.5);
        \draw [gray,dashed] (3.5,-0.8) -- (3.5,3.8);
      }
    }
\end{tikzpicture}} & (hopping and $V$) & $\begin{pmatrix}
    y & 1 & x & 1 & 1 \\
    \hline
    1 & y & 1 & x & 1 
\end{pmatrix}$
    \\
    \hline
    Triang (1) & \scalebox{0.3}{\begin{tikzpicture}
    \foreach \x in {0,...,1}{
      \foreach \y in {0,...,1}{
        
        \draw (0.5,-0.8) -- (0.5,3.8);
        \draw (2.5,-0.8) -- (2.5,3.8);
        \draw (-0.8,0.5) -- (3.8,0.5);
        \draw (-0.8,2.5) -- (3.8,2.5);
        \draw (-0.8,3.2) -- (-0.2,3.8);
        \draw (-0.8,1.2) -- (1.8,3.8);
        \draw (-0.8,-0.8) -- (3.8,3.8);
        \draw (1.2,-0.8) -- (3.8,1.8);
        \draw (3.2,-0.8) -- (3.8,-0.2);
        
        \node[draw,circle,inner sep=2pt,fill] at (2*\x+0.5,2*\y+0.5) {};
        
        \draw [gray,dashed] (-0.5,-0.8) -- (-0.5,3.8);
        \draw [gray,dashed] (1.5,-0.8) -- (1.5,3.8);
        \draw [gray,dashed] (-0.8,1.5) -- (3.8,1.5);
        \draw [gray,dashed] (-0.8,-0.5) -- (3.8,-0.5);
        \draw [gray,dashed] (-0.8,3.5) -- (3.8,3.5);
        \draw [gray,dashed] (3.5,-0.8) -- (3.5,3.8);
      }
    }
\end{tikzpicture}} & ($E$ and $V$) &
    $\begin{pmatrix}
    1+xy & 1+y & 1+x & 1 \\
    \hline
    0 & 0 & 0 & 1
\end{pmatrix}$ 
    \\
    \hline
    Spinful sq (2) & \scalebox{0.3}{\begin{tikzpicture}
    \foreach \x in {0,...,1}{
      \foreach \y in {0,...,1}{
        
        \draw (-0.8,0) -- (3.8,0);
        \draw (-0.8,1) -- (3.8,1);
        \draw (-0.8,2) -- (3.8,2);
        \draw (-0.8,3) -- (3.8,3);
        \draw (0,-0.8) -- (0,3.8);
        \draw (1,-0.8) -- (1,3.8);
        \draw (2,-0.8) -- (2,3.8);
        \draw (3,-0.8) -- (3,3.8);
        \draw (0,1) -- (1,0);
        \draw (0,3) -- (1,2);
        \draw (2,1) -- (3,0);
        \draw (2,3) -- (3,2);
        \node[draw,circle,inner sep=2pt,fill] at (2*\x+1,2*\y) {};
        \node[draw,circle,inner sep=2pt,fill] at (2*\x,2*\y+1) {};
        
        \node[align=left] at (0.3,1+0.3) {\Large\textcolor{orange}{1}};
        \node[align=left] at (1+0.3,0.3) {\Large\textcolor{orange}{2}};
        
        \draw [gray,dashed] (-0.5,-0.8) -- (-0.5,3.8);
        \draw [gray,dashed] (1.5,-0.8) -- (1.5,3.8);
        \draw [gray,dashed] (-0.8,1.5) -- (3.8,1.5);
        \draw [gray,dashed] (-0.8,-0.5) -- (3.8,-0.5);
        \draw [gray,dashed] (-0.8,3.5) -- (3.8,3.5);
        \draw [gray,dashed] (3.5,-0.8) -- (3.5,3.8);
      }
    }
\end{tikzpicture}} & ($E$ and $V$) &
    $\begin{pmatrix}
    0 & 0 & 0 & 1+y & 1+x & 1 \\
    1+y & 1+x & 1 & 0 & 0 & 0 \\
    \hline
    0 & 0 & 0 & 0 & 0 & 1 \\
    0 & 0 & 1 & 0 & 0 & 0 
\end{pmatrix}$
    \\
    \hline
    Hex (2) & \scalebox{0.3}{\begin{tikzpicture}
    \foreach \x in {0,...,1}{
      \foreach \y in {0,...,1}{
        
        \draw (2.2,3.8) -- (3.8,2.2);
        
        \draw (-0.8,2.8) -- (2.8,-0.8);
        
        \draw (0.2,3.8) -- (3.8,0.2);
        
        \draw (-0.8,0.8) -- (0.8,-0.8);
        
        \draw (2*\x,2*\y) -- (2*\x+1,2*\y+1);
        
        \node[draw,circle,inner sep=2pt,fill] at (2*\x,2*\y) {};
        \node[draw,circle,inner sep=2pt,fill] at (2*\x+1,2*\y+1) {};
        
        \node[align=left] at (-0.3,-0.3) {\Large\textcolor{orange}{1}};
        \node[align=left] at (1+0.3,1+0.3) {\Large\textcolor{orange}{2}};
        
        \draw [gray,dashed] (-0.5,-0.8) -- (-0.5,3.8);
        \draw [gray,dashed] (1.5,-0.8) -- (1.5,3.8);
        \draw [gray,dashed] (-0.8,1.5) -- (3.8,1.5);
        \draw [gray,dashed] (-0.8,-0.5) -- (3.8,-0.5);
        \draw [gray,dashed] (-0.8,3.5) -- (3.8,3.5);
        \draw [gray,dashed] (3.5,-0.8) -- (3.5,3.8);
      }
    }
\end{tikzpicture}} & ($E$ and $V$) & $\begin{pmatrix}
    y & x & 1 & 0 & 1  \\
    1 & 1 & 1 & 1 & 0  \\
    \hline
    0 & 0 & 0 & 0 & 1  \\
    0 & 0 & 0 & 1 & 0  
\end{pmatrix}$
    \\
    \hline
    Tilted sq (2) & \scalebox{0.3}{\begin{tikzpicture}
    \foreach \x in {0,...,1}{
      \foreach \y in {0,...,1}{
        
        \draw (-0.8,0.2) -- (2.8,3.8);
        \draw (0.2,-0.8) -- (3.8,2.8);
        \draw (1.8,-0.8) -- (-0.8,1.8);
        \draw (3.8,-0.8) -- (-0.8,3.8);
        \draw (3.8,-0.8) -- (-0.8,3.8);
        \draw (3.8,1.2) -- (1.2,3.8);
        \draw (-0.8,2.2) -- (0.8,3.8);
        \draw (2.2,-0.8) -- (3.8,0.8);
        \node[draw,circle,inner sep=2pt,fill] at (2*\x+1,2*\y) {};
        \node[draw,circle,inner sep=2pt,fill] at (2*\x,2*\y+1) {};
        
        \node[align=left] at (0.3,1+0.3) {\Large\textcolor{orange}{1}};
        \node[align=left] at (1+0.3,0.3) {\Large\textcolor{orange}{2}};
        
        \draw [gray,dashed] (-0.5,-0.8) -- (-0.5,3.8);
        \draw [gray,dashed] (1.5,-0.8) -- (1.5,3.8);
        \draw [gray,dashed] (-0.8,1.5) -- (3.8,1.5);
        \draw [gray,dashed] (-0.8,-0.5) -- (3.8,-0.5);
        \draw [gray,dashed] (-0.8,3.5) -- (3.8,3.5);
        \draw [gray,dashed] (3.5,-0.8) -- (3.5,3.8);
      }
    }
\end{tikzpicture}}  & ($E$ and $V$) & $\begin{pmatrix}
    1 & x & 1 & 1 & 0 & 1 \\
    x^{-1}y & 1 & y & 1 & 1 & 0 \\
    \hline
    0 & 0 & 0 & 0 & 0 & 1 \\
    0 & 0 & 0 & 0 & 1 & 0 
\end{pmatrix}$
    \\
    \hline
    Kagome (3) & \scalebox{0.3}{\begin{tikzpicture}
    \foreach \x in {0,...,1}{
      \foreach \y in {0,...,1}{
        
        \draw (0,-0.8) -- (0,3.8);
        \draw (2,-0.8) -- (2,3.8);
        \draw (-0.8,1) -- (3.8,1);
        \draw (-0.8,3) -- (3.8,3);
        \draw (-0.8,-0.8) -- (3.8,3.8);
        \draw (-0.8,1.2) -- (1.8,3.8);
        \draw (1.2,-0.8) -- (3.8,1.8);
        
        \node[draw,circle,inner sep=2pt,fill] at (2*\x,2*\y) {};
        \node[draw,circle,inner sep=2pt,fill] at (2*\x,2*\y+1) {};
        \node[draw,circle,inner sep=2pt,fill] at (2*\x+1,2*\y+1) {};
        
        \node[align=left] at (0+0.3,0) {\Large\textcolor{orange}{1}};
        \node[align=left] at (1,1-0.3) {\Large\textcolor{orange}{2}};
        \node[align=left] at (0+0.3,1+0.3) {\Large\textcolor{orange}{3}};
        
        \draw [gray,dashed] (-0.5,-0.8) -- (-0.5,3.8);
        \draw [gray,dashed] (1.5,-0.8) -- (1.5,3.8);
        \draw [gray,dashed] (-0.8,1.5) -- (3.8,1.5);
        \draw [gray,dashed] (-0.8,-0.5) -- (3.8,-0.5);
        \draw [gray,dashed] (-0.8,3.5) -- (3.8,3.5);
        \draw [gray,dashed] (3.5,-0.8) -- (3.5,3.8);
      }
    }
\end{tikzpicture}} & ($E$ and $V$) &
    $\begin{pmatrix}
    0 & 0 & xy & 1 & y & 1 & 0 & 0 & 1 \\
    x & 1 & 0 & 0 & 1 & 1 & 0 & 1 & 0 \\
    1 & 1 & 1 & 1 & 0 & 0 & 1 & 0 & 0 \\
    \hline
    0 & 0 & 0 & 0 & 0 & 0 & 0 & 0 & 1 \\
    0 & 0 & 0 & 0 & 0 & 0 & 0 & 1 & 0 \\
    0 & 0 & 0 & 0 & 0 & 0 & 1 & 0 & 0 
\end{pmatrix}$
    \\
    \hline
    Kagome alt. cell (3) & \scalebox{0.3}{\begin{tikzpicture}
    \foreach \x in {0,...,1}{
      \foreach \y in {0,...,1}{
        
        \draw (1,-0.8) -- (1,3.8);
        \draw (3,-0.8) -- (3,3.8);
        \draw (-0.8,0) -- (3.8,0);
        \draw (-0.8,2) -- (3.8,2);
        \draw (-0.8,2.8) -- (2.8,-0.8);
        \draw (3.8,0.2) -- (0.2,3.8);
        \draw (0.8,-0.8) -- (-0.8,0.8);
        \draw (3.8,2.2) -- (2.2,3.8);
        
        \node[draw,circle,inner sep=2pt,fill] at (2*\x,2*\y) {};
        \node[draw,circle,inner sep=2pt,fill] at (2*\x+1,2*\y) {};
        \node[draw,circle,inner sep=2pt,fill] at (2*\x+1,2*\y+1) {};
        
        \node[align=left] at (-0.1,0.4) {\Large\textcolor{orange}{1}};
        \node[align=left] at (1-0.3,0.3) {\Large\textcolor{orange}{2}};
        \node[align=left] at (1-0.4,1) {\Large\textcolor{orange}{3}};
        
        \draw [gray,dashed] (-0.5,-0.8) -- (-0.5,3.8);
        \draw [gray,dashed] (1.5,-0.8) -- (1.5,3.8);
        \draw [gray,dashed] (-0.8,1.5) -- (3.8,1.5);
        \draw [gray,dashed] (-0.8,-0.5) -- (3.8,-0.5);
        \draw [gray,dashed] (-0.8,3.5) -- (3.8,3.5);
        \draw [gray,dashed] (3.5,-0.8) -- (3.5,3.8);
      }
    }
\end{tikzpicture}} & ($E$ and $V$) &
    $\begin{pmatrix}
    1 & x & 0 & 0 & x & 1 & 0 & 0 & 1 \\
    0 & 0 & 1 & 1 & 1 & 1 & 0 & 1 & 0 \\
    y^{-1} & 1 & y^{-1} & 1 & 0 & 0 & 1 & 0 & 0 \\
    \hline
    0 & 0 & 0 & 0 & 0 & 0 & 0 & 0 & 1 \\
    0 & 0 & 0 & 0 & 0 & 0 & 0 & 1 & 0 \\
    0 & 0 & 0 & 0 & 0 & 0 & 1 & 0 & 0 
\end{pmatrix}$
    \\
    \end{tabular}
    \caption{Fermionic systems to which our algorithm was applied.}
    \label{tab:fermionic_systems}
\end{table*}

\begin{table}[]
    \centering
    \begin{tabular}{c|c}
        Lattice (qubits/cell) & Unit cell \\
        \hline
        
        Square (2) & \scalebox{0.3}{\begin{tikzpicture}
    \foreach \x in {0,...,1}{
      \foreach \y in {0,...,1}{
        
        \draw (0,-0.8) -- (0,3.8);
        \draw (1,-0.8) -- (1,3.8);
        \draw (2,-0.8) -- (2,3.8);
        \draw (3,-0.8) -- (3,3.8);
        \draw (-0.8,0.5) -- (3.8,0.5);
        \draw (-0.8,2.5) -- (3.8,2.5);
        
        \node[draw,circle,inner sep=2pt,fill] at (2*\x,2*\y+0.5) {};
        \node[draw,circle,inner sep=2pt,fill] at (2*\x+1,2*\y+0.5) {};
        
        \node[align=left] at (0.3,0.5+0.3) {\Large\textcolor{orange}{1}};
        \node[align=left] at (1+0.3,0.5+0.3) {\Large\textcolor{orange}{2}};

        \draw [gray,dashed] (-0.5,-0.8) -- (-0.5,3.8);
        \draw [gray,dashed] (1.5,-0.8) -- (1.5,3.8);
        \draw [gray,dashed] (-0.8,1.5) -- (3.8,1.5);
        \draw [gray,dashed] (-0.8,-0.5) -- (3.8,-0.5);
        \draw [gray,dashed] (-0.8,3.5) -- (3.8,3.5);
        \draw [gray,dashed] (3.5,-0.8) -- (3.5,3.8);
      }
    }
\end{tikzpicture}}\\
        \hline
        Tilted square (2) & \scalebox{0.3}{\begin{tikzpicture}
    \foreach \x in {0,...,1}{
      \foreach \y in {0,...,1}{
        
        \draw (-0.8,0.2) -- (2.8,3.8);
        \draw (0.2,-0.8) -- (3.8,2.8);
        \draw (1.8,-0.8) -- (-0.8,1.8);
        \draw (3.8,-0.8) -- (-0.8,3.8);
        \draw (3.8,-0.8) -- (-0.8,3.8);
        \draw (3.8,1.2) -- (1.2,3.8);
        \draw (-0.8,2.2) -- (0.8,3.8);
        \draw (2.2,-0.8) -- (3.8,0.8);
        \node[draw,circle,inner sep=2pt,fill] at (2*\x+1,2*\y) {};
        \node[draw,circle,inner sep=2pt,fill] at (2*\x,2*\y+1) {};
        
        \node[align=left] at (0.3,1+0.3) {\Large\textcolor{orange}{1}};
        \node[align=left] at (1+0.3,0.3) {\Large\textcolor{orange}{2}};
        
        \draw [gray,dashed] (-0.5,-0.8) -- (-0.5,3.8);
        \draw [gray,dashed] (1.5,-0.8) -- (1.5,3.8);
        \draw [gray,dashed] (-0.8,1.5) -- (3.8,1.5);
        \draw [gray,dashed] (-0.8,-0.5) -- (3.8,-0.5);
        \draw [gray,dashed] (-0.8,3.5) -- (3.8,3.5);
        \draw [gray,dashed] (3.5,-0.8) -- (3.5,3.8);
      }
    }
\end{tikzpicture}}\\
        \hline
        Sq bilayer (2) & \scalebox{0.3}{\begin{tikzpicture}
    \foreach \x in {0,...,1}{
      \foreach \y in {0,...,1}{
        
        \draw (-0.8,0) -- (3.8,0);
        \draw (-0.8,1) -- (3.8,1);
        \draw (-0.8,2) -- (3.8,2);
        \draw (-0.8,3) -- (3.8,3);
        \draw (0,-0.8) -- (0,3.8);
        \draw (1,-0.8) -- (1,3.8);
        \draw (2,-0.8) -- (2,3.8);
        \draw (3,-0.8) -- (3,3.8);
        \draw (0,1) -- (1,0);
        \draw (0,3) -- (1,2);
        \draw (2,1) -- (3,0);
        \draw (2,3) -- (3,2);
        \node[draw,circle,inner sep=2pt,fill] at (2*\x+1,2*\y) {};
        \node[draw,circle,inner sep=2pt,fill] at (2*\x,2*\y+1) {};
        
        \node[align=left] at (0.3,1+0.3) {\Large\textcolor{orange}{1}};
        \node[align=left] at (1+0.3,0.3) {\Large\textcolor{orange}{2}};
        
        \draw [gray,dashed] (-0.5,-0.8) -- (-0.5,3.8);
        \draw [gray,dashed] (1.5,-0.8) -- (1.5,3.8);
        \draw [gray,dashed] (-0.8,1.5) -- (3.8,1.5);
        \draw [gray,dashed] (-0.8,-0.5) -- (3.8,-0.5);
        \draw [gray,dashed] (-0.8,3.5) -- (3.8,3.5);
        \draw [gray,dashed] (3.5,-0.8) -- (3.5,3.8);
      }
    }
\end{tikzpicture}}\\
        \hline
        Hex (2) & \scalebox{0.3}{\begin{tikzpicture}
    \foreach \x in {0,...,1}{
      \foreach \y in {0,...,1}{
        
        \draw (2.2,3.8) -- (3.8,2.2);
        
        \draw (-0.8,2.8) -- (2.8,-0.8);
        
        \draw (0.2,3.8) -- (3.8,0.2);
        
        \draw (-0.8,0.8) -- (0.8,-0.8);
        
        \draw (2*\x,2*\y) -- (2*\x+1,2*\y+1);
        
        \node[draw,circle,inner sep=2pt,fill] at (2*\x,2*\y) {};
        \node[draw,circle,inner sep=2pt,fill] at (2*\x+1,2*\y+1) {};
        
        \node[align=left] at (-0.3,-0.3) {\Large\textcolor{orange}{1}};
        \node[align=left] at (1+0.3,1+0.3) {\Large\textcolor{orange}{2}};
        
        \draw [gray,dashed] (-0.5,-0.8) -- (-0.5,3.8);
        \draw [gray,dashed] (1.5,-0.8) -- (1.5,3.8);
        \draw [gray,dashed] (-0.8,1.5) -- (3.8,1.5);
        \draw [gray,dashed] (-0.8,-0.5) -- (3.8,-0.5);
        \draw [gray,dashed] (-0.8,3.5) -- (3.8,3.5);
        \draw [gray,dashed] (3.5,-0.8) -- (3.5,3.8);
      }
    }
\end{tikzpicture}}\\
        \hline
        Triang (2) & \scalebox{0.3}{\begin{tikzpicture}
    \foreach \x in {0,...,1}{
      \foreach \y in {0,...,1}{
        
        \draw (0,-0.8) -- (0,3.8);
        \draw (1,-0.8) -- (1,3.8);
        \draw (2,-0.8) -- (2,3.8);
        \draw (3,-0.8) -- (3,3.8);
        \draw (-0.8,0.5) -- (3.8,0.5);
        \draw (-0.8,2.5) -- (3.8,2.5);
        \draw (-0.65,-0.8) -- (1.65,3.8);
        \draw (-0.8,0.85) -- (.65,3.8);
        \draw (0.35,-0.8) -- (2.65,3.8);
        \draw (1.35,-0.8) -- (3.65,3.8);
        \draw (2.35,-0.8) -- (3.65,1.85);
        \node[draw,circle,inner sep=2pt,fill] at (2*\x,2*\y+0.5) {};
        \node[draw,circle,inner sep=2pt,fill] at (2*\x+1,2*\y+0.5) {};
        
        \node[align=left] at (0.3,0.5+0.3) {\Large\textcolor{orange}{1}};
        \node[align=left] at (1+0.3,0.5+0.3) {\Large\textcolor{orange}{2}};
        
        \draw [gray,dashed] (-0.5,-0.8) -- (-0.5,3.8);
        \draw [gray,dashed] (1.5,-0.8) -- (1.5,3.8);
        \draw [gray,dashed] (-0.8,1.5) -- (3.8,1.5);
        \draw [gray,dashed] (-0.8,-0.5) -- (3.8,-0.5);
        \draw [gray,dashed] (-0.8,3.5) -- (3.8,3.5);
        \draw [gray,dashed] (3.5,-0.8) -- (3.5,3.8);
      }
    }
\end{tikzpicture}}\\
        \hline
        Lieb lattice (3) & \scalebox{0.3}{\begin{tikzpicture}
    \foreach \x in {0,...,1}{
      \foreach \y in {0,...,1}{
        
        \draw (0,-0.8) -- (0,3.8);
        \draw (2,-0.8) -- (2,3.8);
        \draw (-0.8,0) -- (3.8,0);
        \draw (-0.8,2) -- (3.8,2);
        \node[draw,circle,inner sep=2pt,fill] at (2*\x,2*\y) {};
        \node[draw,circle,inner sep=2pt,fill] at (2*\x+1,2*\y) {};
        \node[draw,circle,inner sep=2pt,fill] at (2*\x,2*\y+1) {};
        
        \node[align=left] at (0+0.3,1+0.3) {\Large\textcolor{orange}{1}};
        \node[align=left] at (0+0.3,0+0.3) {\Large\textcolor{orange}{2}};
        \node[align=left] at (1+0.3,0+0.3) {\Large\textcolor{orange}{3}};
        
        \draw [gray,dashed] (-0.5,-0.8) -- (-0.5,3.8);
        \draw [gray,dashed] (1.5,-0.8) -- (1.5,3.8);
        \draw [gray,dashed] (-0.8,1.5) -- (3.8,1.5);
        \draw [gray,dashed] (-0.8,-0.5) -- (3.8,-0.5);
        \draw [gray,dashed] (-0.8,3.5) -- (3.8,3.5);
        \draw [gray,dashed] (3.5,-0.8) -- (3.5,3.8);
      }
    }
\end{tikzpicture}}\\
        \hline
        Rhombile (3) & \scalebox{0.3}{\begin{tikzpicture}
    
    \foreach \x in {0,...,1}{
      \foreach \y in {0,...,1}{
        
        \draw (2*\x,2*\y+0.125*1.414213) -- (2*\x+0.5,2*\y+0.5*1.414213+0.125*1.414213);
        \draw (2*\x+1,2*\y+0.125*1.414213) -- (2*\x+0.5,2*\y+0.5*1.414213+0.125*1.414213);
        \draw (2*\x+0.5,0.5*1.414213+0.125*1.414213) -- (2*\x,2+0.125*1.414213);
        \draw (1,2*\y+0.125*1.414213) -- (2+0.5,2*\y+0.5*1.414213+0.125*1.414213);
        \draw (0,2+0.125*1.414213) -- (2+0.5,0.5*1.414213+0.125*1.414213);
        \draw (1,2+0.125*1.414213) -- (2+0.5,0.5*1.414213+0.125*1.414213);
        
        \draw (-0.8,2) -- (0.5,0.5*1.414213+0.125*1.414213);
        \draw (-0.8,1.54) -- (0.5,0.5*1.414213+0.125*1.414213);
        
        \draw (-0.6,3.8) -- (0.5,2+0.5*1.414213+0.125*1.414213);
        \draw (-0.8,3.54) -- (0.5,2+0.5*1.414213+0.125*1.414213);
        
        \draw (1.4,3.8) -- (2+0.5,2+0.5*1.414213+0.125*1.414213);
        \draw (0.8,3.8) -- (2+0.5,2+0.5*1.414213+0.125*1.414213);
        
        \draw (2,2+0.125*1.414213) -- (3.8,1.2);
        \draw (2+1,2+0.125*1.414213) -- (3.8,1.45);
        
        \draw (0,0.125*1.414213) -- (1.8,-0.8);
        \draw (1,0.125*1.414213) -- (2,-0.8);
        
        \draw (2,0.125*1.414213) -- (3.8,-0.8);
        \draw (2+1,0.125*1.414213) -- (3.8,-0.6);
        
        \draw (2*\x,0.125*1.414213) -- (2*\x+0.4,-0.8);
        
        \draw (2*\x+0.5,2+0.5*1.414213+0.125*1.414213) -- (2*\x+0.2,3.8);
        
        \draw (0.5,2*\y+0.5*1.414213+0.125*1.414213) -- (-0.8,2*\y +0.4);
        
        \draw (2+1,2*\y+0.125*1.414213) -- (3.8,2*\y +0.6);
        
        \draw (2.8,3.8) -- (3.8,3.3);
        
        \draw (-0.8,-0.1) -- (0.1,-0.8);
        
        \node[draw,circle,inner sep=2pt,fill] at (2*\x,2*\y+0.125*1.414213) {}; 
        \node[draw,circle,inner sep=2pt,fill] at (2*\x+1,2*\y+0.125*1.414213) {}; 
        \node[draw,circle,inner sep=2pt,fill] at (2*\x+0.5,2*\y+0.5*1.414213+0.125*1.414213) {}; 
        
        \node[align=left] at (-0.3,-0.3+0.125*1.414213) {\Large\textcolor{orange}{1}};
        \node[align=left] at (1.2,-0.4+0.125*1.414213) {\Large\textcolor{orange}{2}};
        \node[align=left] at (0.4+0.5,0.5*1.414213+0.125*1.414213) {\Large\textcolor{orange}{3}};

        \draw [gray,dashed] (-0.5,-0.8) -- (-0.5,3.8);
        \draw [gray,dashed] (1.5,-0.8) -- (1.5,3.8);
        \draw [gray,dashed] (-0.8,1.5) -- (3.8,1.5);
        \draw [gray,dashed] (-0.8,-0.5) -- (3.8,-0.5);
        \draw [gray,dashed] (-0.8,3.5) -- (3.8,3.5);
        \draw [gray,dashed] (3.5,-0.8) -- (3.5,3.8);
      }
    }
\end{tikzpicture}}\\
        \hline
        Truncated sq (4) &\scalebox{0.3}{\begin{tikzpicture}
    \foreach \x in {0,...,1}{
      \foreach \y in {0,...,1}{
        
        \draw (0.5,-0.8) -- (0.5,0);
        \draw (2.5,-0.8) -- (2.5,0);
        \draw (0.5,3) -- (0.5,3.8);
        \draw (2.5,3) -- (2.5,3.8);
        \draw (-0.8,0.5) -- (0,0.5);
        \draw (-0.8,2.5) -- (0,2.5);
        \draw (3,0.5) -- (3.8,0.5);
        \draw (3,2.5) -- (3.8,2.5);
        \draw (0.5,1) -- (0.5,2);
        \draw (2.5,1) -- (2.5,2);
        \draw (1,0.5) -- (2,0.5);
        \draw (1,2.5) -- (2,2.5);
        
        \draw (0.5,0) -- (0,0.5);
        \draw (0.5,0) -- (1,0.5);
        \draw (0.5,1) -- (1,0.5);
        \draw (0.5,1) -- (0,0.5);
        
        \draw (0.5,2) -- (0,2.5);
        \draw (0.5,2) -- (1,2.5);
        \draw (0.5,3) -- (1,2.5);
        \draw (0.5,3) -- (0,2.5);
        
        \draw (2.5,0) -- (2,0.5);
        \draw (2.5,0) -- (3,0.5);
        \draw (2.5,1) -- (3,0.5);
        \draw (2.5,1) -- (2,0.5);
        
        \draw (2.5,2) -- (2,2.5);
        \draw (2.5,2) -- (3,2.5);
        \draw (2.5,3) -- (3,2.5);
        \draw (2.5,3) -- (2,2.5);
        
        \node[draw,circle,inner sep=2pt,fill] at (0.5+2*\x,0+2*\y) {};
        \node[draw,circle,inner sep=2pt,fill] at (0+2*\x,0.50+2*\y) {};
        \node[draw,circle,inner sep=2pt,fill] at (1+2*\x,0.50+2*\y) {};
        \node[draw,circle,inner sep=2pt,fill] at (0.5+2*\x,1+2*\y) {};
        
        \node[align=left] at (0.5+0.3,0-0.3) {\Large\textcolor{orange}{1}};
        \node[align=left] at (1+0.3,0.5-0.3) {\Large\textcolor{orange}{2}};
        \node[align=left] at (0-0.3,0.5+0.3) {\Large\textcolor{orange}{3}};
        \node[align=left] at (0.5-0.3,1+0.3) {\Large\textcolor{orange}{4}};
        
        \draw [gray,dashed] (-0.5,-0.8) -- (-0.5,3.8);
        \draw [gray,dashed] (1.5,-0.8) -- (1.5,3.8);
        \draw [gray,dashed] (-0.8,1.5) -- (3.8,1.5);
        \draw [gray,dashed] (-0.8,-0.5) -- (3.8,-0.5);
        \draw [gray,dashed] (-0.8,3.5) -- (3.8,3.5);
        \draw [gray,dashed] (3.5,-0.8) -- (3.5,3.8);
      }
    }
\end{tikzpicture}} \\
        \hline
        Hex bilayer (4) & \scalebox{0.3}{\begin{tikzpicture}
    \foreach \x in {0,...,1}{
      \foreach \y in {0,...,1}{
        
        \draw (-0.8,0.2+.3333*1) -- (2.8-.3333,3.8);
        \draw (0.2-.3333,-0.8) -- (3.8,2.8+.3333);
        \draw (-0.8,2.2+.3333) -- (0.8-.3333,3.8);
        \draw (2.2-.3333,-0.8) -- (3.8,0.8+.3333);
        
        \draw (0,1-.3333) -- (1,0-.3333);
        \draw (0,3-.3333) -- (1,2-.3333);
        \draw (2,1-.3333) -- (3,0-.3333);
        \draw (2,3-.3333) -- (3,2-.3333);
        
        \draw (-0.8,0.2-0.3333) -- (2.8+.3333,3.8);
        \draw (0.2+.3333,-0.8) -- (3.8,2.8-.3333);
        \draw (-0.8,2.2-.3333) -- (0.8+.3333,3.8);
        \draw (2.2+.3333,-0.8) -- (3.8,0.8-.3333);
        
        \draw (0,1+0.3333) -- (1,0+0.3333);
        \draw (0,3+0.3333) -- (1,2+0.3333);
        \draw (2,1+0.3333) -- (3,0+0.3333);
        \draw (2,3+0.3333) -- (3,2+0.3333);
        
        \draw (2*\x+1,2*\y+0.33) -- (2*\x+1,2*\y-0.33);
        \draw (2*\x,2*\y+1.33) -- (2*\x,2*\y+0.67);
        
        \node[draw,circle,inner sep=2pt,fill] at (2*\x+1,2*\y+0.33) {};
        \node[draw,circle,inner sep=2pt,fill] at (2*\x,2*\y+1.33) {};
        \node[draw,circle,inner sep=2pt,fill] at (2*\x+1,2*\y-0.33) {};
        \node[draw,circle,inner sep=2pt,fill] at (2*\x,2*\y+0.67) {};
        
        \node[align=left] at (-0.3,1-0.3) {\Large\textcolor{orange}{1}};
        \node[align=left] at (1-0.3,0-0.3) {\Large\textcolor{orange}{2}};
        \node[align=left] at (-0.3,1.33+0.3) {\Large\textcolor{orange}{3}};
        \node[align=left] at (1,.33+0.4) {\Large\textcolor{orange}{4}};
        
        \draw [gray,dashed] (-0.5,-0.8) -- (-0.5,3.8);
        \draw [gray,dashed] (1.5,-0.8) -- (1.5,3.8);
        \draw [gray,dashed] (-0.8,1.5) -- (3.8,1.5);
        \draw [gray,dashed] (-0.8,-0.5) -- (3.8,-0.5);
        \draw [gray,dashed] (-0.8,3.5) -- (3.8,3.5);
        \draw [gray,dashed] (3.5,-0.8) -- (3.5,3.8);
      }
    }
\end{tikzpicture}}\\
        \hline
        Sq bilayer (4) & \scalebox{0.3}{\begin{tikzpicture}
    \foreach \x in {0,...,1}{
      \foreach \y in {0,...,1}{
        
        \draw (2*\x+0,-0.8) -- (2*\x+0,3.8);
        \draw (2*\x+.33,-0.8) -- (2*\x+.33,3.8);
        \draw (2*\x+.67,-0.8) -- (2*\x+.67,3.8);
        \draw (2*\x+1,-0.8) -- (2*\x+1,3.8);
        \draw (-0.8,2*\y+0) -- (3.8,2*\y+0);
        \draw (-0.8,2*\y+1) -- (3.8,2*\y+1);
        
        \draw (2*\x+.33,2*\y) -- (2*\x,2*\y+1);
        \draw (2*\x+1,2*\y) -- (2*\x+0.67,2*\y+1);
        
        \node[draw,circle,inner sep=2pt,fill] at (2*\x+.33,2*\y) {};
        \node[draw,circle,inner sep=2pt,fill] at (2*\x+1,2*\y) {};
        \node[draw,circle,inner sep=2pt,fill] at (2*\x,2*\y+1) {};
        \node[draw,circle,inner sep=2pt,fill] at (2*\x+0.67,2*\y+1) {};
        
        \node[align=left] at (0.33-0.25,0-0.3) {\Large\textcolor{orange}{1}};
        \node[align=left] at (1+0.3,0-0.3) {\Large\textcolor{orange}{2}};
        \node[align=left] at (0-0.3,1+0.3) {\Large\textcolor{orange}{3}};
        \node[align=left] at (1.2,1+0.3) {\Large\textcolor{orange}{4}};
        
        \draw [gray,dashed] (-0.5,-0.8) -- (-0.5,3.8);
        \draw [gray,dashed] (1.5,-0.8) -- (1.5,3.8);
        \draw [gray,dashed] (-0.8,1.5) -- (3.8,1.5);
        \draw [gray,dashed] (-0.8,-0.5) -- (3.8,-0.5);
        \draw [gray,dashed] (-0.8,3.5) -- (3.8,3.5);
        \draw [gray,dashed] (3.5,-0.8) -- (3.5,3.8);
      }
    }
\end{tikzpicture}}\\
        \hline
        Snub sq (4) & \scalebox{0.3}{\begin{tikzpicture}
    \foreach \x in {0,...,1}{
      \foreach \y in {0,...,1}{
        
        \draw (1,0) -- (2,1);
        \draw (1,2) -- (2,3);
        \draw (-0.8,0.2) -- (0,1);
        \draw (-0.8,2.2) -- (0,3);
        \draw (3,0) -- (3.8,0.8);
        \draw (3,2) -- (3.8,2.8);
        
        \draw (1,1) -- (0,2);
        \draw (3,1) -- (2,2);
        \draw (0.8,-0.8) -- (0,0);
        \draw (2.8,-0.8) -- (2,0);
        \draw (1,3) -- (0.2,3.8);
        \draw (3,3) -- (2.2,3.8);
        
        \draw (-0.8,2*\y) -- (3.8,2*\y);
        \draw (-0.8,2*\y+1) -- (3.8,2*\y+1);
        \draw (2*\x,-0.8) -- (2*\x,3.8);
        \draw (2*\x+1,-0.8) -- (2*\x+1,3.8);
        
        \node[draw,circle,inner sep=2pt,fill] at (2*\x,2*\y) {};
        \node[draw,circle,inner sep=2pt,fill] at (2*\x,2*\y+1) {};
        \node[draw,circle,inner sep=2pt,fill] at (2*\x+1,2*\y) {};
        \node[draw,circle,inner sep=2pt,fill] at (2*\x+1,2*\y+1) {};
        
        \node[align=left] at (0-0.3,0.3) {\Large\textcolor{orange}{1}};
        \node[align=left] at (1-0.3,0+0.3) {\Large\textcolor{orange}{2}};
        \node[align=left] at (0-0.3,1+0.3) {\Large\textcolor{orange}{3}};
        \node[align=left] at (1+0.3,1+0.3) {\Large\textcolor{orange}{4}};
        
        \draw [gray,dashed] (-0.5,-0.8) -- (-0.5,3.8);
        \draw [gray,dashed] (1.5,-0.8) -- (1.5,3.8);
        \draw [gray,dashed] (-0.8,1.5) -- (3.8,1.5);
        \draw [gray,dashed] (-0.8,-0.5) -- (3.8,-0.5);
        \draw [gray,dashed] (-0.8,3.5) -- (3.8,3.5);
        \draw [gray,dashed] (3.5,-0.8) -- (3.5,3.8);
    
      }
    }
\end{tikzpicture}}\\
        \hline
        Heavy hex (5) & \scalebox{0.3}{\begin{tikzpicture}
    \foreach \x in {0,...,1}{
      \foreach \y in {0,...,1}{
        
        
        \draw (2*\x+0.3,2*\y-0.3) -- (2*\x+1.3,2*\y+0.7);
        
        \draw (0.8,-0.8) -- (-0.8,0.8);
        \draw (2.8,-0.8) -- (-0.8,2.8);
        \draw (3.8,0.2) -- (0.2,3.8);
        \draw (3.8,2.2) -- (2.2,3.8);
        
        \node[draw,circle,inner sep=2pt,fill] at (2*\x+0.8,2*\y+0.2) {};
        \node[draw,circle,inner sep=2pt,fill] at (2*\x+0.3,2*\y-0.3) {};
        \node[draw,circle,inner sep=2pt,fill] at (2*\x+1.3,2*\y+0.7) {};
        \node[draw,circle,inner sep=2pt,fill] at (2*\x+0.8,2*\y+1.2) {};
        \node[draw,circle,inner sep=2pt,fill] at (2*\x-0.2,2*\y+0.2) {};
        
        \node[align=left] at (-0.2,0.6) {\Large\textcolor{orange}{1}};
        \node[align=left] at (0,-0.4) {\Large\textcolor{orange}{2}};
        \node[align=left] at (1.2,-0.1) {\Large\textcolor{orange}{3}};
        \node[align=left] at (1.7,1) {\Large\textcolor{orange}{4}};
        \node[align=left] at (0.5,1) {\Large\textcolor{orange}{5}};

        \draw [gray,dashed] (-0.5,-0.8) -- (-0.5,3.8);
        \draw [gray,dashed] (1.5,-0.8) -- (1.5,3.8);
        \draw [gray,dashed] (-0.8,1.5) -- (3.8,1.5);
        \draw [gray,dashed] (-0.8,-0.5) -- (3.8,-0.5);
        \draw [gray,dashed] (-0.8,3.5) -- (3.8,3.5);
        \draw [gray,dashed] (3.5,-0.8) -- (3.5,3.8);
      }
    }
\end{tikzpicture}}\\
    \end{tabular}
    \caption{Hardware graphs and unit cells}
    \label{tab:hardware graphs}
\end{table}

\subsection{Newly identified encodings}
\label{sec:new_codes}

Given the constraints of maximum weight 3 and maximum graph cost 2, our searches over the fermionic systems in (Table \ref{tab:fermionic_systems}) and hardware graphs (Table \ref{tab:hardware graphs}) returned 25 encodings which are collected in Appendix \ref{sec:codes}. We have inspected each encoding returned and tried to discern whether the encoding could be reasonably considered to belong to an existing family of encodings, up to conjugation by single-qubit Clifford unitaries. To the best of our knowledge, 14 of these are new. 

Our algorithm returned encodings of a Kagome lattice with 3 fermionic modes per unit cell with edge and vertex operators onto two choices of 4 qubit unit cells for the truncated square lattice. These encodings had qubit/mode ratios of $4/3$ while still having all operators of weight 3 or below. This is particularly surprising given the relatively low connectivity of the truncated square lattice - each qubit is coupled to only 3 others.

While all encoded operators we allowed in our search had graph costs of 2 or less, the lowest average graph cost our search returned was an encoding of fermions on the hexagonal lattice onto the snub square qubit lattice at $1.2$. All of the operators we requested were encoded onto the lattice acting on nearest-neighbor pairs of qubits except for one 3-qubit edge operator. It is perhaps not surprising that the hexagonal lattice, which requires comparatively few edge operators per unit cell is efficiently encoded onto a high-connectivity lattice such as the snub square.

It is also noteworthy that a number of encodings that were found did not make use of all available qubits. In some cases, such as the encodings of the square lattice of fermions onto the bilayer geometries, the search was able to find a sufficient encoding onto just one of the two layers. In other cases, such as the encoding of the hexagonal lattice onto the heavy hex lattice, the search found that it was best to use just three qubits per unit cell lying on edges of the hexagon in only one orientation.

It is also interesting to observe the effect of the choice of unit cell on the ability of the search to find an encoding within the specified limits on operator weight and cost. We considered two different unit cells for Kagome lattices of fermionic modes. In both cases, previously unknown encodings were returned by the search, however for one of the two choices, an encoding onto the heavy hex lattice was returned,  again for a chosen unit cell.

\subsection{Optimality for previously known encodings}

While 14 of the encodings our search found were new, 9 of them were instances of another encoding family. 

First, for the square lattice with one mode per unit cell and specified operators consisting of horizontal and vertical operators and single vertex operator, the following graphs containing 2 qubits per unit cell returned an instance of a GSE encoding: sq (2), sq bilayer (2), tilted sq (2), hex (2), and triang (2). Our algorithm also constructed encodings on qubit lattices with 4 modes per unit cell for this system that did not use all the available qubits. In particular for this system a GSE-style encoding was generated for the 4 qubit per cell hex bilayer (4) and sq bilayer (4). In these two instances, two of the four qubits per cell were unused.

Recall that this implies that of the class of encodings with encoded operators invariant under translation under one unit cell, the GSE provides an optimal encoding given the cost metric discussed previously.

Other previously known mappings that were returned include a mapping of the tilted square lattice onto the Lieb lattice (also known as the decorated square). This mapping was equivalent up to single-qubit Cliffords to the Derby-Klassen Compact encoding.

One interesting encoding generated by our algorithm was of the triang (1) onto the tilted sq (2). This mapping lies outside the family of those generated by naively applying the GSE to the triangular lattice, as this would require a 3 qubits per fermionic mode. Upon closer inspection, one finds that the diagonal edge operators in the $xy$-direction are actually a product of edge operators in the $y$-direction and the $x$-direction however translated by one unit cell.

\section{Error-detecting codes}

An $[[n,k,d]]$ stabilizer code is characterized by the number of physical qubits $n$, the number of logical qubits encoded $k$, and the code distance $d$. The number of physical qubits minus the number of logical qubits is equal to the number of independent generators for the stabilizer group of the code, $|S| = 2^{n-k}$. The code distance is the minimum number of qubits that supports an operator which commutes with all elements of the stabilizer group and which is not itself in the stabilizer group. 

As mentioned above, local fermionic encodings in greater than one dimension are also a type of stabilizer code. They generically use more physical qubits than they encode fermionic modes and simulation takes place within a subspace defined as the common $+1$ eigenspace of an abelian group of Pauli operators. Typically, and as is the case here, a primary goal in designing a fermionic encodings is to identify a mapping which allows for the shortest circuit depth possible. This requires operators which are of low weight, ideally constant in the system size. This is fundamentally opposed to the features of codes sought out in quantum error correction where a high code distance is desirable to ensure logical qubits are well protected from errors. 

Nevertheless, some encodings which are optimal in the sense described in the previous sections, happen to have non-trivial code distances. An encoding with a code distance of $d$ (by definition) has the property that any Pauli operator of weight up to $d-1$ anticommutes with at least one stabilizer generator. It is not enough to check that all of the encoded edge operators, vertex operators, and their products have weight at least $d$, as it is possible that some Pauli operators commute with all the stabilizer group elements but do not correspond to even parity fermion operators.

The generators of the stabilizer group can be identified by computing generators of the kernel of $\tau$, either by Gaussian elimination in the case of a binary matrix, or computing syzygys in the case of a polynomial matrix (see Appendix \ref{sec:kernel}), and then applying the map $\sigma$ to find the associated Pauli operators. In the case where the kernel of $\sigma$ and the kernel of $\tau$ overlap, this generating set may be overcomplete, and include the identity.

Once a collection of stabilizer generators, $ \{s_0, s_1,...\}$, have been found that, along with their translations, generate the entire stabilizer group, we can collect the operators into a new object,
\begin{equation}
    S = \begin{pmatrix}
    s_0 & s_1 & ...
    \end{pmatrix}.
\end{equation}
As logical operators commute with all elements of the stabilizer group, they will return $0$ for their commutation value with each of $ \{s_0, s_1,...\}$. That is, they are elements of $\ker S^{\dag}\Lambda_Q$.

It suffices to check that each single-qubit operator acting on the reference cell has nonzero commutation relations with at least one of $ \{s_0, s_1,...\}$. If it can be verified that for each of the $3n$ such Pauli operators, $p \notin \ker S^{\dag}\Lambda_Q$, then the encoding provides a distance $2$ stabilizer code. Further, $S^{\dag}\Lambda_Q p$ is a column vector describing the syndrome pattern for the single-qubit error. The $i$th entry of the vector gives the translations of $s_i$ which anticommute with the operator given by $p$.

It is possible that the optimal encoding for a given hardware graph consists of operators that do not act on all of the qubits in each cell. See for example the encoding of the square lattice fermion system onto the 4-qubit/cell hex bilayer in Appendix \ref{sec:codes}. In this case, single-qubit operators on the unused qubits will commute with each element of the stabilizer group but will have a trivial action on the logical information. If all single qubit errors on all the qubits used by the encoding are detected, then the encoding consists of an error detecting code tensored with idle qubits. Despite the fact that the condition $p\notin \ker S^{\dag}\Lambda$ for all single-qubit $p$ is not strictly obeyed, the encoding can still be regarded as an error detecting code. It is also straightforward to check whether there are unused qubits.

The algorithm presented here allows for a simple filtering of encodings for error detecting codes. We will only focus on the simplest case of distance 2 codes. Firstly, by definition, in a distance 2 code, all logical operators are weight 2 or higher. Therefore, if our search is targeting error detecting codes, we will limit the Pauli operators we consider as encoded fermion operators to have minimum weight 2. When an encoding is found that satisfies the requisite (anti)-commutation conditions, we check that it is an error detecting code given the definition above. If it is, we accept it and if not, we discard it and continue the search.

For the purposes of this work we have opted to eschew writing the algorithm for computing syzygy's and instead compute the stabilizers of the encoding populated out to the 8 cells surrounding the unit cell. Although this does not in general guarantee all elements of the stabilizer group are computed, for the simple instances we consider we feel confident that this method suffices. Furthermore, in any case where we claim that error detection is achieved, any further stabilizers found would only strengthen this result.

\section{Discussion}
We have presented a method for searching over an extremely broad family of fermionic encodings in order to find encodings tailored both to particular fermionic algebras and to particular hardware. The class of fermionic encodings we search over is all second quantized encodings that constitute algebra homomorphisms and that map from Majorana monomials to Pauli operators. An essential element of this method is a clear formulation of the search criteria in terms of binary matrix equations and a rigorous argument for why solutions to such equations always yield valid encodings, and how all encodings within this class fall within this framework. To this search criteria we have applied fairly well known brute force search methods to optimize for our particular cost model. We expect these search methods are likely ripe for further optimizations. The cost model we consider is meant merely to demonstrate the power of this method, and also to satisfy our curiosity about how compactly we can represent operators on various systems. The search method may also be applied to many alternative cost models not explored here. However it is important to emphasize that in order to leverage many of the optimizations we have employed to improve runtime, one needs to be careful about how the cost of an individual Pauli assignment to a particular privileged fermionic operator relates to the total cost of the encoding.   

We have explained how our methods can be applied to translationally invariant systems with only minor modifications. This has the significant benefit of reducing the algorithmic cost of finding encodings for tiled fermionic and qubit systems, which is likely to be the most common use case. We would like to emphasize however that the encodings found using these methods need not require that the Hamiltonian that is simulated be translationally invariant, just that the privileged elements of the algebra used to generate the Hamiltonian be translationally invariant. For example coefficients of the Hamiltonian may vary however one wishes.

One feature of our method which we would like to strongly emphasize is its ability to leverage symmetries in the target algebra. As we have discussed, it may be the case that a particular system has symmetries which implies the algebra needed to represent the Hamiltonian is smaller than the whole fermionic algebra or the even fermionic algebra. Our method is able to automatically incorporate this fact, unlike many existing strategies which will often search for new encodings by applying variations on an established encoding that does not take advantage of these symmetries. However our method can only leverage those symmetries which appear in the group structure of the monomials, i.e. those symmetries which mean that certain monomials can not be constructed from products of the privileged set under consideration. Some symmetries do not manifest in this way, and so our method does not generically leverage all possible symmetries.

We have applied our method to the standard fermionic algebra of edge and vertex operators on various planar lattices as well as a square bilayer lattice, and we have also applied the method to the Fermi-Hubbard algebra on the square lattice. For these algebras we have found optimal encodings on planar and non-planar hardware layouts, inspired by current and prospective designs for superconducting chips. These are given in Appendix \ref{sec:codes}. Our method has been able identify previously unknown encodings, as well as rediscover known encodings, certifying their optimality according to our cost function. 

Since the encodings we consider are stabilizer codes, we have included a method of checking their ability to detect errors, and have certified the error detecting capabilities of many of the encodings found. We foresee that the methods described here could be generalized to further tailor fermionic encodings to the particular noise profile of the hardware.

Beyond merely finding codes for particular algebras and hardware layouts, our method could also be used as a subroutine in the design of new hardware layouts tailored to a particular simulation task, or for searching for simulation tasks particularly well suited to a given hardware layout. Both of these use cases seem particularly relevant in the search for practical applications of quantum computers. 

The runtime of our method can be extremely prohibitive, and it remains to be seen whether further improvements -- whether they be in the design of the search algorithm or in the details of the implementation -- could bring these runtimes down far enough to make these techniques more useful. The systems we have looked at push the boundary of what we were able to compute in a reasonable time frame, and we don't expect many more instances of this size remain that would be worth looking at. If one could improve performance sufficiently to even marginally increase the number of possible privileged operators and the number of qubits per unit cell, the space of possible problem instances could open up dramatically and these methods could become more useful. We feel optimistic that this is possible.

\textit{Acknowledgements} The authors thank Charles Derby, Toby Cubitt and James Whitfield for helpful conversations and feedback. This work was supported by Innovate UK [grant numbers 44167, 76963].


\appendix
\onecolumn

\clearpage

\section{Summary of generated encodings}\label{sec:codes}

\subsection{How to read the following tables}

For a specified fermionic unit cell with $m$ modes and and collection of $J$ operators specified as a $2m\times J$ array of polynomials as well as qubit layout grid with $n$ qubits per unit cell, the output is a $2n\times J$ array of polynomials. As an example, we illustrate the encoding of a fermionic system on a square lattice with specified operators $E_{0y}\propto\gamma_0\gamma_y$, $E_{0x}\propto\gamma_0\gamma_x$, and $V_0\propto\gamma_0\overline{\gamma}_0$ into the qubit operators $Z_{0,1}Z_{0,2}X_{y,1}$, $Z_{0,1}X_{0,2}Y_{x,1}$, and $Z_{0,1}Y_{0,2}$ respectively.

\begin{center}
    \includegraphics[width=0.8\textwidth]{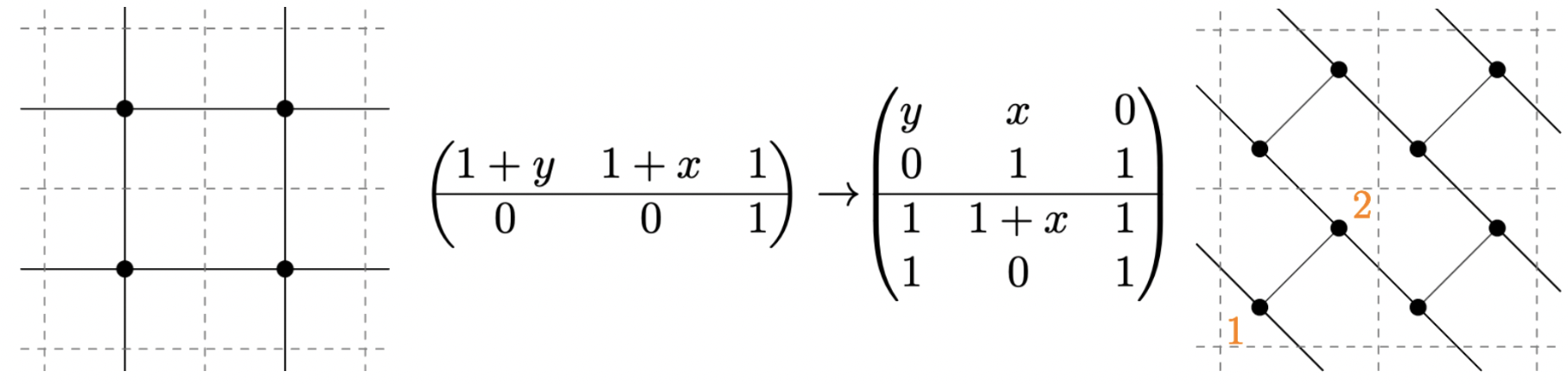}
\end{center}

To construct the encoded Pauli operator duals of the fermionic operators, match up the columns of the output polynomial array with the input polynomial array. Unit cells with more than one mode or qubit have the vertices numbered. Recall that on the fermionic operators side, a $1$ in the 1st entry corresponds to a Majorana $\gamma$ on the first mode in a reference unit cell, whereas a $x$ in the first entry gives a $\gamma$ in the first mode of the unit cell shifted by one in the $x$ direction. Nonzero entries in the $m+1$ through $2m$ entries of a column give the positions of $\overline{\gamma}$s. On the qubit side, the first $n$ entries in a column vector describe the positions of the $X$ operators, whereas the $n+1$ through $2n$ entries give the positions of $Z$s. We also include the stabilizer generators in the bottom-right.

\begin{center}
    \includegraphics[width=0.98\textwidth]{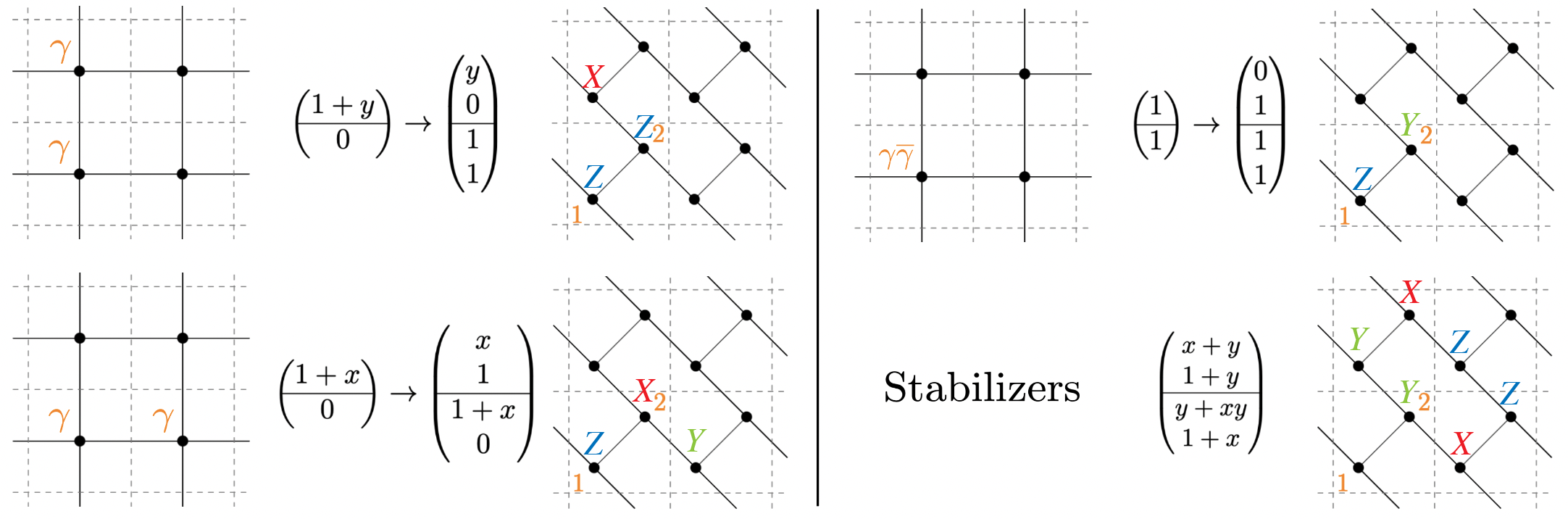}
\end{center}

\textit{Note:} In the following tables, Error detecting?: Yes* indicates that there are unused qubits in each unit cell but the encodings constitutes an error detecting code on the qubits which are used. We have also indicated whether each encoding is a member of an existing family or if it is first presented here.

\newgeometry{top=2cm, left=1cm, right=5mm, bottom=2cm} 

\begin{center}

\begin{tabular}{c|c|c|c}

    Fermionic unit cell & Fermionic operators &   \\
    \hline
    \begin{tabular}{@{}c@{}} 
    Square (1)\\ 
    \scalebox{0.7}{\begin{tikzpicture}
    \foreach \x in {0,...,1}{
      \foreach \y in {0,...,1}{
        
        \draw (0.5,-0.8) -- (0.5,3.8);
        \draw (2.5,-0.8) -- (2.5,3.8);
        \draw (-0.8,0.5) -- (3.8,0.5);
        \draw (-0.8,2.5) -- (3.8,2.5);
        
        \node[draw,circle,inner sep=2pt,fill] at (2*\x+0.5,2*\y+0.5) {};
        
        \draw [gray,dashed] (-0.5,-0.8) -- (-0.5,3.8);
        \draw [gray,dashed] (1.5,-0.8) -- (1.5,3.8);
        \draw [gray,dashed] (-0.8,1.5) -- (3.8,1.5);
        \draw [gray,dashed] (-0.8,-0.5) -- (3.8,-0.5);
        \draw [gray,dashed] (-0.8,3.5) -- (3.8,3.5);
        \draw [gray,dashed] (3.5,-0.8) -- (3.5,3.8);
      }
    }
\end{tikzpicture}}
    \end{tabular} &  & &
    \\

    \hline
    Qubit unit cell & Encoded operators & Stabilizers & Encoding properties\\
    \hline
    
    \begin{tabular}{@{}c@{}} 
    Square (2)\\ 
    \scalebox{0.7}{\begin{tikzpicture}
    \foreach \x in {0,...,1}{
      \foreach \y in {0,...,1}{
        
        \draw (0,-0.8) -- (0,3.8);
        \draw (1,-0.8) -- (1,3.8);
        \draw (2,-0.8) -- (2,3.8);
        \draw (3,-0.8) -- (3,3.8);
        \draw (-0.8,0.5) -- (3.8,0.5);
        \draw (-0.8,2.5) -- (3.8,2.5);
        
        \node[draw,circle,inner sep=2pt,fill] at (2*\x,2*\y+0.5) {};
        \node[draw,circle,inner sep=2pt,fill] at (2*\x+1,2*\y+0.5) {};
        
        \node[align=left] at (0.3,0.5+0.3) {\Large\textcolor{orange}{1}};
        \node[align=left] at (1+0.3,0.5+0.3) {\Large\textcolor{orange}{2}};

        \draw [gray,dashed] (-0.5,-0.8) -- (-0.5,3.8);
        \draw [gray,dashed] (1.5,-0.8) -- (1.5,3.8);
        \draw [gray,dashed] (-0.8,1.5) -- (3.8,1.5);
        \draw [gray,dashed] (-0.8,-0.5) -- (3.8,-0.5);
        \draw [gray,dashed] (-0.8,3.5) -- (3.8,3.5);
        \draw [gray,dashed] (3.5,-0.8) -- (3.5,3.8);
      }
    }
\end{tikzpicture}}
    \end{tabular} & $
\begin{pmatrix}
y & x & 0\\
0 & 1 & 1\\
\hline
1 & 1+x & 1\\
1 & 0 & 1
\end{pmatrix}
$

x+y \\
1+y \\
\hline
y+xy \\
1+x 
\end{pmatrix}$ &
    \begin{tabular}
    {@{}c@{}} 
    Qubits/mode: 2\\
    Max weight: 3\\
    Max graph cost: 2 \\
    Avg. graph cost: 1.67\\
    Error detecting?: Yes\\
    (GSE family \cite{setia2019superfast})
    \end{tabular}
    \\
    \hline
    \begin{tabular}{@{}c@{}} 
    Hex(2)\\ 
    \scalebox{0.7}{\begin{tikzpicture}
    \foreach \x in {0,...,1}{
      \foreach \y in {0,...,1}{
        
        \draw (2.2,3.8) -- (3.8,2.2);
        
        \draw (-0.8,2.8) -- (2.8,-0.8);
        
        \draw (0.2,3.8) -- (3.8,0.2);
        
        \draw (-0.8,0.8) -- (0.8,-0.8);
        
        \draw (2*\x,2*\y) -- (2*\x+1,2*\y+1);
        
        \node[draw,circle,inner sep=2pt,fill] at (2*\x,2*\y) {};
        \node[draw,circle,inner sep=2pt,fill] at (2*\x+1,2*\y+1) {};
        
        \node[align=left] at (-0.3,-0.3) {\Large\textcolor{orange}{1}};
        \node[align=left] at (1+0.3,1+0.3) {\Large\textcolor{orange}{2}};
        
        \draw [gray,dashed] (-0.5,-0.8) -- (-0.5,3.8);
        \draw [gray,dashed] (1.5,-0.8) -- (1.5,3.8);
        \draw [gray,dashed] (-0.8,1.5) -- (3.8,1.5);
        \draw [gray,dashed] (-0.8,-0.5) -- (3.8,-0.5);
        \draw [gray,dashed] (-0.8,3.5) -- (3.8,3.5);
        \draw [gray,dashed] (3.5,-0.8) -- (3.5,3.8);
      }
    }
\end{tikzpicture}}
    \end{tabular} & $
\begin{pmatrix}
y & x & 0\\
0 & 1 & 1\\
\hline
1 & 1+x & 1\\
1 & 0 & 1
\end{pmatrix}
$

x+y \\
1+y \\
\hline
y+xy \\
1+x \\
\end{pmatrix}$ &
    \begin{tabular}
    {@{}c@{}} 
    Qubits/mode: 2\\
    Max weight: 3\\
    Max graph cost: 2 \\
    Avg. graph cost: 1.67\\
    Error detecting?: Yes\\
    (GSE family \cite{setia2019superfast})
    \end{tabular}\\
    
    \hline
    \begin{tabular}{@{}c@{}} 
    Tilted sq (2)\\ 
    \scalebox{0.7}{\begin{tikzpicture}
    \foreach \x in {0,...,1}{
      \foreach \y in {0,...,1}{
        
        \draw (-0.8,0.2) -- (2.8,3.8);
        \draw (0.2,-0.8) -- (3.8,2.8);
        \draw (1.8,-0.8) -- (-0.8,1.8);
        \draw (3.8,-0.8) -- (-0.8,3.8);
        \draw (3.8,-0.8) -- (-0.8,3.8);
        \draw (3.8,1.2) -- (1.2,3.8);
        \draw (-0.8,2.2) -- (0.8,3.8);
        \draw (2.2,-0.8) -- (3.8,0.8);
        \node[draw,circle,inner sep=2pt,fill] at (2*\x+1,2*\y) {};
        \node[draw,circle,inner sep=2pt,fill] at (2*\x,2*\y+1) {};
        
        \node[align=left] at (0.3,1+0.3) {\Large\textcolor{orange}{1}};
        \node[align=left] at (1+0.3,0.3) {\Large\textcolor{orange}{2}};
        
        \draw [gray,dashed] (-0.5,-0.8) -- (-0.5,3.8);
        \draw [gray,dashed] (1.5,-0.8) -- (1.5,3.8);
        \draw [gray,dashed] (-0.8,1.5) -- (3.8,1.5);
        \draw [gray,dashed] (-0.8,-0.5) -- (3.8,-0.5);
        \draw [gray,dashed] (-0.8,3.5) -- (3.8,3.5);
        \draw [gray,dashed] (3.5,-0.8) -- (3.5,3.8);
      }
    }
\end{tikzpicture}}
    \end{tabular} & $
\begin{pmatrix}
0 & x & 1 \\
y & 1 & 0 \\
\hline
1 & 0 & 1\\
1 & 1+x & 1
\end{pmatrix}
$ & $\begin{pmatrix}
x+y \\
1+y \\
\hline
1+x \\
y+xy 
\end{pmatrix}$ &
    \begin{tabular}
    {@{}c@{}} 
    Qubits/mode: 2\\
    Max weight: 3\\
    Max graph cost: 2 \\
    Avg. graph cost: 1.67\\
    Error detecting?: Yes\\
    (GSE family \cite{setia2019superfast})
    \end{tabular}
    \\
    \hline
    \begin{tabular}{@{}c@{}} 
    Triang (2)\\ 
    \scalebox{0.7}{\begin{tikzpicture}
    \foreach \x in {0,...,1}{
      \foreach \y in {0,...,1}{
        
        \draw (0,-0.8) -- (0,3.8);
        \draw (1,-0.8) -- (1,3.8);
        \draw (2,-0.8) -- (2,3.8);
        \draw (3,-0.8) -- (3,3.8);
        \draw (-0.8,0.5) -- (3.8,0.5);
        \draw (-0.8,2.5) -- (3.8,2.5);
        \draw (-0.65,-0.8) -- (1.65,3.8);
        \draw (-0.8,0.85) -- (.65,3.8);
        \draw (0.35,-0.8) -- (2.65,3.8);
        \draw (1.35,-0.8) -- (3.65,3.8);
        \draw (2.35,-0.8) -- (3.65,1.85);
        \node[draw,circle,inner sep=2pt,fill] at (2*\x,2*\y+0.5) {};
        \node[draw,circle,inner sep=2pt,fill] at (2*\x+1,2*\y+0.5) {};
        
        \node[align=left] at (0.3,0.5+0.3) {\Large\textcolor{orange}{1}};
        \node[align=left] at (1+0.3,0.5+0.3) {\Large\textcolor{orange}{2}};
        
        \draw [gray,dashed] (-0.5,-0.8) -- (-0.5,3.8);
        \draw [gray,dashed] (1.5,-0.8) -- (1.5,3.8);
        \draw [gray,dashed] (-0.8,1.5) -- (3.8,1.5);
        \draw [gray,dashed] (-0.8,-0.5) -- (3.8,-0.5);
        \draw [gray,dashed] (-0.8,3.5) -- (3.8,3.5);
        \draw [gray,dashed] (3.5,-0.8) -- (3.5,3.8);
      }
    }
\end{tikzpicture}}
    \end{tabular} & $
\begin{pmatrix}
y & x & 0\\
0 & 1 & 1\\
\hline
1 & 1+x & 1\\
1 & 0 & 1
\end{pmatrix}
$ & $\begin{pmatrix}
x+y \\
1+y \\
\hline
y+xy \\
1+x 
\end{pmatrix}$ &
    \begin{tabular}
    {@{}c@{}} 
    Qubits/mode: 2\\
    Max weight: 3\\
    Max graph cost: 2 \\
    Avg. graph cost: 1.67\\
    Error detecting?: Yes\\
    (GSE family \cite{setia2019superfast})
    \end{tabular}\\

\end{tabular}
\clearpage
\begin{tabular}{c|c|c|c}
    Fermionic unit cell & Fermionic operators & \\
    \hline
    \begin{tabular}{@{}c@{}} 
    Square (1)\\ 
    \scalebox{0.7}{\begin{tikzpicture}
    \foreach \x in {0,...,1}{
      \foreach \y in {0,...,1}{
        
        \draw (0.5,-0.8) -- (0.5,3.8);
        \draw (2.5,-0.8) -- (2.5,3.8);
        \draw (-0.8,0.5) -- (3.8,0.5);
        \draw (-0.8,2.5) -- (3.8,2.5);
        
        \node[draw,circle,inner sep=2pt,fill] at (2*\x+0.5,2*\y+0.5) {};
        
        \draw [gray,dashed] (-0.5,-0.8) -- (-0.5,3.8);
        \draw [gray,dashed] (1.5,-0.8) -- (1.5,3.8);
        \draw [gray,dashed] (-0.8,1.5) -- (3.8,1.5);
        \draw [gray,dashed] (-0.8,-0.5) -- (3.8,-0.5);
        \draw [gray,dashed] (-0.8,3.5) -- (3.8,3.5);
        \draw [gray,dashed] (3.5,-0.8) -- (3.5,3.8);
      }
    }
\end{tikzpicture}}
    \end{tabular} &  & (cont.)
    \\
    \hline
    Qubit unit cell & Encoded operators & Stabilizers & Encoding properties\\
    \hline
    \begin{tabular}{@{}c@{}} 
    Sq bilayer (2)\\ 
    \scalebox{0.7}{\begin{tikzpicture}
    \foreach \x in {0,...,1}{
      \foreach \y in {0,...,1}{
        
        \draw (-0.8,0) -- (3.8,0);
        \draw (-0.8,1) -- (3.8,1);
        \draw (-0.8,2) -- (3.8,2);
        \draw (-0.8,3) -- (3.8,3);
        \draw (0,-0.8) -- (0,3.8);
        \draw (1,-0.8) -- (1,3.8);
        \draw (2,-0.8) -- (2,3.8);
        \draw (3,-0.8) -- (3,3.8);
        \draw (0,1) -- (1,0);
        \draw (0,3) -- (1,2);
        \draw (2,1) -- (3,0);
        \draw (2,3) -- (3,2);
        \node[draw,circle,inner sep=2pt,fill] at (2*\x+1,2*\y) {};
        \node[draw,circle,inner sep=2pt,fill] at (2*\x,2*\y+1) {};
        
        \node[align=left] at (0.3,1+0.3) {\Large\textcolor{orange}{1}};
        \node[align=left] at (1+0.3,0.3) {\Large\textcolor{orange}{2}};
        
        \draw [gray,dashed] (-0.5,-0.8) -- (-0.5,3.8);
        \draw [gray,dashed] (1.5,-0.8) -- (1.5,3.8);
        \draw [gray,dashed] (-0.8,1.5) -- (3.8,1.5);
        \draw [gray,dashed] (-0.8,-0.5) -- (3.8,-0.5);
        \draw [gray,dashed] (-0.8,3.5) -- (3.8,3.5);
        \draw [gray,dashed] (3.5,-0.8) -- (3.5,3.8);
      }
    }
\end{tikzpicture}}
    \end{tabular} & $
\begin{pmatrix}
y & x & 0\\
0 & 1 & 1\\
\hline
1 & 1+x & 1\\
1 & 0 & 1
\end{pmatrix}
$

x+y \\
1+y \\
\hline
y+xy \\
1+x 
\end{pmatrix}$ &
    \begin{tabular}
    {@{}c@{}} 
    Qubits/mode: 2\\
    Max weight: 3\\
    Max graph cost: 2 \\
    Avg. graph cost: 1.67\\
    Error detecting?: Yes\\
    (GSE family \cite{setia2019superfast})
    \end{tabular}
    \\
    
    \hline
    \begin{tabular}{@{}c@{}} 
    Rhombile (3)\\ 
    \scalebox{0.7}{\begin{tikzpicture}
    
    \foreach \x in {0,...,1}{
      \foreach \y in {0,...,1}{
        
        \draw (2*\x,2*\y+0.125*1.414213) -- (2*\x+0.5,2*\y+0.5*1.414213+0.125*1.414213);
        \draw (2*\x+1,2*\y+0.125*1.414213) -- (2*\x+0.5,2*\y+0.5*1.414213+0.125*1.414213);
        \draw (2*\x+0.5,0.5*1.414213+0.125*1.414213) -- (2*\x,2+0.125*1.414213);
        \draw (1,2*\y+0.125*1.414213) -- (2+0.5,2*\y+0.5*1.414213+0.125*1.414213);
        \draw (0,2+0.125*1.414213) -- (2+0.5,0.5*1.414213+0.125*1.414213);
        \draw (1,2+0.125*1.414213) -- (2+0.5,0.5*1.414213+0.125*1.414213);
        
        \draw (-0.8,2) -- (0.5,0.5*1.414213+0.125*1.414213);
        \draw (-0.8,1.54) -- (0.5,0.5*1.414213+0.125*1.414213);
        
        \draw (-0.6,3.8) -- (0.5,2+0.5*1.414213+0.125*1.414213);
        \draw (-0.8,3.54) -- (0.5,2+0.5*1.414213+0.125*1.414213);
        
        \draw (1.4,3.8) -- (2+0.5,2+0.5*1.414213+0.125*1.414213);
        \draw (0.8,3.8) -- (2+0.5,2+0.5*1.414213+0.125*1.414213);
        
        \draw (2,2+0.125*1.414213) -- (3.8,1.2);
        \draw (2+1,2+0.125*1.414213) -- (3.8,1.45);
        
        \draw (0,0.125*1.414213) -- (1.8,-0.8);
        \draw (1,0.125*1.414213) -- (2,-0.8);
        
        \draw (2,0.125*1.414213) -- (3.8,-0.8);
        \draw (2+1,0.125*1.414213) -- (3.8,-0.6);
        
        \draw (2*\x,0.125*1.414213) -- (2*\x+0.4,-0.8);
        
        \draw (2*\x+0.5,2+0.5*1.414213+0.125*1.414213) -- (2*\x+0.2,3.8);
        
        \draw (0.5,2*\y+0.5*1.414213+0.125*1.414213) -- (-0.8,2*\y +0.4);
        
        \draw (2+1,2*\y+0.125*1.414213) -- (3.8,2*\y +0.6);
        
        \draw (2.8,3.8) -- (3.8,3.3);
        
        \draw (-0.8,-0.1) -- (0.1,-0.8);
        
        \node[draw,circle,inner sep=2pt,fill] at (2*\x,2*\y+0.125*1.414213) {}; 
        \node[draw,circle,inner sep=2pt,fill] at (2*\x+1,2*\y+0.125*1.414213) {}; 
        \node[draw,circle,inner sep=2pt,fill] at (2*\x+0.5,2*\y+0.5*1.414213+0.125*1.414213) {}; 
        
        \node[align=left] at (-0.3,-0.3+0.125*1.414213) {\Large\textcolor{orange}{1}};
        \node[align=left] at (1.2,-0.4+0.125*1.414213) {\Large\textcolor{orange}{2}};
        \node[align=left] at (0.4+0.5,0.5*1.414213+0.125*1.414213) {\Large\textcolor{orange}{3}};

        \draw [gray,dashed] (-0.5,-0.8) -- (-0.5,3.8);
        \draw [gray,dashed] (1.5,-0.8) -- (1.5,3.8);
        \draw [gray,dashed] (-0.8,1.5) -- (3.8,1.5);
        \draw [gray,dashed] (-0.8,-0.5) -- (3.8,-0.5);
        \draw [gray,dashed] (-0.8,3.5) -- (3.8,3.5);
        \draw [gray,dashed] (3.5,-0.8) -- (3.5,3.8);
      }
    }
\end{tikzpicture}}
    \end{tabular} & $
\begin{pmatrix}
y & 1+x & 0\\
0 & 0 & 0\\
0 & xy^{-1} & 1\\
\hline
1 & 1 & 1\\
0 & 0 & 0\\
1 & 0 & 0
\end{pmatrix}
$ & $\begin{pmatrix}
1+x \\
0\\
x+xy^{-1} \\
\hline
x+y \\
0\\
1+x 
\end{pmatrix}$ &
    \begin{tabular}
    {@{}c@{}} 
    Qubits/mode: 2*\\
    Max weight: 3\\
    Max graph cost: 2 \\
    Avg. graph cost: 1.67\\
    Error detecting?: Yes*\\
    (New here)
    \end{tabular}\\

    \hline
    \begin{tabular}{@{}c@{}} 
    Hex bilayer (4)\\ 
    \scalebox{0.7}{\begin{tikzpicture}
    \foreach \x in {0,...,1}{
      \foreach \y in {0,...,1}{
        
        \draw (-0.8,0.2+.3333*1) -- (2.8-.3333,3.8);
        \draw (0.2-.3333,-0.8) -- (3.8,2.8+.3333);
        \draw (-0.8,2.2+.3333) -- (0.8-.3333,3.8);
        \draw (2.2-.3333,-0.8) -- (3.8,0.8+.3333);
        
        \draw (0,1-.3333) -- (1,0-.3333);
        \draw (0,3-.3333) -- (1,2-.3333);
        \draw (2,1-.3333) -- (3,0-.3333);
        \draw (2,3-.3333) -- (3,2-.3333);
        
        \draw (-0.8,0.2-0.3333) -- (2.8+.3333,3.8);
        \draw (0.2+.3333,-0.8) -- (3.8,2.8-.3333);
        \draw (-0.8,2.2-.3333) -- (0.8+.3333,3.8);
        \draw (2.2+.3333,-0.8) -- (3.8,0.8-.3333);
        
        \draw (0,1+0.3333) -- (1,0+0.3333);
        \draw (0,3+0.3333) -- (1,2+0.3333);
        \draw (2,1+0.3333) -- (3,0+0.3333);
        \draw (2,3+0.3333) -- (3,2+0.3333);
        
        \draw (2*\x+1,2*\y+0.33) -- (2*\x+1,2*\y-0.33);
        \draw (2*\x,2*\y+1.33) -- (2*\x,2*\y+0.67);
        
        \node[draw,circle,inner sep=2pt,fill] at (2*\x+1,2*\y+0.33) {};
        \node[draw,circle,inner sep=2pt,fill] at (2*\x,2*\y+1.33) {};
        \node[draw,circle,inner sep=2pt,fill] at (2*\x+1,2*\y-0.33) {};
        \node[draw,circle,inner sep=2pt,fill] at (2*\x,2*\y+0.67) {};
        
        \node[align=left] at (-0.3,1-0.3) {\Large\textcolor{orange}{1}};
        \node[align=left] at (1-0.3,0-0.3) {\Large\textcolor{orange}{2}};
        \node[align=left] at (-0.3,1.33+0.3) {\Large\textcolor{orange}{3}};
        \node[align=left] at (1,.33+0.4) {\Large\textcolor{orange}{4}};
        
        \draw [gray,dashed] (-0.5,-0.8) -- (-0.5,3.8);
        \draw [gray,dashed] (1.5,-0.8) -- (1.5,3.8);
        \draw [gray,dashed] (-0.8,1.5) -- (3.8,1.5);
        \draw [gray,dashed] (-0.8,-0.5) -- (3.8,-0.5);
        \draw [gray,dashed] (-0.8,3.5) -- (3.8,3.5);
        \draw [gray,dashed] (3.5,-0.8) -- (3.5,3.8);
      }
    }
\end{tikzpicture}}
    \end{tabular} & $
\begin{pmatrix}
y & x & 0 \\
0 & 1 & 1 \\
0 & 0 & 0 \\
0 & 0 & 0 \\
\hline
1 & 1+x & 1\\
1 & 0 & 1\\
0 & 0 & 0 \\
0 & 0 & 0 
\end{pmatrix}
$ & $\begin{pmatrix}
x+y \\
1+y \\
0\\
0\\
\hline
y+xy \\
1+x \\
0\\
0
\end{pmatrix}$ &
    \begin{tabular}
    {@{}c@{}} 
    Qubits/mode: 2*\\
    Max weight: 3\\
    Max graph cost: 2 \\
    Avg. graph cost: 1.67\\
    Error detecting?: Yes*\\
    (GSE family \cite{setia2019superfast})
    \end{tabular}\\
    
    \hline
    \begin{tabular}{@{}c@{}} 
    Sq bilayer (4)\\ 
    \scalebox{0.7}{\begin{tikzpicture}
    \foreach \x in {0,...,1}{
      \foreach \y in {0,...,1}{
        
        \draw (2*\x+0,-0.8) -- (2*\x+0,3.8);
        \draw (2*\x+.33,-0.8) -- (2*\x+.33,3.8);
        \draw (2*\x+.67,-0.8) -- (2*\x+.67,3.8);
        \draw (2*\x+1,-0.8) -- (2*\x+1,3.8);
        \draw (-0.8,2*\y+0) -- (3.8,2*\y+0);
        \draw (-0.8,2*\y+1) -- (3.8,2*\y+1);
        
        \draw (2*\x+.33,2*\y) -- (2*\x,2*\y+1);
        \draw (2*\x+1,2*\y) -- (2*\x+0.67,2*\y+1);
        
        \node[draw,circle,inner sep=2pt,fill] at (2*\x+.33,2*\y) {};
        \node[draw,circle,inner sep=2pt,fill] at (2*\x+1,2*\y) {};
        \node[draw,circle,inner sep=2pt,fill] at (2*\x,2*\y+1) {};
        \node[draw,circle,inner sep=2pt,fill] at (2*\x+0.67,2*\y+1) {};
        
        \node[align=left] at (0.33-0.25,0-0.3) {\Large\textcolor{orange}{1}};
        \node[align=left] at (1+0.3,0-0.3) {\Large\textcolor{orange}{2}};
        \node[align=left] at (0-0.3,1+0.3) {\Large\textcolor{orange}{3}};
        \node[align=left] at (1.2,1+0.3) {\Large\textcolor{orange}{4}};
        
        \draw [gray,dashed] (-0.5,-0.8) -- (-0.5,3.8);
        \draw [gray,dashed] (1.5,-0.8) -- (1.5,3.8);
        \draw [gray,dashed] (-0.8,1.5) -- (3.8,1.5);
        \draw [gray,dashed] (-0.8,-0.5) -- (3.8,-0.5);
        \draw [gray,dashed] (-0.8,3.5) -- (3.8,3.5);
        \draw [gray,dashed] (3.5,-0.8) -- (3.5,3.8);
      }
    }
\end{tikzpicture}}
    \end{tabular} & $
\begin{pmatrix}
y & x & 0\\
0 & 1 & 1\\
0 & 0 & 0\\
0 & 0 & 0\\
\hline
1 & 1+x & 1\\
1 & 0 & 1\\
0 & 0 & 0\\
0 & 0 & 0
\end{pmatrix}
$ & $\begin{pmatrix}
x+y \\
1+y \\
0\\
0\\
\hline
y+xy \\
1+x \\
0\\
0
\end{pmatrix}$ &
    \begin{tabular}
    {@{}c@{}} 
    Qubits/mode: 2*\\
    Max weight: 3\\
    Max graph cost: 2 \\
    Avg. graph cost: 1.67\\
    Error detecting?: Yes*\\
    (GSE family \cite{setia2019superfast})
    \end{tabular}\\

\end{tabular}

\end{center}

\clearpage
\begin{center}
\begin{tabular}{c|c|c|c}
    Fermionic unit cell & Fermionic operators & \\
    \hline
    \begin{tabular}{@{}c@{}} 
    Square (1)\\ 
    \scalebox{0.7}{\begin{tikzpicture}
    \foreach \x in {0,...,1}{
      \foreach \y in {0,...,1}{
        
        \draw (0.5,-0.8) -- (0.5,3.8);
        \draw (2.5,-0.8) -- (2.5,3.8);
        \draw (-0.8,0.5) -- (3.8,0.5);
        \draw (-0.8,2.5) -- (3.8,2.5);
        
        \node[draw,circle,inner sep=2pt,fill] at (2*\x+0.5,2*\y+0.5) {};
        
        \draw [gray,dashed] (-0.5,-0.8) -- (-0.5,3.8);
        \draw [gray,dashed] (1.5,-0.8) -- (1.5,3.8);
        \draw [gray,dashed] (-0.8,1.5) -- (3.8,1.5);
        \draw [gray,dashed] (-0.8,-0.5) -- (3.8,-0.5);
        \draw [gray,dashed] (-0.8,3.5) -- (3.8,3.5);
        \draw [gray,dashed] (3.5,-0.8) -- (3.5,3.8);
      }
    }
\end{tikzpicture}}
    \end{tabular} &  & (Num preserving)
    \\
    \hline
    Qubit unit cell & Encoded operators & Stabilizers & Encoding properties\\
    \hline

    \begin{tabular}{@{}c@{}} 
    Square (2)\\ 
    \scalebox{0.7}{\begin{tikzpicture}
    \foreach \x in {0,...,1}{
      \foreach \y in {0,...,1}{
        
        \draw (-0.8,0) -- (3.8,0);
        \draw (-0.8,1) -- (3.8,1);
        \draw (-0.8,2) -- (3.8,2);
        \draw (-0.8,3) -- (3.8,3);
        \draw (0,-0.8) -- (0,3.8);
        \draw (1,-0.8) -- (1,3.8);
        \draw (2,-0.8) -- (2,3.8);
        \draw (3,-0.8) -- (3,3.8);
        \draw (0,1) -- (1,0);
        \draw (0,3) -- (1,2);
        \draw (2,1) -- (3,0);
        \draw (2,3) -- (3,2);
        \node[draw,circle,inner sep=2pt,fill] at (2*\x+1,2*\y) {};
        \node[draw,circle,inner sep=2pt,fill] at (2*\x,2*\y+1) {};
        
        \node[align=left] at (0.3,1+0.3) {\Large\textcolor{orange}{1}};
        \node[align=left] at (1+0.3,0.3) {\Large\textcolor{orange}{2}};
        
        \draw [gray,dashed] (-0.5,-0.8) -- (-0.5,3.8);
        \draw [gray,dashed] (1.5,-0.8) -- (1.5,3.8);
        \draw [gray,dashed] (-0.8,1.5) -- (3.8,1.5);
        \draw [gray,dashed] (-0.8,-0.5) -- (3.8,-0.5);
        \draw [gray,dashed] (-0.8,3.5) -- (3.8,3.5);
        \draw [gray,dashed] (3.5,-0.8) -- (3.5,3.8);
      }
    }
\end{tikzpicture}}
    \end{tabular} & $
\begin{pmatrix}
0 & 1+y & x & 1 & 1 \\
0 & 0 & 1+x & 1+x & 0 \\
\hline
1+y & 0 & x & 1 & 1\\
1 & y & 0 & 1+x & 1
\end{pmatrix}
$

y+xy \\
1+x+y+xy\\
\hline
1+x \\
1+y
\end{pmatrix}$ &
    \begin{tabular}
    {@{}c@{}} 
    Qubits/mode: 2\\
    Max weight: 3\\
    Max graph cost: 2 \\
    Avg. graph cost: 1.8\\
    Error detecting?: Yes\\
    (GSE family \cite{setia2019superfast})
    \end{tabular}
    \\
    
\end{tabular}

\end{center}
\clearpage
\begin{center}
\begin{tabular}{c|c|c|c}
    Fermionic unit cell & Fermionic operators & \\
    \hline
    \begin{tabular}{@{}c@{}} 
    Triang (1)\\ 
    \scalebox{0.7}{\begin{tikzpicture}
    \foreach \x in {0,...,1}{
      \foreach \y in {0,...,1}{
        
        \draw (0.5,-0.8) -- (0.5,3.8);
        \draw (2.5,-0.8) -- (2.5,3.8);
        \draw (-0.8,0.5) -- (3.8,0.5);
        \draw (-0.8,2.5) -- (3.8,2.5);
        \draw (-0.8,3.2) -- (-0.2,3.8);
        \draw (-0.8,1.2) -- (1.8,3.8);
        \draw (-0.8,-0.8) -- (3.8,3.8);
        \draw (1.2,-0.8) -- (3.8,1.8);
        \draw (3.2,-0.8) -- (3.8,-0.2);
        
        \node[draw,circle,inner sep=2pt,fill] at (2*\x+0.5,2*\y+0.5) {};
        
        \draw [gray,dashed] (-0.5,-0.8) -- (-0.5,3.8);
        \draw [gray,dashed] (1.5,-0.8) -- (1.5,3.8);
        \draw [gray,dashed] (-0.8,1.5) -- (3.8,1.5);
        \draw [gray,dashed] (-0.8,-0.5) -- (3.8,-0.5);
        \draw [gray,dashed] (-0.8,3.5) -- (3.8,3.5);
        \draw [gray,dashed] (3.5,-0.8) -- (3.5,3.8);
      }
    }
\end{tikzpicture}}
    \end{tabular} &  &
    \\
    \hline
    Qubit unit cell & Encoded operators & Stabilizer & Encoding properties\\
    \hline

    \begin{tabular}{@{}c@{}} 
    Tilted sq (2)\\ 
    \scalebox{0.7}{\begin{tikzpicture}
    \foreach \x in {0,...,1}{
      \foreach \y in {0,...,1}{
        
        \draw (-0.8,0.2) -- (2.8,3.8);
        \draw (0.2,-0.8) -- (3.8,2.8);
        \draw (1.8,-0.8) -- (-0.8,1.8);
        \draw (3.8,-0.8) -- (-0.8,3.8);
        \draw (3.8,-0.8) -- (-0.8,3.8);
        \draw (3.8,1.2) -- (1.2,3.8);
        \draw (-0.8,2.2) -- (0.8,3.8);
        \draw (2.2,-0.8) -- (3.8,0.8);
        \node[draw,circle,inner sep=2pt,fill] at (2*\x+1,2*\y) {};
        \node[draw,circle,inner sep=2pt,fill] at (2*\x,2*\y+1) {};
        
        \node[align=left] at (0.3,1+0.3) {\Large\textcolor{orange}{1}};
        \node[align=left] at (1+0.3,0.3) {\Large\textcolor{orange}{2}};
        
        \draw [gray,dashed] (-0.5,-0.8) -- (-0.5,3.8);
        \draw [gray,dashed] (1.5,-0.8) -- (1.5,3.8);
        \draw [gray,dashed] (-0.8,1.5) -- (3.8,1.5);
        \draw [gray,dashed] (-0.8,-0.5) -- (3.8,-0.5);
        \draw [gray,dashed] (-0.8,3.5) -- (3.8,3.5);
        \draw [gray,dashed] (3.5,-0.8) -- (3.5,3.8);
      }
    }
\end{tikzpicture}}
    \end{tabular} & $
\begin{pmatrix}
xy & y & 1+x & 1\\
0 & y & 1 & 0\\
\hline
1 & 1+y & 1 & 1\\
y & 0 & 1 & 1
\end{pmatrix}
$ & $\begin{pmatrix}
1+x \\
1+xy\\
\hline
x+xy \\
1+y
\end{pmatrix}$ &
    \begin{tabular}
    {@{}c@{}} 
    Qubits/mode: 2\\
    Max weight: 3\\
    Max graph cost: 2 \\
    Avg. graph cost: 1.75\\
    Error detecting?: Yes\\
    (New here*)
    \end{tabular}
    \\
    \hline
    \begin{tabular}{@{}c@{}} 
    Triang.(2)\\ 
    \scalebox{0.7}{\begin{tikzpicture}
    \foreach \x in {0,...,1}{
      \foreach \y in {0,...,1}{
        
        \draw (0,-0.8) -- (0,3.8);
        \draw (1,-0.8) -- (1,3.8);
        \draw (2,-0.8) -- (2,3.8);
        \draw (3,-0.8) -- (3,3.8);
        \draw (-0.8,0.5) -- (3.8,0.5);
        \draw (-0.8,2.5) -- (3.8,2.5);
        \draw (-0.65,-0.8) -- (1.65,3.8);
        \draw (-0.8,0.85) -- (.65,3.8);
        \draw (0.35,-0.8) -- (2.65,3.8);
        \draw (1.35,-0.8) -- (3.65,3.8);
        \draw (2.35,-0.8) -- (3.65,1.85);
        \node[draw,circle,inner sep=2pt,fill] at (2*\x,2*\y+0.5) {};
        \node[draw,circle,inner sep=2pt,fill] at (2*\x+1,2*\y+0.5) {};
        
        \node[align=left] at (0.3,0.5+0.3) {\Large\textcolor{orange}{1}};
        \node[align=left] at (1+0.3,0.5+0.3) {\Large\textcolor{orange}{2}};
        
        \draw [gray,dashed] (-0.5,-0.8) -- (-0.5,3.8);
        \draw [gray,dashed] (1.5,-0.8) -- (1.5,3.8);
        \draw [gray,dashed] (-0.8,1.5) -- (3.8,1.5);
        \draw [gray,dashed] (-0.8,-0.5) -- (3.8,-0.5);
        \draw [gray,dashed] (-0.8,3.5) -- (3.8,3.5);
        \draw [gray,dashed] (3.5,-0.8) -- (3.5,3.8);
      }
    }
\end{tikzpicture}}
    \end{tabular} & $
\begin{pmatrix}
xy & 1+y & x & 0\\
0 & x^{-1} & 1 & 1\\
\hline
1 & 1 & 1+x & 1\\
1 & 0 & 1 & 0
\end{pmatrix}
$ & $\begin{pmatrix}
1+y^{-1} \\
1+x^{-1}y^{-1} \\
\hline
1+x \\
1+y^{-1} \\
\end{pmatrix}$&
    \begin{tabular}
    {@{}c@{}} 
    Qubits/mode: 2\\
    Max weight: 3\\
    Max graph cost: 2 \\
    Avg. graph cost: 1.75\\
    Error detecting?: Yes\\
    (New here)
    \end{tabular}\\
    
\end{tabular}

\end{center}
\clearpage
\begin{center}
\begin{tabular}{c|c|c|c}
    Fermionic unit cell & Fermionic operators & \\
    \hline
    \begin{tabular}{@{}c@{}} 
    Spinful sq (2)\\ 
    \scalebox{0.7}{\begin{tikzpicture}
    \foreach \x in {0,...,1}{
      \foreach \y in {0,...,1}{
        
        \draw (-0.8,0) -- (3.8,0);
        \draw (-0.8,1) -- (3.8,1);
        \draw (-0.8,2) -- (3.8,2);
        \draw (-0.8,3) -- (3.8,3);
        \draw (0,-0.8) -- (0,3.8);
        \draw (1,-0.8) -- (1,3.8);
        \draw (2,-0.8) -- (2,3.8);
        \draw (3,-0.8) -- (3,3.8);
        \node[draw,circle,inner sep=2pt,fill] at (2*\x+1,2*\y) {};
        \node[draw,circle,inner sep=2pt,fill] at (2*\x,2*\y+1) {};
        
        \node[align=left] at (0.3,1+0.3) {\Large\textcolor{orange}{1}};
        \node[align=left] at (1+0.3,0.3) {\Large\textcolor{orange}{2}};
        
        \draw [gray,dashed] (-0.5,-0.8) -- (-0.5,3.8);
        \draw [gray,dashed] (1.5,-0.8) -- (1.5,3.8);
        \draw [gray,dashed] (-0.8,1.5) -- (3.8,1.5);
        \draw [gray,dashed] (-0.8,-0.5) -- (3.8,-0.5);
        \draw [gray,dashed] (-0.8,3.5) -- (3.8,3.5);
        \draw [gray,dashed] (3.5,-0.8) -- (3.5,3.8);
      }
    }
\end{tikzpicture}}
    \end{tabular} &  &
    \\
    \hline
    Qubit unit cell & Encoded operators & Stabilizer & Encoding properties\\
    \hline

    \begin{tabular}{@{}c@{}} 
    Hex bilayer (4)\\ 
    \scalebox{0.7}{\begin{tikzpicture}
    \foreach \x in {0,...,1}{
      \foreach \y in {0,...,1}{
        
        \draw (-0.8,0.2+.3333*1) -- (2.8-.3333,3.8);
        \draw (0.2-.3333,-0.8) -- (3.8,2.8+.3333);
        \draw (-0.8,2.2+.3333) -- (0.8-.3333,3.8);
        \draw (2.2-.3333,-0.8) -- (3.8,0.8+.3333);
        
        \draw (0,1-.3333) -- (1,0-.3333);
        \draw (0,3-.3333) -- (1,2-.3333);
        \draw (2,1-.3333) -- (3,0-.3333);
        \draw (2,3-.3333) -- (3,2-.3333);
        
        \draw (-0.8,0.2-0.3333) -- (2.8+.3333,3.8);
        \draw (0.2+.3333,-0.8) -- (3.8,2.8-.3333);
        \draw (-0.8,2.2-.3333) -- (0.8+.3333,3.8);
        \draw (2.2+.3333,-0.8) -- (3.8,0.8-.3333);
        
        \draw (0,1+0.3333) -- (1,0+0.3333);
        \draw (0,3+0.3333) -- (1,2+0.3333);
        \draw (2,1+0.3333) -- (3,0+0.3333);
        \draw (2,3+0.3333) -- (3,2+0.3333);
        
        \draw (2*\x+1,2*\y+0.33) -- (2*\x+1,2*\y-0.33);
        \draw (2*\x,2*\y+1.33) -- (2*\x,2*\y+0.67);
        
        \node[draw,circle,inner sep=2pt,fill] at (2*\x+1,2*\y+0.33) {};
        \node[draw,circle,inner sep=2pt,fill] at (2*\x,2*\y+1.33) {};
        \node[draw,circle,inner sep=2pt,fill] at (2*\x+1,2*\y-0.33) {};
        \node[draw,circle,inner sep=2pt,fill] at (2*\x,2*\y+0.67) {};
        
        \node[align=left] at (-0.3,1-0.3) {\Large\textcolor{orange}{1}};
        \node[align=left] at (1-0.3,0-0.3) {\Large\textcolor{orange}{2}};
        \node[align=left] at (-0.3,1.33+0.3) {\Large\textcolor{orange}{3}};
        \node[align=left] at (1,.33+0.4) {\Large\textcolor{orange}{4}};
        
        \draw [gray,dashed] (-0.5,-0.8) -- (-0.5,3.8);
        \draw [gray,dashed] (1.5,-0.8) -- (1.5,3.8);
        \draw [gray,dashed] (-0.8,1.5) -- (3.8,1.5);
        \draw [gray,dashed] (-0.8,-0.5) -- (3.8,-0.5);
        \draw [gray,dashed] (-0.8,3.5) -- (3.8,3.5);
        \draw [gray,dashed] (3.5,-0.8) -- (3.5,3.8);
      }
    }
\end{tikzpicture}}
    \end{tabular} & $
\begin{pmatrix}
y & x & 0 & 0 & 0 & 0\\
0 & 1 & 1 & 0 & 0 & 0\\
0 & 0 & 0 & y & x & 0\\
0 & 0 & 0 & 0 & 1 & 1\\
\hline
1 & 1+x & 1 & 0 & 0 & 0\\
1 & 0 & 1 & 0 & 0 & 0\\
0 & 0 & 0 & 1 & 1+x & 1\\
0 & 0 & 0 & 1 & 0 & 1
\end{pmatrix}
$ & $\begin{pmatrix}
0  & x+y\\
0 & 1+y\\
x+y & 0\\
1+y & 0\\
\hline
0 & x+xy\\
0 & 1+x\\
x+xy & 0\\
1+x & 0
\end{pmatrix}$ & 
    \begin{tabular}
    {@{}c@{}} 
    Qubits/mode: 2\\
    Max weight: 3\\
    Max graph cost: 2 \\
    Avg. graph cost: 1.67\\
    Error detecting?: Yes\\
    (GSE family \cite{setia2019superfast})
    \end{tabular}
    \\
    \hline
    \begin{tabular}{@{}c@{}} 
    Sq bilayer (4)\\ 
    \scalebox{0.7}{\begin{tikzpicture}
    \foreach \x in {0,...,1}{
      \foreach \y in {0,...,1}{
        
        \draw (2*\x+0,-0.8) -- (2*\x+0,3.8);
        \draw (2*\x+.33,-0.8) -- (2*\x+.33,3.8);
        \draw (2*\x+.67,-0.8) -- (2*\x+.67,3.8);
        \draw (2*\x+1,-0.8) -- (2*\x+1,3.8);
        \draw (-0.8,2*\y+0) -- (3.8,2*\y+0);
        \draw (-0.8,2*\y+1) -- (3.8,2*\y+1);
        
        \draw (2*\x+.33,2*\y) -- (2*\x,2*\y+1);
        \draw (2*\x+1,2*\y) -- (2*\x+0.67,2*\y+1);
        
        \node[draw,circle,inner sep=2pt,fill] at (2*\x+.33,2*\y) {};
        \node[draw,circle,inner sep=2pt,fill] at (2*\x+1,2*\y) {};
        \node[draw,circle,inner sep=2pt,fill] at (2*\x,2*\y+1) {};
        \node[draw,circle,inner sep=2pt,fill] at (2*\x+0.67,2*\y+1) {};
        
        \node[align=left] at (0.33-0.25,0-0.3) {\Large\textcolor{orange}{1}};
        \node[align=left] at (1+0.3,0-0.3) {\Large\textcolor{orange}{2}};
        \node[align=left] at (0-0.3,1+0.3) {\Large\textcolor{orange}{3}};
        \node[align=left] at (1.2,1+0.3) {\Large\textcolor{orange}{4}};
        
        \draw [gray,dashed] (-0.5,-0.8) -- (-0.5,3.8);
        \draw [gray,dashed] (1.5,-0.8) -- (1.5,3.8);
        \draw [gray,dashed] (-0.8,1.5) -- (3.8,1.5);
        \draw [gray,dashed] (-0.8,-0.5) -- (3.8,-0.5);
        \draw [gray,dashed] (-0.8,3.5) -- (3.8,3.5);
        \draw [gray,dashed] (3.5,-0.8) -- (3.5,3.8);
      }
    }
\end{tikzpicture}}
    \end{tabular} & $
\begin{pmatrix}
y & x & 0 & 0 & 0 & 0\\
0 & 1 & 1 & 0 & 0 & 0\\
0 & 0 & 0 & y & x & 0\\
0 & 0 & 0 & 0 & 1 & 1\\
\hline
1 & 1+x & 1 & 0 & 0   & 0\\
1 & 0   & 1 & 0 & 0   & 0\\
0 & 0   & 0 & 1 & 1+x & 1\\
0 & 0   & 0 & 1 & 0   & 1
\end{pmatrix}
$ & $\begin{pmatrix}
0  & x+y\\
0 & 1+y\\
x+y & 0\\
1+y & 0\\
\hline
0 & x+xy\\
0 & 1+x\\
x+xy & 0\\
1+x & 0
\end{pmatrix}$ &
    \begin{tabular}
    {@{}c@{}} 
    Qubits/mode: 2\\
    Max weight: 3\\
    Max graph cost: 2 \\
    Avg. graph cost: 1.67\\
    Error detecting?: Yes\\
    (GSE family \cite{setia2019superfast})
    \end{tabular}\\
    
\end{tabular}
\end{center}
\clearpage
\begin{center}
\begin{tabular}{c|c|c|c}
    Fermionic unit cell & Fermionic operators & \\
    \hline
    \begin{tabular}{@{}c@{}} 
    Hex (2)\\ 
    \scalebox{0.7}{\begin{tikzpicture}
    \foreach \x in {0,...,1}{
      \foreach \y in {0,...,1}{
        
        \draw (2.2,3.8) -- (3.8,2.2);
        
        \draw (-0.8,2.8) -- (2.8,-0.8);
        
        \draw (0.2,3.8) -- (3.8,0.2);
        
        \draw (-0.8,0.8) -- (0.8,-0.8);
        
        \draw (2*\x,2*\y) -- (2*\x+1,2*\y+1);
        
        \node[draw,circle,inner sep=2pt,fill] at (2*\x,2*\y) {};
        \node[draw,circle,inner sep=2pt,fill] at (2*\x+1,2*\y+1) {};
        
        \node[align=left] at (-0.3,-0.3) {\Large\textcolor{orange}{1}};
        \node[align=left] at (1+0.3,1+0.3) {\Large\textcolor{orange}{2}};
        
        \draw [gray,dashed] (-0.5,-0.8) -- (-0.5,3.8);
        \draw [gray,dashed] (1.5,-0.8) -- (1.5,3.8);
        \draw [gray,dashed] (-0.8,1.5) -- (3.8,1.5);
        \draw [gray,dashed] (-0.8,-0.5) -- (3.8,-0.5);
        \draw [gray,dashed] (-0.8,3.5) -- (3.8,3.5);
        \draw [gray,dashed] (3.5,-0.8) -- (3.5,3.8);
      }
    }
\end{tikzpicture}}
    \end{tabular} &  &
    \\
    \hline
    Qubit unit cell & Encoded operators & Stabilizer & Encoding properties\\
    
    \hline
    \begin{tabular}{@{}c@{}} 
    Lieb lattice (3)\\ 
    \scalebox{0.7}{\begin{tikzpicture}
    \foreach \x in {0,...,1}{
      \foreach \y in {0,...,1}{
        
        \draw (0,-0.8) -- (0,3.8);
        \draw (2,-0.8) -- (2,3.8);
        \draw (-0.8,0) -- (3.8,0);
        \draw (-0.8,2) -- (3.8,2);
        \node[draw,circle,inner sep=2pt,fill] at (2*\x,2*\y) {};
        \node[draw,circle,inner sep=2pt,fill] at (2*\x+1,2*\y) {};
        \node[draw,circle,inner sep=2pt,fill] at (2*\x,2*\y+1) {};
        
        \node[align=left] at (0+0.3,1+0.3) {\Large\textcolor{orange}{1}};
        \node[align=left] at (0+0.3,0+0.3) {\Large\textcolor{orange}{2}};
        \node[align=left] at (1+0.3,0+0.3) {\Large\textcolor{orange}{3}};
        
        \draw [gray,dashed] (-0.5,-0.8) -- (-0.5,3.8);
        \draw [gray,dashed] (1.5,-0.8) -- (1.5,3.8);
        \draw [gray,dashed] (-0.8,1.5) -- (3.8,1.5);
        \draw [gray,dashed] (-0.8,-0.5) -- (3.8,-0.5);
        \draw [gray,dashed] (-0.8,3.5) -- (3.8,3.5);
        \draw [gray,dashed] (3.5,-0.8) -- (3.5,3.8);
      }
    }
\end{tikzpicture}}
    \end{tabular} & $
\begin{pmatrix}
0 & xy^{-1} & y^{-1} & 0 & y^{-1}\\
0 & 0 & 0 & 1 & 1\\
0 & 1 & 1 & 1 & 0\\
\hline
1 & 0 & y^{-1} & 0 & y^{-1}\\
0 & 0 & 1 & 0 & 0\\
1 & 0 & 1 & 1 & 0
\end{pmatrix}
$ & $\begin{pmatrix}
1+x\\
0\\
1+x\\
\hline
x+xy^{-1}\\
x+y\\
1+y
\end{pmatrix}$ &
    \begin{tabular}
    {@{}c@{}} 
    Qubits/mode: 1.5\\
    Max weight: 3\\
    Max graph cost: 2 \\
    Avg. graph cost: 1.6\\
    Error detecting?: No\\
    (New here)
    \end{tabular}\\
    
    \hline
    \begin{tabular}{@{}c@{}} 
    Kagome (3)\\ 
    \scalebox{0.7}{\begin{tikzpicture}
    \foreach \x in {0,...,1}{
      \foreach \y in {0,...,1}{
        
        \draw (0,-0.8) -- (0,3.8);
        \draw (2,-0.8) -- (2,3.8);
        \draw (-0.8,1) -- (3.8,1);
        \draw (-0.8,3) -- (3.8,3);
        \draw (-0.8,-0.8) -- (3.8,3.8);
        \draw (-0.8,1.2) -- (1.8,3.8);
        \draw (1.2,-0.8) -- (3.8,1.8);
        
        \node[draw,circle,inner sep=2pt,fill] at (2*\x,2*\y) {};
        \node[draw,circle,inner sep=2pt,fill] at (2*\x,2*\y+1) {};
        \node[draw,circle,inner sep=2pt,fill] at (2*\x+1,2*\y+1) {};
        
        \node[align=left] at (0+0.3,0) {\Large\textcolor{orange}{1}};
        \node[align=left] at (1,1-0.3) {\Large\textcolor{orange}{2}};
        \node[align=left] at (0+0.3,1+0.3) {\Large\textcolor{orange}{3}};
        
        \draw [gray,dashed] (-0.5,-0.8) -- (-0.5,3.8);
        \draw [gray,dashed] (1.5,-0.8) -- (1.5,3.8);
        \draw [gray,dashed] (-0.8,1.5) -- (3.8,1.5);
        \draw [gray,dashed] (-0.8,-0.5) -- (3.8,-0.5);
        \draw [gray,dashed] (-0.8,3.5) -- (3.8,3.5);
        \draw [gray,dashed] (3.5,-0.8) -- (3.5,3.8);
      }
    }
\end{tikzpicture}}
    \end{tabular} & $
\begin{pmatrix}
0 & 1 & 1 & 1 & 0\\
0 & y^{-1} & x^{-1}y^{-1} & 0 & x^{-1}y^{-1}\\
0 & 0 & 0 & 1 & 1\\
\hline
1 & 0 & 1 & 1 & 0\\
x^{-1} & 0 & x^{-1}y^{-1} & 0 & x^{-1}y^{-1}\\
0 & 0 & 1 & 0 & 0
\end{pmatrix}
$ & $\begin{pmatrix}
1+x\\
1+x^{-1}\\
0\\
\hline
1+y\\
1+y^{-1}\\
x+y
\end{pmatrix}$ &
    \begin{tabular}
    {@{}c@{}} 
    Qubits/mode: 1.5\\
    Max weight: 3\\
    Max graph cost: 2 \\
    Avg. graph cost: 1.8\\
    Error detecting?: No\\
    (New here)
    \end{tabular}\\
    
    \hline
    \begin{tabular}{@{}c@{}} 
    Rhombile (3)\\ 
    \scalebox{0.7}{\begin{tikzpicture}
    
    \foreach \x in {0,...,1}{
      \foreach \y in {0,...,1}{
        
        \draw (2*\x,2*\y+0.125*1.414213) -- (2*\x+0.5,2*\y+0.5*1.414213+0.125*1.414213);
        \draw (2*\x+1,2*\y+0.125*1.414213) -- (2*\x+0.5,2*\y+0.5*1.414213+0.125*1.414213);
        \draw (2*\x+0.5,0.5*1.414213+0.125*1.414213) -- (2*\x,2+0.125*1.414213);
        \draw (1,2*\y+0.125*1.414213) -- (2+0.5,2*\y+0.5*1.414213+0.125*1.414213);
        \draw (0,2+0.125*1.414213) -- (2+0.5,0.5*1.414213+0.125*1.414213);
        \draw (1,2+0.125*1.414213) -- (2+0.5,0.5*1.414213+0.125*1.414213);
        
        \draw (-0.8,2) -- (0.5,0.5*1.414213+0.125*1.414213);
        \draw (-0.8,1.54) -- (0.5,0.5*1.414213+0.125*1.414213);
        
        \draw (-0.6,3.8) -- (0.5,2+0.5*1.414213+0.125*1.414213);
        \draw (-0.8,3.54) -- (0.5,2+0.5*1.414213+0.125*1.414213);
        
        \draw (1.4,3.8) -- (2+0.5,2+0.5*1.414213+0.125*1.414213);
        \draw (0.8,3.8) -- (2+0.5,2+0.5*1.414213+0.125*1.414213);
        
        \draw (2,2+0.125*1.414213) -- (3.8,1.2);
        \draw (2+1,2+0.125*1.414213) -- (3.8,1.45);
        
        \draw (0,0.125*1.414213) -- (1.8,-0.8);
        \draw (1,0.125*1.414213) -- (2,-0.8);
        
        \draw (2,0.125*1.414213) -- (3.8,-0.8);
        \draw (2+1,0.125*1.414213) -- (3.8,-0.6);
        
        \draw (2*\x,0.125*1.414213) -- (2*\x+0.4,-0.8);
        
        \draw (2*\x+0.5,2+0.5*1.414213+0.125*1.414213) -- (2*\x+0.2,3.8);
        
        \draw (0.5,2*\y+0.5*1.414213+0.125*1.414213) -- (-0.8,2*\y +0.4);
        
        \draw (2+1,2*\y+0.125*1.414213) -- (3.8,2*\y +0.6);
        
        \draw (2.8,3.8) -- (3.8,3.3);
        
        \draw (-0.8,-0.1) -- (0.1,-0.8);
        
        \node[draw,circle,inner sep=2pt,fill] at (2*\x,2*\y+0.125*1.414213) {}; 
        \node[draw,circle,inner sep=2pt,fill] at (2*\x+1,2*\y+0.125*1.414213) {}; 
        \node[draw,circle,inner sep=2pt,fill] at (2*\x+0.5,2*\y+0.5*1.414213+0.125*1.414213) {}; 
        
        \node[align=left] at (-0.3,-0.3+0.125*1.414213) {\Large\textcolor{orange}{1}};
        \node[align=left] at (1.2,-0.4+0.125*1.414213) {\Large\textcolor{orange}{2}};
        \node[align=left] at (0.4+0.5,0.5*1.414213+0.125*1.414213) {\Large\textcolor{orange}{3}};

        \draw [gray,dashed] (-0.5,-0.8) -- (-0.5,3.8);
        \draw [gray,dashed] (1.5,-0.8) -- (1.5,3.8);
        \draw [gray,dashed] (-0.8,1.5) -- (3.8,1.5);
        \draw [gray,dashed] (-0.8,-0.5) -- (3.8,-0.5);
        \draw [gray,dashed] (-0.8,3.5) -- (3.8,3.5);
        \draw [gray,dashed] (3.5,-0.8) -- (3.5,3.8);
      }
    }
\end{tikzpicture}}
    \end{tabular} & $
\begin{pmatrix}
0 & 1 & 1 & 1 & 0\\
0 & 0 & 0 & y^{-1} & y^{-1}\\
0 & xy^{-1} & y^{-1} & 0 & y^{-1}\\
\hline
1 & 0 & 1 & 0 & 1\\
0 & y^{-1} & 0 & 0 & y^{-1}\\
1 & 0 & y^{-1} & 0 & y^{-1}
\end{pmatrix}
$ & $\begin{pmatrix}
1+x\\
0\\
1+x\\
\hline
1+y\\
1+y^{-1}\\
x+xy^{-1}
\end{pmatrix}$ &
    \begin{tabular}
    {@{}c@{}} 
    Qubits/mode: 1.5\\
    Max weight: 3\\
    Max graph cost: 2 \\
    Avg. graph cost: 1.6\\
    Error detecting?: No\\
    (New here)
    \end{tabular}\\
    
    \hline
    \begin{tabular}{@{}c@{}} 
    Trunc. Sq (4)\\ 
    \scalebox{0.7}{\begin{tikzpicture}
    \foreach \x in {0,...,1}{
      \foreach \y in {0,...,1}{
        
        \draw (0.5,-0.8) -- (0.5,0);
        \draw (2.5,-0.8) -- (2.5,0);
        \draw (0.5,3) -- (0.5,3.8);
        \draw (2.5,3) -- (2.5,3.8);
        \draw (-0.8,0.5) -- (0,0.5);
        \draw (-0.8,2.5) -- (0,2.5);
        \draw (3,0.5) -- (3.8,0.5);
        \draw (3,2.5) -- (3.8,2.5);
        \draw (0.5,1) -- (0.5,2);
        \draw (2.5,1) -- (2.5,2);
        \draw (1,0.5) -- (2,0.5);
        \draw (1,2.5) -- (2,2.5);
        
        \draw (0.5,0) -- (0,0.5);
        \draw (0.5,0) -- (1,0.5);
        \draw (0.5,1) -- (1,0.5);
        \draw (0.5,1) -- (0,0.5);
        
        \draw (0.5,2) -- (0,2.5);
        \draw (0.5,2) -- (1,2.5);
        \draw (0.5,3) -- (1,2.5);
        \draw (0.5,3) -- (0,2.5);
        
        \draw (2.5,0) -- (2,0.5);
        \draw (2.5,0) -- (3,0.5);
        \draw (2.5,1) -- (3,0.5);
        \draw (2.5,1) -- (2,0.5);
        
        \draw (2.5,2) -- (2,2.5);
        \draw (2.5,2) -- (3,2.5);
        \draw (2.5,3) -- (3,2.5);
        \draw (2.5,3) -- (2,2.5);
        
        \node[draw,circle,inner sep=2pt,fill] at (0.5+2*\x,0+2*\y) {};
        \node[draw,circle,inner sep=2pt,fill] at (0+2*\x,0.50+2*\y) {};
        \node[draw,circle,inner sep=2pt,fill] at (1+2*\x,0.50+2*\y) {};
        \node[draw,circle,inner sep=2pt,fill] at (0.5+2*\x,1+2*\y) {};
        
        \node[align=left] at (0.5+0.3,0-0.3) {\Large\textcolor{orange}{1}};
        \node[align=left] at (1+0.3,0.5-0.3) {\Large\textcolor{orange}{2}};
        \node[align=left] at (0-0.3,0.5+0.3) {\Large\textcolor{orange}{3}};
        \node[align=left] at (0.5-0.3,1+0.3) {\Large\textcolor{orange}{4}};
        
        \draw [gray,dashed] (-0.5,-0.8) -- (-0.5,3.8);
        \draw [gray,dashed] (1.5,-0.8) -- (1.5,3.8);
        \draw [gray,dashed] (-0.8,1.5) -- (3.8,1.5);
        \draw [gray,dashed] (-0.8,-0.5) -- (3.8,-0.5);
        \draw [gray,dashed] (-0.8,3.5) -- (3.8,3.5);
        \draw [gray,dashed] (3.5,-0.8) -- (3.5,3.8);
      }
    }
\end{tikzpicture}}
    \end{tabular} & $
\begin{pmatrix}
0 & x & 1 & 0 & 1\\
0 & 1 & 1 & 1 & 0\\
0 & 0 & 0 & 1 & 1\\
0 & 0 & 0 & 0 & 0\\
\hline
y & 0 & 1 & 0 & 1\\
1 & 0 & 1 & 1 & 0\\
0 & 0 & 1 & 0 & 0\\
0 & 0 & 0 & 0 & 0
\end{pmatrix}
$ & $\begin{pmatrix}
y+xy\\
1+x\\
0\\
0\\
\hline
x+xy\\
1+y\\
x+y\\
0
\end{pmatrix}$  &
    \begin{tabular}
    {@{}c@{}} 
    Qubits/mode: 1.5*\\
    Max weight: 3\\
    Max graph cost: 2 \\
    Avg. graph cost: 1.8\\
    Error detecting?: No\\
    (New here)
    \end{tabular}\\
    
    \hline
    \begin{tabular}{@{}c@{}} 
    Hex bilayer (4)\\ 
    \scalebox{0.7}{\begin{tikzpicture}
    \foreach \x in {0,...,1}{
      \foreach \y in {0,...,1}{
        
        \draw (-0.8,0.2+.3333*1) -- (2.8-.3333,3.8);
        \draw (0.2-.3333,-0.8) -- (3.8,2.8+.3333);
        \draw (-0.8,2.2+.3333) -- (0.8-.3333,3.8);
        \draw (2.2-.3333,-0.8) -- (3.8,0.8+.3333);
        
        \draw (0,1-.3333) -- (1,0-.3333);
        \draw (0,3-.3333) -- (1,2-.3333);
        \draw (2,1-.3333) -- (3,0-.3333);
        \draw (2,3-.3333) -- (3,2-.3333);
        
        \draw (-0.8,0.2-0.3333) -- (2.8+.3333,3.8);
        \draw (0.2+.3333,-0.8) -- (3.8,2.8-.3333);
        \draw (-0.8,2.2-.3333) -- (0.8+.3333,3.8);
        \draw (2.2+.3333,-0.8) -- (3.8,0.8-.3333);
        
        \draw (0,1+0.3333) -- (1,0+0.3333);
        \draw (0,3+0.3333) -- (1,2+0.3333);
        \draw (2,1+0.3333) -- (3,0+0.3333);
        \draw (2,3+0.3333) -- (3,2+0.3333);
        
        \draw (2*\x+1,2*\y+0.33) -- (2*\x+1,2*\y-0.33);
        \draw (2*\x,2*\y+1.33) -- (2*\x,2*\y+0.67);
        
        \node[draw,circle,inner sep=2pt,fill] at (2*\x+1,2*\y+0.33) {};
        \node[draw,circle,inner sep=2pt,fill] at (2*\x,2*\y+1.33) {};
        \node[draw,circle,inner sep=2pt,fill] at (2*\x+1,2*\y-0.33) {};
        \node[draw,circle,inner sep=2pt,fill] at (2*\x,2*\y+0.67) {};
        
        \node[align=left] at (-0.3,1-0.3) {\Large\textcolor{orange}{1}};
        \node[align=left] at (1-0.3,0-0.3) {\Large\textcolor{orange}{2}};
        \node[align=left] at (-0.3,1.33+0.3) {\Large\textcolor{orange}{3}};
        \node[align=left] at (1,.33+0.4) {\Large\textcolor{orange}{4}};
        
        \draw [gray,dashed] (-0.5,-0.8) -- (-0.5,3.8);
        \draw [gray,dashed] (1.5,-0.8) -- (1.5,3.8);
        \draw [gray,dashed] (-0.8,1.5) -- (3.8,1.5);
        \draw [gray,dashed] (-0.8,-0.5) -- (3.8,-0.5);
        \draw [gray,dashed] (-0.8,3.5) -- (3.8,3.5);
        \draw [gray,dashed] (3.5,-0.8) -- (3.5,3.8);
      }
    }
\end{tikzpicture}}
    \end{tabular} & $
\begin{pmatrix}
0 & x & 1 & 0 & 1\\
0 & 1 & 1 & 1 & 0\\
0 & 0 & 0 & 1 & 1\\
0 & 0 & 0 & 0 & 0\\
\hline
y & 0 & 1 & 0 & 1\\
1 & 0 & 1 & 1 & 0\\
0 & 0 & 1 & 0 & 0\\
0 & 0 & 0 & 0 & 0
\end{pmatrix}
$ & $\begin{pmatrix}
y+xy\\
1+x\\
0\\
0\\
\hline
x+xy\\
1+y\\
x+y\\
0
\end{pmatrix}$  &
    \begin{tabular}
    {@{}c@{}} 
    Qubits/mode: 1.5*\\
    Max weight: 3\\
    Max graph cost: 2 \\
    Avg. graph cost: 1.4\\
    Error detecting?: No\\
    (New here)
    \end{tabular}\\

\end{tabular}
\clearpage
\begin{tabular}{c|c|c|c}
    Fermionic unit cell & Fermionic operators & \\
    \hline
    \begin{tabular}{@{}c@{}} 
    Hex (2)\\ 
    \scalebox{0.7}{\begin{tikzpicture}
    \foreach \x in {0,...,1}{
      \foreach \y in {0,...,1}{
        
        \draw (2.2,3.8) -- (3.8,2.2);
        
        \draw (-0.8,2.8) -- (2.8,-0.8);
        
        \draw (0.2,3.8) -- (3.8,0.2);
        
        \draw (-0.8,0.8) -- (0.8,-0.8);
        
        \draw (2*\x,2*\y) -- (2*\x+1,2*\y+1);
        
        \node[draw,circle,inner sep=2pt,fill] at (2*\x,2*\y) {};
        \node[draw,circle,inner sep=2pt,fill] at (2*\x+1,2*\y+1) {};
        
        \node[align=left] at (-0.3,-0.3) {\Large\textcolor{orange}{1}};
        \node[align=left] at (1+0.3,1+0.3) {\Large\textcolor{orange}{2}};
        
        \draw [gray,dashed] (-0.5,-0.8) -- (-0.5,3.8);
        \draw [gray,dashed] (1.5,-0.8) -- (1.5,3.8);
        \draw [gray,dashed] (-0.8,1.5) -- (3.8,1.5);
        \draw [gray,dashed] (-0.8,-0.5) -- (3.8,-0.5);
        \draw [gray,dashed] (-0.8,3.5) -- (3.8,3.5);
        \draw [gray,dashed] (3.5,-0.8) -- (3.5,3.8);
      }
    }
\end{tikzpicture}}
    \end{tabular} &  & (cont.)
    \\
    \hline
    Qubit unit cell & Encoded operators & Stabilizer & Encoding properties\\
    
    \hline
    \begin{tabular}{@{}c@{}} 
    Sq bilayer (4)\\ 
    \scalebox{0.7}{\begin{tikzpicture}
    \foreach \x in {0,...,1}{
      \foreach \y in {0,...,1}{
        
        \draw (2*\x+0,-0.8) -- (2*\x+0,3.8);
        \draw (2*\x+.33,-0.8) -- (2*\x+.33,3.8);
        \draw (2*\x+.67,-0.8) -- (2*\x+.67,3.8);
        \draw (2*\x+1,-0.8) -- (2*\x+1,3.8);
        \draw (-0.8,2*\y+0) -- (3.8,2*\y+0);
        \draw (-0.8,2*\y+1) -- (3.8,2*\y+1);
        
        \draw (2*\x+.33,2*\y) -- (2*\x,2*\y+1);
        \draw (2*\x+1,2*\y) -- (2*\x+0.67,2*\y+1);
        
        \node[draw,circle,inner sep=2pt,fill] at (2*\x+.33,2*\y) {};
        \node[draw,circle,inner sep=2pt,fill] at (2*\x+1,2*\y) {};
        \node[draw,circle,inner sep=2pt,fill] at (2*\x,2*\y+1) {};
        \node[draw,circle,inner sep=2pt,fill] at (2*\x+0.67,2*\y+1) {};
        
        \node[align=left] at (0.33-0.25,0-0.3) {\Large\textcolor{orange}{1}};
        \node[align=left] at (1+0.3,0-0.3) {\Large\textcolor{orange}{2}};
        \node[align=left] at (0-0.3,1+0.3) {\Large\textcolor{orange}{3}};
        \node[align=left] at (1.2,1+0.3) {\Large\textcolor{orange}{4}};
        
        \draw [gray,dashed] (-0.5,-0.8) -- (-0.5,3.8);
        \draw [gray,dashed] (1.5,-0.8) -- (1.5,3.8);
        \draw [gray,dashed] (-0.8,1.5) -- (3.8,1.5);
        \draw [gray,dashed] (-0.8,-0.5) -- (3.8,-0.5);
        \draw [gray,dashed] (-0.8,3.5) -- (3.8,3.5);
        \draw [gray,dashed] (3.5,-0.8) -- (3.5,3.8);
      }
    }
\end{tikzpicture}}
    \end{tabular} & $
\begin{pmatrix}
0 & 1 & 1 & 1 & 0\\
0 & 1 & x^{-1} & 0 & x^{-1}\\
0 & 0 & 0 & 1 & 1\\
0 & 0 & 0 & 0 & 0\\
\hline
1 & 0 & 1 & 1 & 0\\
x^{-1}y & 0 & x^{-1} & 0 & x^{-1}\\
0 & 0 & 1 & 0 & 0\\
0 & 0 & 0 & 0 & 0 & 
\end{pmatrix}
$ & $\begin{pmatrix}
1+x\\
y+x^{-1}y\\
0\\
0\\
\hline
1+y\\
1+y\\
x+y\\
0
\end{pmatrix}$  &
    \begin{tabular}
    {@{}c@{}} 
    Qubits/mode: 1.5*\\
    Max weight: 3\\
    Max graph cost: 2 \\
    Avg. graph cost: 1.6\\
    Error detecting?: No\\
    (New here)
    \end{tabular}\\
    
    \hline
    \begin{tabular}{@{}c@{}} 
    Snub sq (4)\\ 
    \scalebox{0.7}{\begin{tikzpicture}
    \foreach \x in {0,...,1}{
      \foreach \y in {0,...,1}{
        
        \draw (1,0) -- (2,1);
        \draw (1,2) -- (2,3);
        \draw (-0.8,0.2) -- (0,1);
        \draw (-0.8,2.2) -- (0,3);
        \draw (3,0) -- (3.8,0.8);
        \draw (3,2) -- (3.8,2.8);
        
        \draw (1,1) -- (0,2);
        \draw (3,1) -- (2,2);
        \draw (0.8,-0.8) -- (0,0);
        \draw (2.8,-0.8) -- (2,0);
        \draw (1,3) -- (0.2,3.8);
        \draw (3,3) -- (2.2,3.8);
        
        \draw (-0.8,2*\y) -- (3.8,2*\y);
        \draw (-0.8,2*\y+1) -- (3.8,2*\y+1);
        \draw (2*\x,-0.8) -- (2*\x,3.8);
        \draw (2*\x+1,-0.8) -- (2*\x+1,3.8);
        
        \node[draw,circle,inner sep=2pt,fill] at (2*\x,2*\y) {};
        \node[draw,circle,inner sep=2pt,fill] at (2*\x,2*\y+1) {};
        \node[draw,circle,inner sep=2pt,fill] at (2*\x+1,2*\y) {};
        \node[draw,circle,inner sep=2pt,fill] at (2*\x+1,2*\y+1) {};
        
        \node[align=left] at (0-0.3,0.3) {\Large\textcolor{orange}{1}};
        \node[align=left] at (1-0.3,0+0.3) {\Large\textcolor{orange}{2}};
        \node[align=left] at (0-0.3,1+0.3) {\Large\textcolor{orange}{3}};
        \node[align=left] at (1+0.3,1+0.3) {\Large\textcolor{orange}{4}};
        
        \draw [gray,dashed] (-0.5,-0.8) -- (-0.5,3.8);
        \draw [gray,dashed] (1.5,-0.8) -- (1.5,3.8);
        \draw [gray,dashed] (-0.8,1.5) -- (3.8,1.5);
        \draw [gray,dashed] (-0.8,-0.5) -- (3.8,-0.5);
        \draw [gray,dashed] (-0.8,3.5) -- (3.8,3.5);
        \draw [gray,dashed] (3.5,-0.8) -- (3.5,3.8);
    
      }
    }
\end{tikzpicture}}
    \end{tabular} & $
\begin{pmatrix}
0 & x & 1 & 0 & 0\\
0 & 1 & 1 & 0 & 0\\
0 & 0 & 0 & 0 & 0\\
0 & 0 & 0 & 1 & y^{-1}\\
\hline
y & 0 & 1 & 0 & 1\\
1 & 0 & 1 & 1 & 0\\
0 & 0 & 0 & 0 & 0\\
1 & 0 & 0 & 0 & 0
\end{pmatrix}
$ & $\begin{pmatrix}
y+xy\\
1+x\\
0\\
0\\
\hline
x+xy\\
1+y\\
0\\
1+x
\end{pmatrix}$  &
    \begin{tabular}
    {@{}c@{}} 
    Qubits/mode: 1.5*\\
    Max weight: 3\\
    Max graph cost: 2 \\
    Avg. graph cost: 1.2\\
    Error detecting?: No\\
    (New here)
    \end{tabular}\\
    
    \hline
    \begin{tabular}{@{}c@{}} 
    Heavy hex (5)\\ 
    \scalebox{0.7}{\begin{tikzpicture}
    \foreach \x in {0,...,1}{
      \foreach \y in {0,...,1}{
        
        
        \draw (2*\x+0.3,2*\y-0.3) -- (2*\x+1.3,2*\y+0.7);
        
        \draw (0.8,-0.8) -- (-0.8,0.8);
        \draw (2.8,-0.8) -- (-0.8,2.8);
        \draw (3.8,0.2) -- (0.2,3.8);
        \draw (3.8,2.2) -- (2.2,3.8);
        
        \node[draw,circle,inner sep=2pt,fill] at (2*\x+0.8,2*\y+0.2) {};
        \node[draw,circle,inner sep=2pt,fill] at (2*\x+0.3,2*\y-0.3) {};
        \node[draw,circle,inner sep=2pt,fill] at (2*\x+1.3,2*\y+0.7) {};
        \node[draw,circle,inner sep=2pt,fill] at (2*\x+0.8,2*\y+1.2) {};
        \node[draw,circle,inner sep=2pt,fill] at (2*\x-0.2,2*\y+0.2) {};
        
        \node[align=left] at (-0.2,0.6) {\Large\textcolor{orange}{1}};
        \node[align=left] at (0,-0.4) {\Large\textcolor{orange}{2}};
        \node[align=left] at (1.2,-0.1) {\Large\textcolor{orange}{3}};
        \node[align=left] at (1.7,1) {\Large\textcolor{orange}{4}};
        \node[align=left] at (0.5,1) {\Large\textcolor{orange}{5}};

        \draw [gray,dashed] (-0.5,-0.8) -- (-0.5,3.8);
        \draw [gray,dashed] (1.5,-0.8) -- (1.5,3.8);
        \draw [gray,dashed] (-0.8,1.5) -- (3.8,1.5);
        \draw [gray,dashed] (-0.8,-0.5) -- (3.8,-0.5);
        \draw [gray,dashed] (-0.8,3.5) -- (3.8,3.5);
        \draw [gray,dashed] (3.5,-0.8) -- (3.5,3.8);
      }
    }
\end{tikzpicture}}
    \end{tabular} & $
\begin{pmatrix}
0 & 0 & 0 & 0 & 0\\
0 & x & 1 & 0 & 1\\
0 & 0 & 0 & 1 & 1\\
0 & 1 & 1 & 1 & 0\\
0 & 0 & 0 & 0 & 0\\
\hline
0 & 0 & 0 & 0 & 0\\
y & 0 & 1 & 0 & 1\\
0 & 0 & 1 & 0 & 0\\
1 & 0 & 1 & 1 & 0\\
0 & 0 & 0 & 0 & 0
\end{pmatrix}
$ & $\begin{pmatrix}
0\\
y+xy\\
0\\
1+x\\
0\\
\hline
0\\
x+xy\\
x+y\\
1+y\\
0
\end{pmatrix}$  &
    \begin{tabular}
    {@{}c@{}} 
    Qubits/mode: 1.5*\\
    Max weight: 3\\
    Max graph cost: 2 \\
    Avg. graph cost: 1.6\\
    Error detecting?: No\\
    (New here)
    \end{tabular}\\

\end{tabular}

\end{center}
\clearpage
\begin{center}
\begin{tabular}{c|c|c|c}
    Fermionic unit cell & Fermionic operators & \\
    \hline
    \begin{tabular}{@{}c@{}} 
    Tilted sq (2)\\ 
    \scalebox{0.7}{\begin{tikzpicture}
    \foreach \x in {0,...,1}{
      \foreach \y in {0,...,1}{
        
        \draw (-0.8,0.2) -- (2.8,3.8);
        \draw (0.2,-0.8) -- (3.8,2.8);
        \draw (1.8,-0.8) -- (-0.8,1.8);
        \draw (3.8,-0.8) -- (-0.8,3.8);
        \draw (3.8,-0.8) -- (-0.8,3.8);
        \draw (3.8,1.2) -- (1.2,3.8);
        \draw (-0.8,2.2) -- (0.8,3.8);
        \draw (2.2,-0.8) -- (3.8,0.8);
        \node[draw,circle,inner sep=2pt,fill] at (2*\x+1,2*\y) {};
        \node[draw,circle,inner sep=2pt,fill] at (2*\x,2*\y+1) {};
        
        \node[align=left] at (0.3,1+0.3) {\Large\textcolor{orange}{1}};
        \node[align=left] at (1+0.3,0.3) {\Large\textcolor{orange}{2}};
        
        \draw [gray,dashed] (-0.5,-0.8) -- (-0.5,3.8);
        \draw [gray,dashed] (1.5,-0.8) -- (1.5,3.8);
        \draw [gray,dashed] (-0.8,1.5) -- (3.8,1.5);
        \draw [gray,dashed] (-0.8,-0.5) -- (3.8,-0.5);
        \draw [gray,dashed] (-0.8,3.5) -- (3.8,3.5);
        \draw [gray,dashed] (3.5,-0.8) -- (3.5,3.8);
      }
    }
\end{tikzpicture}}
    \end{tabular} &  &
    \\
    \hline
    Qubit unit cell & Encoded operators & Stabilizer & Encoding properties\\
    
    \hline
    \begin{tabular}{@{}c@{}} 
    Lieb lattice (4)\\ 
    \scalebox{0.7}{\begin{tikzpicture}
    \foreach \x in {0,...,1}{
      \foreach \y in {0,...,1}{
        
        \draw (0,-0.8) -- (0,3.8);
        \draw (2,-0.8) -- (2,3.8);
        \draw (-0.8,0) -- (3.8,0);
        \draw (-0.8,2) -- (3.8,2);
        \node[draw,circle,inner sep=2pt,fill] at (2*\x,2*\y) {};
        \node[draw,circle,inner sep=2pt,fill] at (2*\x+1,2*\y) {};
        \node[draw,circle,inner sep=2pt,fill] at (2*\x,2*\y+1) {};
        
        \node[align=left] at (0+0.3,1+0.3) {\Large\textcolor{orange}{1}};
        \node[align=left] at (0+0.3,0+0.3) {\Large\textcolor{orange}{2}};
        \node[align=left] at (1+0.3,0+0.3) {\Large\textcolor{orange}{3}};
        
        \draw [gray,dashed] (-0.5,-0.8) -- (-0.5,3.8);
        \draw [gray,dashed] (1.5,-0.8) -- (1.5,3.8);
        \draw [gray,dashed] (-0.8,1.5) -- (3.8,1.5);
        \draw [gray,dashed] (-0.8,-0.5) -- (3.8,-0.5);
        \draw [gray,dashed] (-0.8,3.5) -- (3.8,3.5);
        \draw [gray,dashed] (3.5,-0.8) -- (3.5,3.8);
      }
    }
\end{tikzpicture}}
    \end{tabular} & $
\begin{pmatrix}
0 & x & 0 & 1 & 0 & 1\\
0 & x & y & 0 & 0 & 0\\
0 & 0 & y & 1 & 1 & 0\\
\hline
1 & 0 & 1 & 0 & 0 & 1\\
y & 0 & 0 & 1 & 0 & 0\\
x^{-1}y & 1 & 0 & 0 & 1 & 0 
\end{pmatrix}
$& $\begin{pmatrix}
1+x\\
x+y\\
1+y\\
\hline
1+x\\
1+xy\\
1+y
\end{pmatrix}$ &
    \begin{tabular}
    {@{}c@{}} 
    Qubits/mode: 1.5\\
    Max weight: 3\\
    Max graph cost: 2 \\
    Avg. graph cost: 1.33\\
    Error detecting?: No\\
    (Compact encoding \cite{derby2021compact})
    \end{tabular}\\

\end{tabular}

\end{center}
\clearpage
\begin{center}
\begin{tabular}{c|c|c|c}
    Fermionic unit cell & Fermionic operators & \\
    \hline
    \begin{tabular}{@{}c@{}} 
    Kagome (3)\\ 
    \scalebox{0.7}{\begin{tikzpicture}
    \foreach \x in {0,...,1}{
      \foreach \y in {0,...,1}{
        
        \draw (0,-0.8) -- (0,3.8);
        \draw (2,-0.8) -- (2,3.8);
        \draw (-0.8,1) -- (3.8,1);
        \draw (-0.8,3) -- (3.8,3);
        \draw (-0.8,-0.8) -- (3.8,3.8);
        \draw (-0.8,1.2) -- (1.8,3.8);
        \draw (1.2,-0.8) -- (3.8,1.8);
        
        \node[draw,circle,inner sep=2pt,fill] at (2*\x,2*\y) {};
        \node[draw,circle,inner sep=2pt,fill] at (2*\x,2*\y+1) {};
        \node[draw,circle,inner sep=2pt,fill] at (2*\x+1,2*\y+1) {};
        
        \node[align=left] at (0+0.3,0) {\Large\textcolor{orange}{1}};
        \node[align=left] at (1,1-0.3) {\Large\textcolor{orange}{2}};
        \node[align=left] at (0+0.3,1+0.3) {\Large\textcolor{orange}{3}};
        
        \draw [gray,dashed] (-0.5,-0.8) -- (-0.5,3.8);
        \draw [gray,dashed] (1.5,-0.8) -- (1.5,3.8);
        \draw [gray,dashed] (-0.8,1.5) -- (3.8,1.5);
        \draw [gray,dashed] (-0.8,-0.5) -- (3.8,-0.5);
        \draw [gray,dashed] (-0.8,3.5) -- (3.8,3.5);
        \draw [gray,dashed] (3.5,-0.8) -- (3.5,3.8);
      }
    }
\end{tikzpicture}}
    \end{tabular} &  &
    \\
    \hline
    Qubit unit cell & Encoded operators & Stabilizer & Encoding properties\\
    
    \hline
    \begin{tabular}{@{}c@{}} 
    Trunc. Sq. (4)\\ 
    \scalebox{0.7}{\begin{tikzpicture}
    \foreach \x in {0,...,1}{
      \foreach \y in {0,...,1}{
        
        \draw (0.5,-0.8) -- (0.5,0);
        \draw (2.5,-0.8) -- (2.5,0);
        \draw (0.5,3) -- (0.5,3.8);
        \draw (2.5,3) -- (2.5,3.8);
        \draw (-0.8,0.5) -- (0,0.5);
        \draw (-0.8,2.5) -- (0,2.5);
        \draw (3,0.5) -- (3.8,0.5);
        \draw (3,2.5) -- (3.8,2.5);
        \draw (0.5,1) -- (0.5,2);
        \draw (2.5,1) -- (2.5,2);
        \draw (1,0.5) -- (2,0.5);
        \draw (1,2.5) -- (2,2.5);
        
        \draw (0.5,0) -- (0,0.5);
        \draw (0.5,0) -- (1,0.5);
        \draw (0.5,1) -- (1,0.5);
        \draw (0.5,1) -- (0,0.5);
        
        \draw (0.5,2) -- (0,2.5);
        \draw (0.5,2) -- (1,2.5);
        \draw (0.5,3) -- (1,2.5);
        \draw (0.5,3) -- (0,2.5);
        
        \draw (2.5,0) -- (2,0.5);
        \draw (2.5,0) -- (3,0.5);
        \draw (2.5,1) -- (3,0.5);
        \draw (2.5,1) -- (2,0.5);
        
        \draw (2.5,2) -- (2,2.5);
        \draw (2.5,2) -- (3,2.5);
        \draw (2.5,3) -- (3,2.5);
        \draw (2.5,3) -- (2,2.5);
        
        \node[draw,circle,inner sep=2pt,fill] at (0.5+2*\x,0+2*\y) {};
        \node[draw,circle,inner sep=2pt,fill] at (0+2*\x,0.50+2*\y) {};
        \node[draw,circle,inner sep=2pt,fill] at (1+2*\x,0.50+2*\y) {};
        \node[draw,circle,inner sep=2pt,fill] at (0.5+2*\x,1+2*\y) {};
        
        \node[align=left] at (0.5+0.3,0-0.3) {\Large\textcolor{orange}{1}};
        \node[align=left] at (1+0.3,0.5-0.3) {\Large\textcolor{orange}{2}};
        \node[align=left] at (0-0.3,0.5+0.3) {\Large\textcolor{orange}{3}};
        \node[align=left] at (0.5-0.3,1+0.3) {\Large\textcolor{orange}{4}};
        
        \draw [gray,dashed] (-0.5,-0.8) -- (-0.5,3.8);
        \draw [gray,dashed] (1.5,-0.8) -- (1.5,3.8);
        \draw [gray,dashed] (-0.8,1.5) -- (3.8,1.5);
        \draw [gray,dashed] (-0.8,-0.5) -- (3.8,-0.5);
        \draw [gray,dashed] (-0.8,3.5) -- (3.8,3.5);
        \draw [gray,dashed] (3.5,-0.8) -- (3.5,3.8);
      }
    }
\end{tikzpicture}}
    \end{tabular} & $
\begin{pmatrix}
0 & 0 & 0 & 1 & 0 & 1 & 1 & 1 & 1\\
0 & 1 & 1 & 1 & x^{-1} & 0 & 1 & 0 & 0\\
0 & 1 & 0 & 0 & 0 & 1 & 0 & 1 & 0\\
0 & 0 & 0 & y^{-1} & 0 & y^{-1} & 0 & 0 & y^{-1}\\
\hline
0 & 1 & 0 & 0 & 0 & 1 & 1 & 1 & 1\\
1 & 0 & 1 & 0 & 0 & 0 & 0 & 0 & 0\\
x & 0 & 0 & 0 & 1 & 0 & 0 & 0 & 1\\
0 & 0 & x & 0 & 1 & 0 & 0 & y^{-1} & 0
\end{pmatrix}
$& $\begin{pmatrix}
1+x\\
1+x^{-1}y^{-1}\\
x+y^{-1}\\
y^{-1}+xy^{-1}\\
\hline
x+y^{-1}\\
1+y^{-1}\\
x+y^{-1}\\
y^{-1}+xy^{-1}
\end{pmatrix}$  &
    \begin{tabular}
    {@{}c@{}} 
    Qubits/mode: 1.33\\
    Max weight: 3\\
    Max graph cost: 2 \\
    Avg. graph cost: 1.78\\
    Error detecting?: No\\
    (New here)
    \end{tabular}\\

\end{tabular}

\end{center}
\clearpage
\begin{center}
\begin{tabular}{c|c|c|c}
    Fermionic unit cell & Fermionic operators & \\
    \hline
    \begin{tabular}{@{}c@{}} 
    Kagome alt. (3)\\ 
    \scalebox{0.7}{\begin{tikzpicture}
    \foreach \x in {0,...,1}{
      \foreach \y in {0,...,1}{
        
        \draw (1,-0.8) -- (1,3.8);
        \draw (3,-0.8) -- (3,3.8);
        \draw (-0.8,0) -- (3.8,0);
        \draw (-0.8,2) -- (3.8,2);
        \draw (-0.8,2.8) -- (2.8,-0.8);
        \draw (3.8,0.2) -- (0.2,3.8);
        \draw (0.8,-0.8) -- (-0.8,0.8);
        \draw (3.8,2.2) -- (2.2,3.8);
        
        \node[draw,circle,inner sep=2pt,fill] at (2*\x,2*\y) {};
        \node[draw,circle,inner sep=2pt,fill] at (2*\x+1,2*\y) {};
        \node[draw,circle,inner sep=2pt,fill] at (2*\x+1,2*\y+1) {};
        
        \node[align=left] at (-0.1,0.4) {\Large\textcolor{orange}{1}};
        \node[align=left] at (1-0.3,0.3) {\Large\textcolor{orange}{2}};
        \node[align=left] at (1-0.4,1) {\Large\textcolor{orange}{3}};
        
        \draw [gray,dashed] (-0.5,-0.8) -- (-0.5,3.8);
        \draw [gray,dashed] (1.5,-0.8) -- (1.5,3.8);
        \draw [gray,dashed] (-0.8,1.5) -- (3.8,1.5);
        \draw [gray,dashed] (-0.8,-0.5) -- (3.8,-0.5);
        \draw [gray,dashed] (-0.8,3.5) -- (3.8,3.5);
        \draw [gray,dashed] (3.5,-0.8) -- (3.5,3.8);
      }
    }
\end{tikzpicture}}
    \end{tabular} &  &
    \\
    \hline
    Qubit unit cell & Encoded operators & Stabilizer & Encoding properties\\
    
    \hline
    \begin{tabular}{@{}c@{}} 
    Trunc. Sq. (4)\\ 
    \scalebox{0.7}{\begin{tikzpicture}
    \foreach \x in {0,...,1}{
      \foreach \y in {0,...,1}{
        
        \draw (0.5,-0.8) -- (0.5,0);
        \draw (2.5,-0.8) -- (2.5,0);
        \draw (0.5,3) -- (0.5,3.8);
        \draw (2.5,3) -- (2.5,3.8);
        \draw (-0.8,0.5) -- (0,0.5);
        \draw (-0.8,2.5) -- (0,2.5);
        \draw (3,0.5) -- (3.8,0.5);
        \draw (3,2.5) -- (3.8,2.5);
        \draw (0.5,1) -- (0.5,2);
        \draw (2.5,1) -- (2.5,2);
        \draw (1,0.5) -- (2,0.5);
        \draw (1,2.5) -- (2,2.5);
        
        \draw (0.5,0) -- (0,0.5);
        \draw (0.5,0) -- (1,0.5);
        \draw (0.5,1) -- (1,0.5);
        \draw (0.5,1) -- (0,0.5);
        
        \draw (0.5,2) -- (0,2.5);
        \draw (0.5,2) -- (1,2.5);
        \draw (0.5,3) -- (1,2.5);
        \draw (0.5,3) -- (0,2.5);
        
        \draw (2.5,0) -- (2,0.5);
        \draw (2.5,0) -- (3,0.5);
        \draw (2.5,1) -- (3,0.5);
        \draw (2.5,1) -- (2,0.5);
        
        \draw (2.5,2) -- (2,2.5);
        \draw (2.5,2) -- (3,2.5);
        \draw (2.5,3) -- (3,2.5);
        \draw (2.5,3) -- (2,2.5);
        
        \node[draw,circle,inner sep=2pt,fill] at (0.5+2*\x,0+2*\y) {};
        \node[draw,circle,inner sep=2pt,fill] at (0+2*\x,0.50+2*\y) {};
        \node[draw,circle,inner sep=2pt,fill] at (1+2*\x,0.50+2*\y) {};
        \node[draw,circle,inner sep=2pt,fill] at (0.5+2*\x,1+2*\y) {};
        
        \node[align=left] at (0.5+0.3,0-0.3) {\Large\textcolor{orange}{1}};
        \node[align=left] at (1+0.3,0.5-0.3) {\Large\textcolor{orange}{2}};
        \node[align=left] at (0-0.3,0.5+0.3) {\Large\textcolor{orange}{3}};
        \node[align=left] at (0.5-0.3,1+0.3) {\Large\textcolor{orange}{4}};
        
        \draw [gray,dashed] (-0.5,-0.8) -- (-0.5,3.8);
        \draw [gray,dashed] (1.5,-0.8) -- (1.5,3.8);
        \draw [gray,dashed] (-0.8,1.5) -- (3.8,1.5);
        \draw [gray,dashed] (-0.8,-0.5) -- (3.8,-0.5);
        \draw [gray,dashed] (-0.8,3.5) -- (3.8,3.5);
        \draw [gray,dashed] (3.5,-0.8) -- (3.5,3.8);
      }
    }
\end{tikzpicture}}
    \end{tabular} & $
\begin{pmatrix}
0 & y & 1 & y & 0 & 1 & y & 1 & 1\\
0 & 1 & 0 & 0 & 1 & 0 & 0 & x^{-1} & 0\\
0 & 0 & 0 & 1 & 1 & 0 & 0 & 0 & 1\\
0 & 0 & 0 & 1 & 1 & 0 & 1 & 0 & 0\\
\hline
1 & 0 & 1 & 0 & 0 & 0 & 0 & 0 & 0\\
x^{-1} & 0 & 0 & 0 & 0 & x^{-1} & 0 & 0 & x^{-1}\\
0 & 0 & 1 & 0 & 0 & 1 & 0 & 1 & 0\\
0 & 1 & 0 & 0 & 1 & 0 & 1 & 0 & 0
\end{pmatrix}
$  & $\begin{pmatrix}
x+xy\\
1+y\\
x+y\\
x+y\\
\hline
x+xy\\
1+y\\
x+y\\
1+y
\end{pmatrix}$ &
    \begin{tabular}
    {@{}c@{}} 
    Qubits/mode: 1.33\\
    Max weight: 3\\
    Max graph cost: 2 \\
    Avg. graph cost: 1.78\\
    Error detecting?: No\\
    (New here)
    \end{tabular}\\
    
    \hline
    \begin{tabular}{@{}c@{}} 
    Heavy hex (5)\\ 
    \scalebox{0.7}{\begin{tikzpicture}
    \foreach \x in {0,...,1}{
      \foreach \y in {0,...,1}{
        
        
        \draw (2*\x+0.3,2*\y-0.3) -- (2*\x+1.3,2*\y+0.7);
        
        \draw (0.8,-0.8) -- (-0.8,0.8);
        \draw (2.8,-0.8) -- (-0.8,2.8);
        \draw (3.8,0.2) -- (0.2,3.8);
        \draw (3.8,2.2) -- (2.2,3.8);
        
        \node[draw,circle,inner sep=2pt,fill] at (2*\x+0.8,2*\y+0.2) {};
        \node[draw,circle,inner sep=2pt,fill] at (2*\x+0.3,2*\y-0.3) {};
        \node[draw,circle,inner sep=2pt,fill] at (2*\x+1.3,2*\y+0.7) {};
        \node[draw,circle,inner sep=2pt,fill] at (2*\x+0.8,2*\y+1.2) {};
        \node[draw,circle,inner sep=2pt,fill] at (2*\x-0.2,2*\y+0.2) {};
        
        \node[align=left] at (-0.2,0.6) {\Large\textcolor{orange}{1}};
        \node[align=left] at (0,-0.4) {\Large\textcolor{orange}{2}};
        \node[align=left] at (1.2,-0.1) {\Large\textcolor{orange}{3}};
        \node[align=left] at (1.7,1) {\Large\textcolor{orange}{4}};
        \node[align=left] at (0.5,1) {\Large\textcolor{orange}{5}};

        \draw [gray,dashed] (-0.5,-0.8) -- (-0.5,3.8);
        \draw [gray,dashed] (1.5,-0.8) -- (1.5,3.8);
        \draw [gray,dashed] (-0.8,1.5) -- (3.8,1.5);
        \draw [gray,dashed] (-0.8,-0.5) -- (3.8,-0.5);
        \draw [gray,dashed] (-0.8,3.5) -- (3.8,3.5);
        \draw [gray,dashed] (3.5,-0.8) -- (3.5,3.8);
      }
    }
\end{tikzpicture}}
    \end{tabular} & $
\begin{pmatrix}
0 & x & 0 & 0 & x & 1 & 0 & 0 & 1\\
0 & 0 & 0 & 0 & 0 & 0 & 0 & 1 & 1\\
0 & 0 & 0 & 1 & 1 & 1 & 0 & 1 & 0\\
0 & 0 & 0 & 1 & 1 & 0 & 1 & 0 & 0\\
0 & 1 & y^{-1} & 1 & 0 & 0 & 1 & 0 & 0\\
\hline
1 & 0 & 0 & 0 & 0 & 1 & 0 & 0 & 1\\
0 & 0 & 0 & 0 & 0 & 1 & 0 & 0 & 0\\
0 & 0 & 1 & 0 & 0 & 1 & 0 & 1 & 0\\
0 & 1 & 0 & 0 & 1 & 0 & 1 & 0 & 0\\
y^{-1} & 0 & y^{-1} & 0 & 0 & 0 & 0 & 0 & 0
\end{pmatrix}
$ &$\begin{pmatrix}
1 & xy\\
0 & 0\\
1 & y\\
0 & x+y\\
y^{-1} & x\\
\hline
0 & x + xy\\
1 & x\\
0 & x+y\\
0 & 1+y\\
0 & 1+x
\end{pmatrix}$ & 
    \begin{tabular}
    {@{}c@{}} 
    Qubits/mode: 1.67\\
    Max weight: 3\\
    Max graph cost: 2 \\
    Avg. graph cost: 1.67\\
    Error detecting?: No\\
    (New here)
    \end{tabular}\\
    
\end{tabular}

\end{center}
\restoregeometry

\onecolumn

\section{Proof that $\sigma_V$ is a faithful representation of $F/G$}\label{sigmav_faithful}

\begin{theorem}
The map $\sigma_V$, as defined in equation (\ref{projected_sigma}), is a faithful representation of $F/G$.
\end{theorem}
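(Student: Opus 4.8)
The plan is to reduce the theorem to the computation of a single kernel. The text has already established that $\sigma_V=\Pi_V\circ\sigma$ is a group representation of $\Gamma$ — because $\ker\tau$ is central in $\Gamma$, so $S=\sigma(\ker\tau)$ commutes with $\sigma(\Gamma)$ and hence $\Pi_V$ does too — and that $\sigma_V$ kills $\ker\tau$, so it descends to a representation of $\Gamma/\ker\tau\simeq F$. By the first isomorphism theorem it therefore suffices to identify the kernel of this descended representation, viewed as a representation of $F$, as exactly $G=\tau(\ker\sigma)$; being a kernel, $G$ is automatically normal in $F$, so $F/G$ makes sense and $\sigma_V$ descends further to a \emph{faithful} representation of $F/G$.

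The key lemma I would establish first is: for every $x\in\Gamma$, the operator $\sigma_V(x)$ is the identity on the code space $V$ if and only if $\sigma(x)\in S$. The ``if'' direction is immediate from the definition of $V$ as the common $+1$ eigenspace of $S$. For ``only if'', write $Q=\sigma(x)\in\mathcal P$ and suppose $Q$ fixes every vector of $V$, which is nonzero because $-1\notin S$. Since any two elements of $\mathcal P$ commute or anticommute, were $Q$ to anticommute with some $s\in S$ it would map the $+1$ eigenspace of $s$ into its $-1$ eigenspace, contradicting $Q|_V=\mathrm{id}$ and $V\neq\{0\}$; hence $Q$ commutes with all of $S$. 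Similarly $Q^2\in\{\pm1\}$, and since $Q^2$ also acts trivially on $V$ we get $Q^2=+1$. If $Q=cs$ for some $s\in S$ and scalar $c\in\{\pm1,\pm i\}$, then $c=\pm i$ is impossible ($Q^2=-1$) and $c=-1$ is impossible ($Q$ would act as $-\mathrm{id}$ on $V$), so $c=1$ and $Q\in S$. Otherwise $\langle S,Q\rangle$ is an abelian group of Hermitian involutions strictly larger than $S$ and still not containing $-1$ — an honest stabilizer group — whose code space is therefore a proper subspace of $V$; but it must contain $V$ since $Q|_V=\mathrm{id}$, a contradiction. Either way $Q\in S$. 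This is the one genuinely technical step, and it is precisely where the standing hypothesis $-1\notin S$ (and the resulting $V\neq\{0\}$) is used, so it should be written out with care.

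Given the lemma the kernel computation is routine. Since $\tau$ surjects $\Gamma$ onto $F=\langle\mathcal F\rangle$, every $y\in F$ is $y=\tau(x)$ for some $x\in\Gamma$, and $y$ lies in the kernel of $\sigma_V|_F$ iff $\sigma_V(x)=\mathrm{id}_V$, iff $\sigma(x)\in S=\sigma(\ker\tau)$, iff $x\in(\ker\sigma)(\ker\tau)$ — the last equivalence uses only that $\sigma$ is a homomorphism, so that $\sigma(x)\in\sigma(\ker\tau)$ means $xk^{-1}\in\ker\sigma$ for some $k\in\ker\tau$. Applying $\tau$, using its multiplicativity and $\tau(\ker\tau)=\{1\}$, gives $\ker(\sigma_V|_F)=\tau\big((\ker\sigma)(\ker\tau)\big)=\tau(\ker\sigma)\,\tau(\ker\tau)=\tau(\ker\sigma)=G$. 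Hence $\sigma_V$ factors through $F/G$ and is injective there, which is the claim. As an immediate corollary one recovers the condition quoted earlier in the text: $\sigma_V$ is a faithful representation of $F$ itself exactly when $G=\{1\}$, i.e.\ when $\tau(\ker\sigma)=\{1\}$, i.e.\ when $\ker\sigma\subseteq\ker\tau$.
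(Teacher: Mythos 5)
Your proof is correct, and its overall skeleton matches the paper's: both arguments reduce the theorem to showing that the only Pauli operators acting as the identity on the code space $V$ are the elements of $S$ itself, i.e.\ that $\ker(\sigma_V)=\ker(\tau)\ker(\sigma)$ inside $\Gamma$, after which the identification of the kernel of the descended representation with $G=\tau(\ker\sigma)$ is routine bookkeeping (the paper phrases this as $F/G\simeq\Gamma/(\ker\tau\,\ker\sigma)$ up front; you compute the kernel of $\sigma_V$ viewed as a representation of $F$ and quotient afterwards --- same content). Where you genuinely diverge is in the proof of that central lemma. The paper expands the projector $\Pi_V=\frac{1}{|S|}\sum_{s\in S}s$, multiplies the hypothetical $p$ with $\Pi_V p=\Pi_V$ against elements $q\in S$, and invokes orthogonality of the trace inner product on Pauli operators (using $-1\notin S$ and hermiticity of $S$) to force $p=qs\in S$. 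You instead argue operationally: $p$ must commute with every element of $S$ (else it would map $V$ into a $-1$ eigenspace, impossible since $V\neq\{0\}$), must square to $+1$, and then either $p$ is a signed element of $S$ with the sign forced to $+1$, or $\langle S,p\rangle$ would be a strictly larger stabilizer group not containing $-1$, whose code space has half the dimension of $V$ yet contains $V$ --- a contradiction. Your route is somewhat more elementary (no trace inner product, just eigenspace dimension counting $\dim V=2^n/|S|$) and makes explicit exactly where the standing hypothesis $-1\notin S$ enters, at the cost of a short case analysis that the paper's one-line trace computation avoids; both uses of the hypothesis are essential and both proofs are complete. One small presentational point: when you assert that $\langle S,p\rangle$ still omits $-1$, it is worth stating explicitly that $-1=sp$ would make $p$ a scalar multiple of an element of $S$, which your case split has already excluded.
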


\begin{proof}
Note that $F/G \simeq \Gamma/(\ker(\tau)\ker(\sigma))$, and that $\ker(\tau)\ker(\sigma) \subseteq \ker(\sigma_V)$, so $\sigma_V$ is a rep. of $F/G$. In order to show that $\sigma_V$ is a faithful representation of $F/G$, we need only show that $\ker(\tau)\ker(\sigma) = \ker(\sigma_V)$.
Note that $$\ker(\sigma_V) = \ker(\Pi_V \circ \sigma)= \sigma^{-1}(\ker(\Pi_V))$$ so we need to show that
$$\sigma(\ker(\tau)\ker(\sigma))=S=\ker(\Pi_V),$$ i.e. there does not exist a Pauli operator $p \in \ker(\Pi_V)$ such that $p \not \in S$. 

Suppose such a Pauli $p$ exists, then $\Pi_V p = \Pi_V$. Expanding the projector $\Pi_V$:
$$ \frac{1}{|S|} \sum_{s \in S } s p =\frac{1}{|S|} \sum_{s' \in S } s'$$
$$\forall q \in S \;:\; \textrm{Tr}[\sum_{s \in S } q s p] =\textrm{Tr}[ \sum_{s' \in S } qs']$$
given that $-1 \not \in S$ and $S$ is hermitian, we can use the orthogonality of the trace inner product on Pauli operators
$$\forall q \in S \;:\; \sum_{s \in S } \textrm{Tr}[ q s p]=1 $$
which implies $\exists q,s \in S$ s.t. $qs=p$
and so $p\in S$, which is a contradiction.

\end{proof}

\begin{corollary}
 $\sigma_V$ is a faithful representation of $F$ iff $G=\{1\}$, i.e. $\ker(\sigma)\subseteq \ker(\tau)$
\end{corollary}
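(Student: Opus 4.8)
The plan is to derive this corollary directly from the preceding theorem by comparing kernels. First I would recall the setup from the paragraphs before the theorem: $\sigma_V$ is a representation of $\Gamma$ with $\ker(\tau)\subseteq\ker(\sigma_V)$, so by the descended-representation construction it is equally a representation of $\Gamma/\ker(\tau)\simeq F$, and also of $\Gamma/(\ker(\tau)\ker(\sigma))\simeq F/G$. The theorem asserts the latter is faithful, which, unwound, is exactly the statement that $\ker(\sigma_V)=\ker(\tau)\ker(\sigma)$ as subgroups of $\Gamma$. I would also note once and for all that $G:=\tau(\ker(\sigma))=\{1\}$ is equivalent to $\ker(\sigma)\subseteq\ker(\tau)$, which is immediate from the definition of $G$ together with $F\simeq\Gamma/\ker(\tau)$; this handles the ``i.e.'' clause in the statement.

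For the backward direction, assume $\ker(\sigma)\subseteq\ker(\tau)$. Then $G=\{1\}$ and $\ker(\tau)\ker(\sigma)=\ker(\tau)$, so $F/G=F$ canonically, and the theorem gives at once that $\sigma_V$ is a faithful representation of $F$.

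For the forward direction, assume $\sigma_V$ is a faithful representation of $F\simeq\Gamma/\ker(\tau)$, i.e. $\ker(\sigma_V)=\ker(\tau)$. Comparing with the theorem's identity $\ker(\sigma_V)=\ker(\tau)\ker(\sigma)$ yields $\ker(\tau)\ker(\sigma)=\ker(\tau)$, hence $\ker(\sigma)\subseteq\ker(\tau)$, i.e. $G=\{1\}$.

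I do not expect a genuine obstacle here — the corollary is essentially a one-line comparison of kernels once the theorem is available. The only point requiring a little care is bookkeeping: keeping straight that the single map $\sigma_V$ is being regarded simultaneously as a representation of $F$ (descending through $\ker(\tau)$) and of $F/G$ (descending through $\ker(\tau)\ker(\sigma)$), and that ``faithful on $F$'' therefore means precisely $\ker(\sigma_V)=\ker(\tau)$ while the theorem supplies $\ker(\sigma_V)=\ker(\tau)\ker(\sigma)$. With both kernel identities in hand the equivalence is forced.
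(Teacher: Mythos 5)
Your proposal is correct and matches the paper's intent: the paper states the corollary without proof precisely because it is the immediate kernel comparison you spell out, using the identity $\ker(\sigma_V)=\ker(\tau)\ker(\sigma)$ established in the theorem's proof together with the observation that faithfulness on $F\simeq\Gamma/\ker(\tau)$ means $\ker(\sigma_V)=\ker(\tau)$, and that $G=\tau(\ker(\sigma))=\{1\}$ is equivalent to $\ker(\sigma)\subseteq\ker(\tau)$.
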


\section{Proof of equations (\ref{ker_sigma}) and (\ref{ker_tau})} \label{proof_ker_sigma_tau}

\begin{theorem}
$\ker(\sigma) = \{\sigma(\Gamma(b))^* \Gamma(b) \vert b \in \ker(\hat{\sigma}) \}$
\end{theorem}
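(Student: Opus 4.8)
The plan is to use that, although the matrix $\hat\sigma$ records only a ``projective'' picture, the map $\sigma$ is a genuine group homomorphism on $\Gamma$, so the scalar part and the binary part of an element decouple cleanly. First I would recall from the construction of $\Gamma$ that every element is of the form $c\,\Gamma(b)$ with $c\in\{\pm1,\pm i\}$ and $b\in\mathbb{F}_2^{|\mathcal{F}|}$, and that $\sigma(c\,\Gamma(b))=c\,\sigma(\Gamma(b))$ since $\sigma$ acts as the identity on the central subgroup $\{\pm1,\pm i\}$.

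The key preliminary step is to show that $\sigma(\Gamma(b))=\mu(b)\,P(\hat\sigma b)$ for some phase $\mu(b)\in\{\pm1,\pm i\}$, where $\hat\sigma b$ is the matrix--vector product over $\mathbb{F}_2$ (equivalently $\hat\sigma b=\sum_j b_j\hat\sigma_j$). This comes from expanding $\sigma(\Gamma(b))=\prod_j \sigma(f_j)^{b_j}=\prod_j\big(\delta_\sigma(f_j)P(\hat\sigma_j)\big)^{b_j}$ and repeatedly applying the fact, recorded in Section \ref{sec:binary}, that $P(u)P(v)=\omega\,P(u+v)$ for some phase $\omega$; all resulting phases together with the $\delta_\sigma(f_j)$ factors are absorbed into $\mu(b)$. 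The fact I will then lean on is that $P(a)$ is a scalar (lies in $\{\pm1,\pm i\}$) if and only if $a=0$ in $\mathbb{F}_2^{2n}$; in particular $\sigma(\Gamma(b))\in\{\pm1,\pm i\}$ exactly when $b\in\ker(\hat\sigma)$, which is the parenthetical remark accompanying the statement.

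Granting this, both inclusions are short. For $\supseteq$: if $b\in\ker(\hat\sigma)$ then $\sigma(\Gamma(b))=\mu(b)$ is a scalar, so $\sigma(\Gamma(b))^*=\mu(b)^{-1}$ and $\sigma\big(\sigma(\Gamma(b))^*\Gamma(b)\big)=\mu(b)^{-1}\mu(b)=1$, giving $\sigma(\Gamma(b))^*\Gamma(b)\in\ker(\sigma)$. For $\subseteq$: let $g\in\ker(\sigma)$ and write $g=c\,\Gamma(b)$; then $1=\sigma(g)=c\,\mu(b)\,P(\hat\sigma b)$, so $P(\hat\sigma b)$ equals the scalar $(c\mu(b))^{-1}$, forcing $\hat\sigma b=0$, i.e. $b\in\ker(\hat\sigma)$; the leftover scalar identity $c\mu(b)=1$ then gives $c=\mu(b)^{-1}=\sigma(\Gamma(b))^*$, so $g=\sigma(\Gamma(b))^*\Gamma(b)$ lies in the right-hand set. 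This establishes the claimed equality. The identity (\ref{ker_tau}) for $\ker(\tau)$ is proved by the identical argument with $(\tau,\hat\tau,\mathcal{M})$ replacing $(\sigma,\hat\sigma,\mathcal{P})$, and the translationally invariant versions hold verbatim once $\ker$ is read as the kernel of the associated Laurent-polynomial module map.

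The main obstacle is the bookkeeping in the key preliminary step: I must ensure the phase $\mu(b)$ depends only on $b$ as an element of $\mathbb{F}_2^{|\mathcal{F}|}$ and not on the particular ordering of the word $\prod_j f_j^{b_j}$. This is precisely guaranteed by the fact that $\sigma$ already respects the relations (\ref{commutation}) and (\ref{inverse}) defining $\Gamma$, so any two orderings of the word give the same element of $\mathcal{P}$ and hence the same $\mu(b)$. Once that is pinned down, everything else is a one-line manipulation of unit-modulus scalars.
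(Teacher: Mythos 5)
Your proof is correct, and it reaches the result by a genuinely more direct route than the paper. The paper works structurally: it treats $C=\{\pm1,\pm i\}$ as a central subgroup, shows the square $\hat{\sigma}\circ\phi=\psi\circ\sigma$ commutes for the quotient maps $\phi:\Gamma\to\Gamma/C$, $\psi:\mathcal{P}\to\mathcal{P}/C$, deduces the abstract isomorphism $\ker(\sigma)\simeq\ker(\hat{\sigma})$, and then closes with a counting argument: the displayed set is contained in $\ker(\sigma)$, its elements are pairwise distinct, and since both sets have cardinality $|\ker(\hat{\sigma})|$ they must coincide. You instead make the relation between $\sigma$ and $\hat{\sigma}$ explicit at the level of elements, writing $\sigma(\Gamma(b))=\mu(b)\,P(\hat{\sigma}b)$ and using that $P(a)$ is scalar iff $a=0$, which lets you prove both inclusions directly; in particular your $\subseteq$ direction (decompose $g=c\,\Gamma(b)$, force $\hat{\sigma}b=0$, then pin $c=\sigma(\Gamma(b))^*$ from the leftover scalar identity) replaces the paper's counting step entirely. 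Your worry about $\mu(b)$ depending on the word ordering is not really an issue, since $\Gamma(b)$ is a fixed group element and $\sigma$ is a homomorphism on $\Gamma$, exactly as you note. What each approach buys: the paper's version isolates the structural fact $\ker(\sigma)\simeq\ker(\hat{\sigma})$ (which it reuses conceptually elsewhere), while yours is more elementary, avoids quotient bookkeeping, and — because it never invokes finiteness — extends more cleanly to the translationally invariant setting, where a cardinality argument would not apply as stated.
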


\begin{proof}
The cyclic group $C:=\{\pm 1, \pm i\}$ is a normal subgroup of both $\Gamma$ and $P$. The quotient groups $\Gamma/C$ and $P/C$ are isomorphic to the group of binary vectors under addition mod 2: $\mathbb{F}^{|\mathcal{F}|}$ and $\mathbb{F}^{2n}$ respectively. Thus the matrix $\hat{\sigma}$ is a group homorphism from $\Gamma/C$ to $P/C$. Define the quotient maps $\phi: \Gamma \rightarrow \Gamma/C$ and $\psi: P \rightarrow P/C$. 

First we note that these maps commute in the way one expects, ie:
$$ \hat{\sigma} \circ \phi = \psi \circ \sigma .$$
This can be seen by noting that by construction
$$\forall f \in \mathcal{F}\;:\; \hat{\sigma} \circ \phi(f)= \psi \circ \sigma(f)$$
and, since $\sigma$, $\hat{\sigma}$, $\phi$ and $\psi$ are all group homomorphisms, for any other element $\delta \prod_{i} f^{b_i} \in \Gamma$, with $\delta \in C$:
$$ \hat{\sigma} \circ \phi( \delta \prod_{i} f^{b_i} )=\prod_{i} \hat{\sigma} \circ \phi( f^{b_i}) = \prod_{i} \psi \circ \sigma( f^{b_i})= \psi \circ \sigma(\delta \prod_{i} f^{b_i})$$

Next we argue that $\ker{\sigma} \simeq \ker{\hat{\sigma}}$. This can be seen by noting that
$$ \ker(\psi \circ \sigma) = \sigma^{-1}(C)= C \ker(\sigma)$$
and that
$$ \ker(\hat{\sigma}) \circ \phi)= \phi^{-1}(\ker(\hat{\sigma}))$$
so
$$\ker(\hat{\sigma})= \phi( C \ker(\sigma))$$ 
however $\ker(\sigma) \cap C=1$ and so $\ker{\sigma} \simeq \ker{\hat{\sigma}}.$

Finally since
$$ \forall x \in \ker(\hat{\sigma}) \;:\; \sigma(\Gamma(x))^* \Gamma(x) \in \ker(\sigma)$$
and 
$$ \forall x \neq y \in \ker(\hat{\sigma}) \;:\; \sigma(\Gamma(x))^* \Gamma(x)  \neq \sigma(\Gamma(y))^* \Gamma(y)$$
it follows by a counting argument that
$$\ker(\sigma) = \{\sigma(\Gamma(b))^* \Gamma(b) \vert b \in \ker(\hat{\sigma}) \}$$
\end{proof}

\begin{theorem}
$\ker(\tau) = \{\tau(\Gamma(b))^* \Gamma(b) \vert b \in \ker(\hat{\tau}) \}$
\end{theorem}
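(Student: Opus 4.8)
The plan is to prove the statement for $\ker(\tau)$ by mirroring the proof just given for $\ker(\sigma)$, since $\tau: \Gamma \to F$ and $\hat{\tau}$ play exactly the same structural roles that $\sigma: \Gamma \to \mathcal{P}$ and $\hat{\sigma}$ played there. The cyclic group $C := \{\pm 1, \pm i\}$ is a normal subgroup of both $\Gamma$ and $F$ (the latter because $F \leq \mathcal{M}$ and $C \leq \mathcal{M}$ is central), with $\Gamma/C \simeq \mathbb{F}_2^{|\mathcal{F}|}$ and $F/C$ a quotient of $\mathbb{F}_2^{2m}$; the matrix $\hat{\tau}$ is then the induced group homomorphism $\Gamma/C \to F/C$. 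Writing $\phi: \Gamma \to \Gamma/C$ and $\psi_F: F \to F/C$ for the quotient maps, I would first establish the commutation relation $\hat{\tau} \circ \phi = \psi_F \circ \tau$, exactly as before: it holds on generators $f \in \mathcal{F}$ by construction, and extends to all of $\Gamma$ because every map involved is a group homomorphism.

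Next I would argue $\ker(\tau) \simeq \ker(\hat{\tau})$. As in the $\sigma$ proof, $\ker(\psi_F \circ \tau) = \tau^{-1}(C) = C\ker(\tau)$ and $\ker(\hat{\tau} \circ \phi) = \phi^{-1}(\ker(\hat{\tau}))$, so the commutation relation gives $\ker(\hat{\tau}) = \phi(C\ker(\tau))$; since $\ker(\tau) \cap C = \{1\}$ (a nontrivial complex phase $\delta \in C$ is not in $\ker(\tau)$ because $\tau$ acts trivially on $C$), we get $\ker(\tau) \simeq \ker(\hat{\tau})$. Finally, for each $b \in \ker(\hat{\tau})$ the element $\tau(\Gamma(b))^* \Gamma(b)$ lies in $\ker(\tau)$ — here one needs that $\tau(\Gamma(b)) \in C$ when $b \in \ker(\hat{\tau})$, which follows because $\phi(\Gamma(b)) = b \in \ker(\hat{\tau})$ means $\psi_F(\tau(\Gamma(b))) = \hat{\tau}(b) = 0$, i.e. $\tau(\Gamma(b)) \in C$, and $C$ is hermitian-closed so the starred version is its inverse. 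Distinct $b \neq b'$ give distinct elements since they differ already at the level of $\Gamma/C$. A counting argument then upgrades the injection to the claimed equality.

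The one point requiring a little care — and the place where the $\tau$ case is genuinely different from the $\sigma$ case — is that $F$ need not be a subgroup of $\mathcal{P}$ on $n$ qubits but a subgroup of the Majorana monomial group $\mathcal{M}$, and moreover $\hat{\tau}$ in general has entries in the Laurent polynomial ring rather than a plain binary field when translational invariance is in play. I would note that the structural argument is insensitive to this: all that is used is that $C$ is a central normal subgroup with abelian quotient, that $\hat{\tau}$ records the projective (phase-forgetting) part of $\tau$, and that these quotients are groups under addition of the relevant vectors/polynomials. The genuine subtlety — that $\hat\tau^\dagger \Lambda_F \hat\tau$ rather than a naive inner product governs commutation for odd-parity monomials — does not enter here, since the statement concerns only the kernel of $\hat\tau$ as a module map, not the symplectic form. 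So I expect no real obstacle; the proof is a faithful transcription of the preceding theorem with $\sigma \rightsquigarrow \tau$, $\hat\sigma \rightsquigarrow \hat\tau$, $\mathcal{P} \rightsquigarrow F$, and $\psi \rightsquigarrow \psi_F$, the only thing to double-check being that $C \cap \ker(\tau) = \{1\}$, which holds because $\tau$ is the identity on $C$ by the construction of $\Gamma$.
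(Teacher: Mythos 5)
Your proposal is correct and is precisely the paper's own argument: the paper simply states that the proof proceeds exactly as in the preceding theorem for $\ker(\sigma)$, replacing $\sigma$ with $\tau$ and $\mathcal{P}$ with $F$, which is what you carry out (including the key checks that $C$ is central in $F$, that $\hat{\tau}\circ\phi=\psi_F\circ\tau$, and that $\ker(\tau)\cap C=\{1\}$). The only cosmetic slip is calling $F/C$ a \emph{quotient} of $\mathbb{F}_2^{2m}$ when it is a subgroup of $\mathcal{M}/C\simeq\mathbb{F}_2^{2m}$; this does not affect the argument.
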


\begin{proof}
The proof of this theorem proceeds in precisely the same way as the previous theorem, replacing $\sigma$ with $\tau$ and $P$ with $F$.
\end{proof}

\section{Removing $-1$ from the Stabilizer group $S$}\label{sec:stab_group_minus}

Given a valid mapping $\sigma(f_i)$, we are always free to define a new valid mapping $\sigma(f_i)\rightarrow \delta_i \sigma(f_i)\;,\; \delta_i =\pm 1$. Here we explain how if the stabilizer group $S$ contains $-1$, one can always find a choice of signs using Gaussian elimination such that $S$ does not contain $-1$.

\begin{theorem}
Given a mapping $\sigma$, there always exists a choice of sign $\{\delta_i\}$ such that $-1 \not \in S$ 
\end{theorem}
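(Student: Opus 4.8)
The plan is to understand precisely how re-choosing the signs $\delta_i$ modifies the stabilizer group $S$, and to see that the obstruction ``$-1\in S$'' amounts to a single $\mathbb{F}_2$-linear condition that can always be met. First I would record two structural facts that do not depend on the choice of signs. Every element of $\ker(\tau)$ is hermitian and squares to the identity: writing it via equation (\ref{ker_tau}) as $g_b:=\tau(\Gamma(b))^{*}\Gamma(b)$ with $b\in\ker(\hat\tau)$, the relations (\ref{inverse}) force $\Gamma(b)^{\dagger}=\varepsilon(b)\Gamma(b)$ and $\Gamma(b)^{2}=\varepsilon(b)$ for a common sign $\varepsilon(b)=\pm1$, so $g_b^{2}$ is a real scalar; since $g_b\in\ker(\tau)$ and $\tau$ fixes scalars, that scalar must be $+1$, whence $g_b^{\dagger}=g_b$. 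Moreover $b\mapsto g_b$ is a group isomorphism $\ker(\hat\tau)\simeq\ker(\tau)$ and $\sigma$ restricted to $\ker(\tau)$ is a homomorphism onto $S$.

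Next I would pin down where a $-1$ can enter $S$. Since $\sigma(g_b)$ is a scalar exactly when $\hat\sigma b=0$, the only elements of $\ker(\tau)$ that can map to $-1$ are the $g_v$ with $v$ in the subspace $K:=\ker(\hat\tau)\cap\ker(\hat\sigma)\subseteq\mathbb{F}_2^{J}$. For $v\in K$, by the above and the fact that $\sigma$ is a $\dagger$-preserving representation, $\sigma(g_v)$ is hermitian and scalar, hence $\sigma(g_v)=(-1)^{\phi(v)}$ for a well-defined $\phi(v)\in\mathbb{F}_2$; and since $v\mapsto g_v$ and $\sigma|_{\ker(\tau)}$ are homomorphisms, $\phi$ is $\mathbb{F}_2$-linear on $K$. (Equivalently, a basis $b^{(1)},\dots,b^{(t)}$ of $K$ labels the scalar generators of $S$, and $-1\in S$ iff some product of them equals $-1$, which is governed precisely by $\phi$.)

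Finally I would compute the effect of the sign flip and solve. Taking $\delta_i=(-1)^{\mu_i}$ with $\mu\in\mathbb{F}_2^{J}$, centrality of the $\delta_i$ gives $\sigma'(\Gamma(v))=(-1)^{\mu\cdot v}\sigma(\Gamma(v))$, hence $\sigma'(g_v)=(-1)^{\mu\cdot v+\phi(v)}$ for all $v\in K$; note that $\hat\sigma$, and therefore $K$, is unchanged by the flip, and $\sigma'$ is still a valid mapping. Thus $-1\notin S'$ iff $\mu\cdot v=\phi(v)$ for all $v\in K$. Since $\phi$ is a linear functional on the subspace $K$ it extends to one on all of $\mathbb{F}_2^{J}$, so such a $\mu$ exists; concretely one solves the necessarily consistent system $\mu\cdot b^{(k)}=\phi(b^{(k)})$ ($k=1,\dots,t$) by Gaussian elimination. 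In the translationally invariant case $\ker(\hat\tau)$, $\ker(\hat\sigma)$ and their intersection are obtained via syzygies (Appendix \ref{sec:kernel}) and the same linear solve is done over the polynomial ring.

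The step I expect to be the real obstacle is the claim within the first paragraph that $\sigma(g_v)\in\{\pm1\}$, not $\{\pm i\}$, for $v\in K$: were some $g_v$ to map to $\pm i$, then $-1=(\pm i)^2$ would sit in $S$ irrespective of the signs and the theorem would be false, so excluding this is essential, and it is precisely what the hermiticity of $\ker(\tau)$ delivers. Everything after that point is routine linear algebra over $\mathbb{F}_2$.
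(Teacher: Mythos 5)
Your proof is correct and follows essentially the same route as the paper's: both reduce removing $-1$ to a consistent $\mathbb{F}_2$-linear condition on the sign-flip vector, determined by the scalar-valued elements of $S$ (equivalently by $\ker(\hat{\tau})\cap\ker(\hat{\sigma})$), and solve it -- your extension-of-a-linear-functional step is the same linear algebra as the paper's full-row-rank Gaussian elimination on $A\vec{\delta}=\vec{b}$. Your explicit verification that these scalar images are $\pm 1$ rather than $\pm i$ (via $g_v^2=1$) is left implicit in the paper's proof, and is a worthwhile point since the theorem would otherwise fail.
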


\begin{proof}
Consider the subgroup of $S$ of all elements $\{e\} \subset \ker(\tau)$ such that $\sigma(e)=\pm1$. Consider a generating set of this subgroup $\{a_k\}$. Let $\vec{a}_k$ be a binary vector indicating which elements in $\mathcal{F}$ are included in $a_k$. Let $A$ be the binary matrix whose rows are the vectors $\vec{a}_k$. Let $\vec{b}$ be a binary vector indicating the signature of $\sigma$ on $a_k$, i.e. it has a $0$ at index $k$ if $\sigma(a_k)=1$ and a $1$ if $\sigma(a_k)=-1$. 

There exists a choice of signs $\{\delta_i\}$ such that $S$ does not contain $-1$ iff there exists a binary vector $\vec{\delta}$ such that:
$$ A \vec{\delta} = \vec{b} $$
It is not hard to see this by noting that if we can find such a $\vec{\delta}$, then applying a corresponding change of signs $\delta_i = (-1)^{\vec{\delta}_i}$ will map $\sigma(a_i) \rightarrow  (-1)^{\vec{\delta}\cdot \vec{a_i}} \sigma(a_i) = (-1)^{b_i} (-1)^{b_i}=1$

Finally we note that the rows of $A$ are linearly independent, since the elements $a_i$ are generators, so $A$ can always be put into reduced row echelon form:
$$\tilde{A} = \left[\mathbb{I} | * \right]$$
and so $ A \vec{\delta} = \vec{b} $ always admits a solution.
\end{proof}

\section{Computing the Kernel of Polynomial Matrices}\label{sec:kernel}

The set of Laurent polynomials over $\mathbb{F}_2$ forms a {\it ring} $R$, the set of vectors over this ring is formally referred to as an {\it $R$-module} -- denoted $R^m$ for $m$ dimensional vectors -- and the ``matrices'' acting on this module are called {\it $R$-module homomorphisms}. The ``kernel'' of a module homomorphism is called a {\it syzygy module}. 
\begin{definition}[Syzygy Module]
Let $\{ f_i \}_1^t$ be elements of an $R$-module. The set of all tuples $(a_1,...,a_t)$, where $a_i \in R$,  such that $\sum_i a_i f_i =0$ is a syzygy module. 
\end{definition}
Here the elements $f_i$ correspond to the columns of our polynomial matrix. The syzygy module is a {\it submodule } of $R^m$. Importantly, any syzygy module is generated by a finite set of elements if $R$ is finitely generated (which in our case it is). 

We've deliberately avoided using the language of modules throughout the body of the text to keep things accessible. However when discussing computation of the kernel of the polynomial matrices, it is important to engage with the formal distinction between matrices acting on vector spaces and module homomorphisms acting on modules. In particular, unlike the binary matrix case where the kernel can be computed by simple Gaussian elimination, the computation of the generating set of a syzygy module requires the computation of a Gr\"obner basis using Buchberger's algorithm, followed by some additional steps. The reason for this is that although one may apply Gaussian elimination to a polynomial matrix and compute elements of the syzygy module, these elements may not constitute a complete generating set. Here we briefly outline the procedure for computing a generating set of the syzygy module associated with a polynomial matrix. For more details we refer the reader to Chapter 5 of \cite{cox2005using} and Chapter 2 of \cite{cox2008ideals}.

To make things easier we begin by considering the subset of polynomials whose monomials have non-negative degree (which we call $\mathbb{F}^+_2[x_1,..,x_D]^{n}$). This allows us to more easily place an ordering convention on the monomials of the polynomial. Any matrix over Laurent polynomials may be translated so that it only contains elements of this type. It is not so difficult to see that if we find a set of generators for the syzygy module of the translated polynomial matrix, then under an inverse transformation this must be contained in the syzygy module of the original matrix. However it is not obvious that this captures a minimal generating set when we take the full ring of Laurent polynomials, since we have access to more operations. We will return to this.

The trick to computing a syzygy module of a module homomorphism (kernel of a polynomial matrix), is to treat the basis elements of the vector $e_i$ as variables in a polynomial. So for example the vector $v = (1+x, 1+y)$ may be treated as a polynomial: $$v(x,y,e_1,e_2) = (1+x)e_1 + (1+y)e_2= e_1 + x e_1 +e_2 + ye_2 .$$ So we transform $n$ dimensional vectors over polynomials in $\mathbb{F}^+_2[x_1,..,x_D]$ into polynomials in $\mathbb{F}^+_2[x_1,..,x_D, e_1,...,e_n]$. To compute the syzygy module of the original module homomorphism -- specified by a collection of polynomials $\{f_i \}$ in $\mathbb{F}^+_2[x_1,..,x_D, e_1,...,e_n]$ corresponding to the column vectors of our polynomial matrix -- we are interested in finding solutions $p_i$ to the equation:
\begin{equation}\label{eq:syzygy_solution}
 \sum_i p_i f_i = 0 \;,\; p_i \in \mathbb{F}^+_2(x_1,..,x_D).
 \end{equation}
 The set of such solutions constitutes a syzygy module of the original module homomorphism. We may leverage existing methods in algebraic geometry to find solutions to this equation. Standard methods using Gr\"obner bases typically have both $p_i$ and $f_i$ belonging to the same polynomial rings. However in our case there is an additional restriction on $p_i$. Thus we need slightly modified definitions for Gr\"obner bases, and for Buchberger's algorithm, which is used to compute Gr\"obner bases.
 
Let $R$ denote a polynomial ring, for example $\mathbb{F}^+_2[x_1,..,x_D]$. Let $R^m$ denote the $m$ dimensional module over $R$ expressed in terms of basis elements $\{e_i\}_1^m$, i.e. $f \in R^m$ has the form $f= \sum_j^m f_j e_j$ where $f_j \in R$. If $a$ and $b$ are monomials in $R$, then $a' = a e_i$ and $b' = b e_j$ are monomials in $R^m$, and:
\begin{itemize}
\item  $a'$ divides $b'$ iff $i=j$ and $a$ divides $b$. 
\item  The greatest common divisor of $a'$ and $b'$: $\textrm{GCD}(a',b')=\textrm{GCD}(a,b)e_i$.
\item  The least common multiple of $a'$ and $b'$: $\textrm{LCM}(a',b')= \textrm{LCM}(a,b)e_i$ if $i=j$ otherwise $\textrm{LCM}(a',b')=0$.
\end{itemize}
 Let $\textrm{LT}(g)$ denote the leading term of a polynomial $g \in R^m$ according to some predetermined ordering convention on the monomials, and $\textrm{LM}(g)$ the leading monomial -- for our purposes they are identical, but they are distinct for generic fields $R$.

\begin{definition}[Gr\"obner Basis of a submodule]
A submodule $M \subseteq R^m$ is a monomial submodule if it can be generated by monomials in $R^m$.
Given a submodule $M \subseteq R^m$, denote by $\langle \textrm{LT}(M) \rangle$ the monomial submodule generated by the leading terms of all $f \in M$. 
A finite collection $G =\{g_i\} \subset M$ is called a Gr\"obner basis of $M$ if $\langle LT(M) \rangle = \langle \{\textrm{LT}(g_i) \} \rangle$. 
\end{definition}
The Gr\"obner basis generates $M$: $M = \langle G \rangle$. 
There exists an algorithm for computing a Gr\"obner basis of ideals, called Buchberger's algorithm. Buchberger's algorithm can be used in exactly the same way to compute Gr\"obner Bases of submodules, however one needs to use the revised notions of divisibility, GCD and LCM given above. 

Equipped with the Gr\"obner Basis $G=\{g_i\}_1^s$ of a submodule $M=\langle \{f_i\}_1^t \rangle $, computing the syzygy module requires a few final steps. 
\begin{definition}[S-polynomial]
Given $f, g \in R^m$, let $L=\textrm{LCM}(\textrm{LM}(f),\textrm{LM}(g))$ the S-polynomial is defined as:
$$ S(f,g) :=  \frac{L}{\textrm{LT}(f)} f - \frac{L}{\textrm{LT}(g)}g$$
\end{definition}
\begin{theorem}[Buchberger's Criterion]
A set $G=\{g_i\} \subset R^m$ is a Gr\"obner basis if and only if there exists polynomials $a_{ijk} \in R$ such that $S(g_i,g_j)=\sum_k a_{ijk} g_k$ and $\textrm{LT}(a_{ijk}g_k)\leq \textrm{LT}(S(g_i,g_j))$ for all $i, j, k$.
\end{theorem}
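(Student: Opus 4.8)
The plan is to prove the two implications separately, working throughout with the module-adapted notions of leading term, divisibility, $\textrm{GCD}$ and $\textrm{LCM}$ set up above, writing $\textrm{multideg}(g)$ for the exponent-and-component data of $\textrm{LM}(g)$ compared in the fixed monomial order, and using that over $R=\mathbb{F}_2[x_1,\dots,x_D]$ all nonzero coefficients equal $1$, so term arithmetic reduces to monomial arithmetic. I will repeatedly use the elementary observation that $S(f,g)$ always has $\textrm{multideg}(S(f,g))$ strictly below $L:=\textrm{LCM}(\textrm{LM}(f),\textrm{LM}(g))$, because the leading terms $\tfrac{L}{\textrm{LT}(f)}\textrm{LT}(f)=L=\tfrac{L}{\textrm{LT}(g)}\textrm{LT}(g)$ cancel in the definition of $S(f,g)$; and that a monomial order is compatible with divisibility, so $a\mid b$ implies $a\le b$ and multiplying a strict inequality of monomials by a monomial keeps it strict.

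\emph{Forward direction.} Suppose $G$ is a Gr\"obner basis of $M$. Each $S(g_i,g_j)$ is an $R$-combination of $g_i,g_j\in M$ and hence lies in $M$. Running the division algorithm for $R^m$-modules to divide $S(g_i,g_j)$ by $G$ produces, by construction, $S(g_i,g_j)=\sum_k a_{ijk}g_k+r$ with $\textrm{LT}(a_{ijk}g_k)\le\textrm{LT}(S(g_i,g_j))$ for all $k$ and with $r$ either zero or with no term in $\langle\textrm{LT}(G)\rangle$. But $r=S(g_i,g_j)-\sum_k a_{ijk}g_k\in M$, and $\langle\textrm{LT}(M)\rangle=\langle\textrm{LT}(G)\rangle$ since $G$ is a Gr\"obner basis, so $r\neq0$ would force $\textrm{LT}(r)\in\langle\textrm{LT}(G)\rangle$, a contradiction; hence $r=0$ and $\sum_k a_{ijk}g_k$ is the desired representation.

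\emph{Converse.} Assume each $S(g_i,g_j)$ admits such a representation. Since $\langle\textrm{LT}(G)\rangle\subseteq\langle\textrm{LT}(M)\rangle$ is immediate, it suffices to prove $\textrm{LT}(f)\in\langle\textrm{LT}(G)\rangle$ for every nonzero $f\in M$. Fix such an $f$, and among all its representations $f=\sum_i h_i g_i$ pick one minimizing $\delta:=\max_i\textrm{multideg}(h_i g_i)$ (possible since the monomial order well-orders the monomials of $R^m$). Certainly $\textrm{multideg}(f)\le\delta$, and I claim equality. If not, the component-$\delta$ parts of the $h_i g_i$ must cancel: putting $S:=\{i:\textrm{multideg}(h_i g_i)=\delta\}$, the sum $\sum_{i\in S}\textrm{LT}(h_i)g_i$ has multidegree $<\delta$. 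All leading monomials $\textrm{LM}(g_i)$, $i\in S$, divide the monomial $\delta$ and hence lie in its single component, so for consecutive indices in an enumeration $S=\{i_1,\dots,i_s\}$ the monomial $L_k:=\textrm{LCM}(\textrm{LM}(g_{i_k}),\textrm{LM}(g_{i_{k+1}}))$ is nonzero, divides $\delta$, and satisfies $\textrm{LT}(h_{i_k})g_{i_k}-\textrm{LT}(h_{i_{k+1}})g_{i_{k+1}}=\tfrac{\delta}{L_k}S(g_{i_k},g_{i_{k+1}})$, since $\textrm{LT}(h_{i_k})=\delta/\textrm{LT}(g_{i_k})$. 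A standard Abel-summation (telescoping) rearrangement then expresses $\sum_{i\in S}\textrm{LT}(h_i)g_i$ as an $R$-combination of the S-polynomials $\tfrac{\delta}{L_k}S(g_{i_k},g_{i_{k+1}})$, each of multidegree $<\delta$. Substituting the hypothesized representations $S(g_{i_k},g_{i_{k+1}})=\sum_j a_{jk}g_j$ with $\textrm{multideg}(a_{jk}g_j)\le\textrm{multideg}(S(g_{i_k},g_{i_{k+1}}))<L_k$, so that $\textrm{multideg}(\tfrac{\delta}{L_k}a_{jk}g_j)<\delta$, and plugging everything back into $f=\sum_i h_i g_i$, yields a new representation $f=\sum_i\tilde h_i g_i$ with $\max_i\textrm{multideg}(\tilde h_i g_i)<\delta$, contradicting minimality. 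Hence $\textrm{multideg}(f)=\delta$, so $\textrm{LT}(f)=\sum_{i\in S}\textrm{LT}(h_i g_i)$ is a nonzero multiple of the monomial $\delta$, which is divisible by $\textrm{LM}(g_i)$ for every $i\in S$; therefore $\textrm{LT}(f)\in\langle\textrm{LT}(G)\rangle$, and $G$ is a Gr\"obner basis of $M$.

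\emph{Main obstacle.} The substantive content is the converse, and within it the telescoping step. One has to organize the Abel summation over the index set $S$, whose leading monomials are pairwise incomparable in general; verify that consecutive $\textrm{LCM}$s $L_k$ are genuinely nonzero — precisely where the module structure matters, since monomials in distinct components have $\textrm{LCM}$ equal to $0$, so one must use that every member of $S$ contributes in the single component of $\delta$; and keep careful multidegree bookkeeping so that every term produced by substituting the S-polynomial representations is certified to have multidegree strictly below $\delta$. The remaining ingredients — the division algorithm for modules, the well-ordering furnishing the minimizing representation, and the cancellation observation for S-polynomials — are routine once the module dictionary recalled above is in place.
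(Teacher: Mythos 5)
The paper does not actually prove this theorem---it is quoted as a standard result with a pointer to Cox--Little--O'Shea---and your argument is precisely the standard proof found there, correctly adapted to the module setting with the paper's modified notions of divisibility and $\textrm{LCM}$ (division-algorithm remainder for the forward direction; minimal representation plus the S-polynomial cancellation/telescoping lemma for the converse, with the same-component observation guaranteeing the consecutive $\textrm{LCM}$s are nonzero). I find no gap: the only step you gloss, the Abel-summation rearrangement over $\mathbb{F}_2$ and the accompanying multidegree bookkeeping, is routine and you explicitly flag the checks it requires.
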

Define
$$s_{ij}:= \frac{L_{ij}}{\textrm{LT}(g_i)} e_i - \frac{L_{ij}}{\textrm{LT}(g_j)}e_j - \sum_k a_{ijk} e_k$$
where $L_{ij}=\textrm{LCM}(\textrm{LM}(g_i),\textrm{LM}(g_j))$.
Let $\hat{F} = (f_1, ... , f_t)$ be an $m \times t$  polynomial matrix and $\hat{G}=(g_1,...,g_s)$ an $m \times s $ polynomial matrix. There exists a $t \times s$ polynomial matrix $\hat{B}$ and a $s \times t$ polynomial matrix $\hat{C}$ such that $\hat{G}=\hat{F}\hat{B}$ and $\hat{F}=\hat{G}\hat{C}$. Let $S_k$ be the columns of the matrix $I - BC$. 
\begin{proposition}[Syzygy Module]
$$\textrm{Syz}(f_1,...,f_t) = \langle B s_{ij}, S_k \rangle $$
\end{proposition}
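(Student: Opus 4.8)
The plan is to verify the standard "Schreyer-type" characterization of a syzygy module in terms of a Gröbner basis, adapted to the module setting with the revised notions of divisibility, $\mathrm{LCM}$ and $\mathrm{GCD}$ given above. The claim has two halves that must be combined: (i) the elements $B s_{ij}$ generate the syzygies \emph{among the Gröbner basis elements} $g_1,\dots,g_s$ pulled back to syzygies among $f_1,\dots,f_t$ via the change-of-basis matrix $B$; and (ii) the columns $S_k$ of $I - BC$ account for the "redundancy" arising because $\hat G = \hat F \hat B$ and $\hat F = \hat G \hat C$ do not make $B,C$ mutual inverses. First I would establish (i): by Buchberger's Criterion, every $S$-polynomial reduces as $S(g_i,g_j)=\sum_k a_{ijk}g_k$, so each $s_{ij}$ is by construction a syzygy of $(g_1,\dots,g_s)$, i.e. $\hat G s_{ij}=0$. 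The classical Schreyer theorem (see Chapter 5 of \cite{cox2005using}) states that, with respect to the induced module monomial order, the $s_{ij}$ in fact \emph{generate} $\mathrm{Syz}(g_1,\dots,g_s)$. Then $\hat F (B s_{ij}) = (\hat F \hat B) s_{ij} = \hat G s_{ij} = 0$, so every $B s_{ij}$ is a genuine syzygy of $(f_1,\dots,f_t)$.

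Next I would handle (ii) and the "$\supseteq$"/generation direction together. Let $p=(p_1,\dots,p_t)^{\mathsf T}$ be an arbitrary syzygy, $\hat F p = 0$. Write $p = (I - BC)p + BC p = \big(\sum_k ( (I-BC) )_{\cdot k}\, p_k\big) + B(Cp)$; the first summand is an $R$-combination of the columns $S_k$, so it suffices to show $B(Cp)\in\langle B s_{ij}\rangle$. Observe $\hat G (Cp) = \hat G \hat C p = \hat F p = 0$, so $Cp$ is a syzygy of $(g_1,\dots,g_s)$; by Schreyer, $Cp = \sum_{i<j} h_{ij} s_{ij}$ for suitable $h_{ij}\in R$, hence $B(Cp) = \sum_{i<j} h_{ij} (B s_{ij}) \in \langle B s_{ij}\rangle$. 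This proves $\mathrm{Syz}(f_1,\dots,f_t) \subseteq \langle B s_{ij}, S_k\rangle$. For the reverse inclusion I must check each generator is actually a syzygy: $\hat F(B s_{ij})=0$ was shown above, and $\hat F S_k$ is the $k$-th column of $\hat F (I - BC) = \hat F - (\hat F \hat B)\hat C = \hat F - \hat G \hat C = \hat F - \hat F = 0$. So $\langle B s_{ij}, S_k\rangle \subseteq \mathrm{Syz}(f_1,\dots,f_t)$, and combining the two inclusions gives equality.

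The main obstacle is the appeal to Schreyer's theorem in the module setting: one must be careful that the monomial order used to build the $s_{ij}$ is the \emph{Schreyer order} induced by the Gröbner basis $G$ on the free module $R^s$ (so that leading-term cancellation in $S$-polynomials is faithfully recorded), and that the revised divisibility/$\mathrm{LCM}$ conventions stated above are exactly the ones under which Buchberger's Criterion and the Schreyer lemma remain valid — in particular that $\mathrm{LCM}(a e_i, b e_j)=0$ for $i\neq j$ forces $s_{ij}=0$ in the "disjoint support" case, consistently with the fact that such pairs give no obstruction. A secondary subtlety is the passage from Laurent polynomials to $\mathbb{F}_2^+[x_1,\dots,x_D]$: one must note that translating the matrix columns by a common monomial is an invertible operation on each coordinate, so it preserves the syzygy module up to that same translation, and that localizing back to Laurent polynomials only multiplies generators by units $T_k$ and hence does not enlarge the generating set — this justifies reducing to the non-negative-degree case where Buchberger's algorithm and the above machinery literally apply. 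Once these foundational points are in place, the algebra of the two inclusions is routine matrix manipulation as sketched.
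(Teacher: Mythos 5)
Your proof is correct and matches the route the paper itself relies on: the paper states this proposition without an in-text proof, deferring to Chapter 5 of \cite{cox2005using}, and your argument --- checking $\hat{F}(Bs_{ij})=\hat{G}s_{ij}=0$ via Buchberger's criterion and $\hat{F}(I-BC)=\hat{F}-\hat{G}\hat{C}=0$, then decomposing an arbitrary syzygy $p$ as $(I-BC)p + B(Cp)$ with $\hat{G}(Cp)=\hat{F}p=0$ so that $Cp\in\textrm{Syz}(g_1,\dots,g_s)=\langle s_{ij}\rangle$ by the Schreyer-type theorem --- is exactly the standard argument from that reference, adapted correctly to the module conventions stated in the appendix. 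The one ingredient you invoke rather than prove (that the $s_{ij}$ generate the syzygies of the Gr\"obner basis itself) is likewise left to the cited references by the paper, so this does not constitute a gap relative to the paper's treatment.
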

Computing $B$, $C$ and $a_{ijk}$ can be done by application of a polynomial division algorithm which is a subroutine of Buchberger's algorithm.

The arguments outlined above apply to polynomials with non-negative degree. Here we argue that given a polynomial matrix $A$ with entries in a Laurent polynomial, the above methods should suffice to compute the kernel. First multiply the matrix $A$ by a translation $T_{\vec{\alpha}}=\prod_i x_i^{\alpha_i}$, where $\alpha_i$ is the smallest degree of $x_i$ in any monomial in $A$. The new matrix $A'=T_{\vec{\alpha}} A$ will only contain polynomials with non-negative degree. Suppose there exists a vector $f$ such that $Af = 0$. Let $T_{\vec{\beta}}$ be a translation where $\beta_i$ is the smallest degree of $x_i$ in any monomial in $f$ so that $f'=T_{\vec{\beta}}f$ also has non-negative degree. It is not hard to see that $f'$ is in the kernel of $A'$: $A'f'=T_{\vec{\alpha}+ \vec{\beta}} Af = 0$, and furthermore since $f'$ has non-negative degree it must be generated by the generators computed using the methods outlined above. Thus if we apply the above methods to compute the kernel of $A'$, $f'$ will be contained in that kernel. Thus everything in the kernel of $A$ can be found by computing the kernel of $A'$ using the methods outlined above, and then applying an appropriate translation.

\end{document}